\let\oldcite=\cite
\let\oldcitet=\citet
\renewcommand{\citet}[2][]{\textcolor{blue}{\oldcitet[#1]{#2}}}
\renewcommand{\cite}[2][]{\textcolor{blue}{\oldcite[#1]{#2}}}
\newlength\myHeight % creating lenghts
\newlength\myWidth
\newtheorem{theorem}{Theorem}
\newtheorem{lemma}[theorem]{Lemma}
\newtheorem{ProcedureDef}{Procedure}
\newtheorem*{goal*}{Goal}
\newcommand{\indep}{\perp \!\!\! \perp}
\newcommand{\scriptS}[1]{\mathcal{S}}
\newcommand{\scriptF}[1]{\mathcal{F}}
\newcommand\dotover{\leavevmode\cleaders\hb@xt@ .22em{\hss $\cdot$\hss}\hfill\kern\z@}
\begin{document}
\title{Catch me if you can: \\ Signal localization with knockoff e-values}
\author[1]{Paula Gablenz}
\author[2,1]{Chiara Sabatti}
\affil[1]{Department of Statistics, Stanford University}
\affil[2]{Department of Biomedical Data Science, Stanford University}

\date{}

\maketitle

\begin{abstract}
We consider problems where many, somewhat redundant, hypotheses are tested and we are interested in reporting the most precise rejections, with false discovery rate (FDR) control. This is the case, for example, when researchers are interested both in individual hypotheses as well as group hypotheses corresponding to intersections of sets of the original hypotheses, at several resolution levels. A concrete application is in genome-wide association studies, where, depending on the signal strengths, it might be possible to resolve the influence of individual genetic variants on a phenotype with greater or lower precision. To adapt to the unknown signal strength, analyses are conducted at multiple resolutions and researchers are most interested in the more precise discoveries. Assuring FDR control on the reported findings with these adaptive searches is, however, often impossible. To design a multiple comparison procedure that allows for an adaptive choice of resolution with FDR control, we leverage e-values and linear programming. We adapt this approach to problems where knockoffs and group knockoffs have been successfully applied to test conditional independence hypotheses. We demonstrate its efficacy by analyzing data from the UK Biobank.
\end{abstract}

\textbf{Keywords}: FDR, grouped hypotheses, GWAS, knockoffs, multiple testing, multiple resolutions.

%%%%%%%%%%%%%%%%%%%%%%%%%%%%%%%%
\section{Introduction}\label{sec:introduction}

Finding, among many features, those that contain information on an outcome of interest is an important statistical problem which can be described in terms of testing multiple conditional independence hypotheses. For example, genome wide association studies (GWAS) are devoted to identifying single nucleotide polymorphisms (SNPs) whose alleles influence a medically relevant phenotype $Y$. A discovered SNP is truly interesting when it provides information on $Y$ in addition to that available in the rest of the genome. For each polymorphism $j$ in the study, a conditional independence hypothesis can be written as 
\begin{equation}\label{eq:intro_example_hypothesis}
 H_j: Y \indep \text{SNP}_j \mid \text{SNP}_{-j} 
\end{equation}
with $j \in \{1, ..., p\}$ and where $\text{SNP}_{-j}$ denotes all SNPs except SNP $j$.
The process with which we inherit genetic material from our parents is such that, within a human population, the random variables describing alleles at SNPs located in proximity of each other along a chromosome are dependent. This local dependency, known as linkage disequilibrium, can make it difficult to reject hypotheses of the type of (\ref{eq:intro_example_hypothesis}), as the variation in each SNP is typically well recapitulated by its neighbors. To avoid power loss, it is convenient to test conditional hypotheses corresponding to groups of highly correlated SNPs. This is, for example, the solution adopted in \citet{sesia2020multi}, where hierarchical clustering is used to define multiple groupings of SNPs, corresponding to different correlation thresholds: conditional testing with false discovery rate (FDR) control is carried out for each of these partitions in the genome, resulting in discoveries at multiple resolutions, as the one illustrated in the top panel of Figure \ref{fig:KnockoffZoom}.

\begin{figure}[H]
 \centering
 \includegraphics[width = \textwidth]{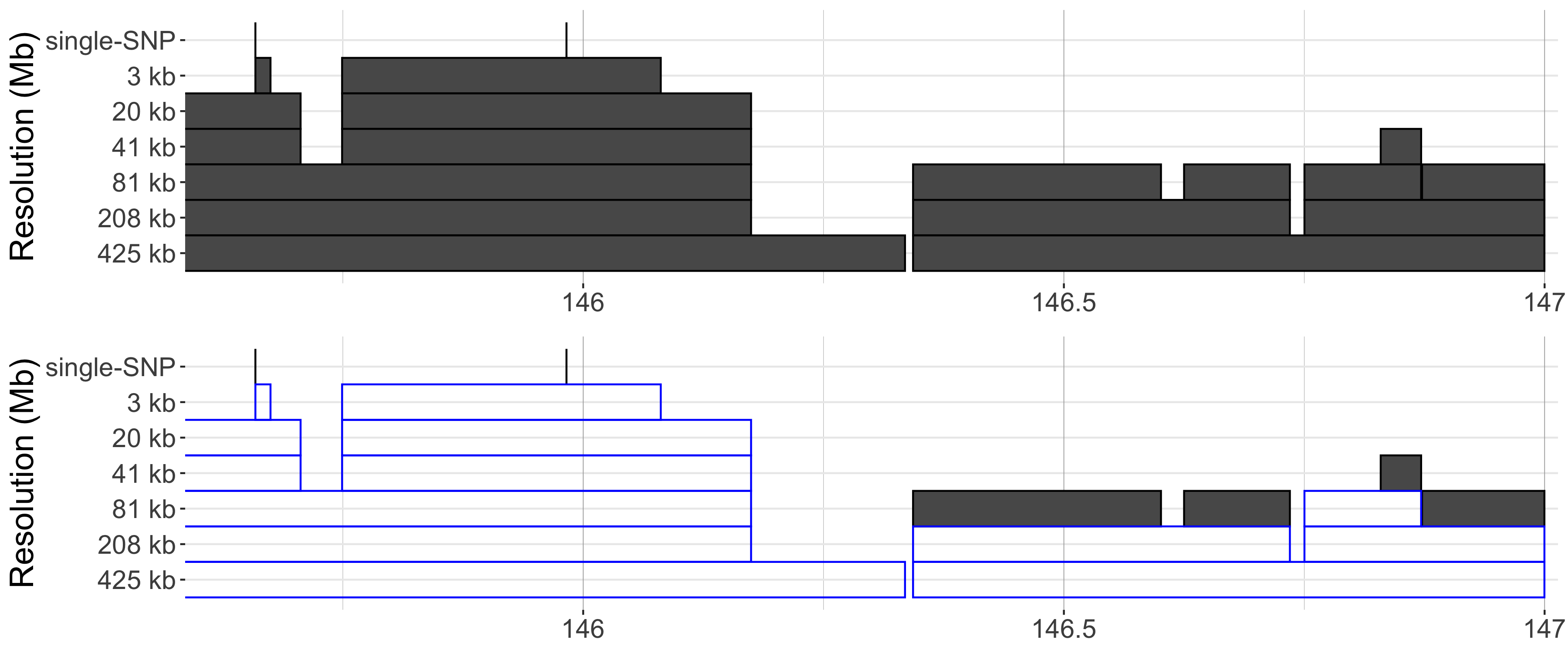}
 \caption{
Examples of the results of the analysis of height in the UK Biobank \citep{sesia2020multi}. Genomic position is on the $x$-axis (this is a small portion of chromosome 4). Different heights on the $y$-axis correspond to different resolutions. Top panel: Each filled gray box corresponds to a group of SNPs for which the hypothesis of independence from height given the rest of the genome has been rejected, carrying out an analysis that controls FDR at each resolution \citep{sesia2020multi}. Lower panel: the most specific discoveries among those in the top panel are highlighted as filled gray boxes, while the others (logically implied) are empty rectangles.}
 \label{fig:KnockoffZoom}
\end{figure}

Scientists, presented with these results, are interested in focusing on the most precise discoveries, highlighted in the bottom panel of Figure \ref{fig:KnockoffZoom}. However, such post-hoc selection of non redundant findings breaks FDR guarantees provided by the original algorithm \citep{katsevich2021filtering}. This limitation of FDR procedures is well known \citep{GS2011} and is one reason that makes alternative approaches more attractive, despite possible loss of power (for example, in the context of genetic analysis; see the contribution by \citet{Ret2020} and related discussion papers). Another line of work for controlling false discoveries by providing upper confidence bounds on the number of false discoveries has been developed by \citet{genovese2004stochastic, GS2011, goeman2019simultaneous} and extended to e-values by \citet{vovk2023confidence}.

GWAS are but one example of what one might call ``resolution adaptive'' testing: problems where scientists are aiming to simultaneously discover a signal and localize it as precisely as possible \citep{spector2023controlled}. Localization might refer to a physical position in the genome or in space \citep{Rosenblatt2018}, but it can also more broadly be construed as precision of the finding: in \citet{cortes2017bayesian}, for example, localizing a signal corresponds to identifying the nodes farther from the root in a tree describing a set of related diseases. 
There is typically a tension between the precision of a discovery and the ability to detect the underlying signal: scientists are interested in finding the most high resolution at which interesting rejections can be made. When the space of hypotheses explored is very large, it is reasonable to expect that this ``optimal'' resolution might vary across the existing signals. At the same time, the adaptive selection of resolution is complicated by the need to account for multiplicity, providing some global error guarantees. 

The literature documents multiple approaches to tackling this problem. Without attempting a comprehensive review, it it worth pointing out that they vary with respect to target error rate (ex. FWER or FDR), and the assumptions they make on the relation between hypotheses at different levels of resolution. For example, 
\citet{meijer2015multiple} develop a sequential multiple testing method to control the FWER for hypotheses that can be described by a directed acyclic graph and they are able to increase power by leveraging logical relations between hypotheses. A similar setting is considered in \citet{ramdas2017dagger}, which targets FDR control. In this work, we consider settings where no restriction is placed on the hypotheses at different resolutions (for example, we do not require partitions to be nested). This allows us to model situations where the different resolutions capture different, possibly quite unrelated, procedures to query the data.

\citet{spector2023controlled} recently studied this problem. Following \citet{mandozzi2016hierarchical}, they provide a clear formalization of the tension between localization and discovery and give an elegant and efficient solution in a Bayesian framework (BLiP). While guarantees on the Bayesian FDR are a useful step forward, when posterior probabilities for each of the tested hypotheses are available, they are not satisfactory for a large portion of the scientific community who finds the frequentist approach to inference an easier platform for agreement. 

Working in a frequentist framework, \citet{katsevich2021filtering} considered a larger class of multiple testing problems where researchers might be interested in refining the set of discoveries to eliminate redundancy, broadly defined. They introduce a multiple testing procedure (Focused BH) that generalizes the Benjamini and Hochberg (BH) procedure \citep{BH95}, and controls the FDR when the p-values for all hypotheses considered are independent or have a special type of positive dependence called PRDS (Positive Regression Dependency on each one from a Subset) \citep{BY2001}. Given the complex dependence structure of tests build on overlapping/nested sets of predictors as the ones in GWAS, these distributional assumptions are limiting, and the extension of Focused BH described in \citet{katsevich2021filtering} to arbitrary filters and dependency structures is very conservative.

An example of special interest is the multi resolution analysis of GWAS data with conditional hypotheses \citep{sesia2020multi} enabled by the knockoff framework \citep{barber2015controlling, candes2018panning} illustrated in Figure \ref{fig:KnockoffZoom}. Here the p-values are dependent and they are designed to be analyzed with a special multiple comparison adjustment procedure that is not captured by Focused BH. 

To address this gap, we introduce a resolution-adaptive, multiple comparison procedure that leads to frequentist FDR control without making distributional assumptions on the test statistics. A key ingredient to our approach are e-values, and their ``easy'' calculus \citep{vovk2021valuescalibrationcombination, wang2022eBH}. Working with e-values, we can define a linear program related to that one in \citet{spector2023controlled} and leading to frequentist FDR control. 

These results are concretely interesting because we can describe powerful e-values for resolution adaptive variable selection. \citet{ren2023derandomized} have recently shown how the knockoff filter can be re-interpreted as a BH-type procedure \citep{wang2022eBH} on specially constructed e-values. Building on their work, we introduce KeLP (Knockoff e-value Linear Program), with which we can analyze genomescale data in a powerful and time-efficient manner.

The rest of the paper is organized as follows. We formally introduce the problem in section \ref{sec:notation}, describe a solution using e-values in section \ref{sec:elp}, and operationalize it for testing conditional hypotheses with knockoffs in section \ref{sec:kelp}. We describe additional applications of KeLP in section \ref{sec:other_knock_eval_applications}. Section \ref{sec:simulations} is devoted to simulations and Section \ref{sec:application_ukb_height} presents the results of applying our methods to the UK Biobank data.

%%%%%%%%%%%%%%%%%%%%%%%%%%%%%%%%
\section{Problem statement and notation}\label{sec:notation}

We consider problems where investigators are interested in evaluating $p$ null hypotheses $\{H_1, ...., H_p\}$ as well as ``group'' hypotheses, which correspond to intersections of the original ones. An individual hypothesis can contribute to the definition of multiple group hypotheses, corresponding to different levels of resolution. 
For example, functional magnetic resonance imaging (fMRI) studies measure millions of voxels at different time intervals. While the most precise hypotheses that can be investigated describe the behavior of one voxel at one time point, scientists are also interested in aggregating the signal to correspond to larger structures and time frames. 

Formally, let $\mathcal{M}$ denote a set of resolutions and $\mathcal{A}^m$ denote a partition of $\{1, \ldots,p\}$ into disjoint groups at resolution $m \in \mathcal{M}$. With $A_{g}^m \in \mathcal{A}^m$ for $g \in \{1, ..., |\mathcal{A}^m|\}$ we indicate the elements of the partition $\mathcal{A}^m$, each of which is a disjoint set of indices $j \in \{1, ..., p\}$, which we refer to as a group. 

Each group defines a hypothesis
\begin{equation}\label{eq:group_hypothesis}
H_g^m=\bigcap_{j \in A_g^m} H_j \quad \text { for all } g, m.
\end{equation}

To avoid confusion, we use $m=1$ to indicate the partition $\mathcal{A}^1$ of $\{1, \ldots,p\}$ into $p$ sets of size 1, so that $H_j^1=H_j$ for $j=1,\ldots, p$. Referring to the index $m$ as a resolution, we underscore how the various partitions have a different degree of coarseness; however, we do not require that the partitions are nested, nor that there is a natural ordering among them. Let $\mathcal{H}^m$ denote the set of resolution-specific hypotheses for $m \in \mathcal{M}$ and $\mathcal{H}_0^m$ the corresponding set of true nulls. Combining several levels of resolution, let $\mathcal{H} = \{\mathcal{H}^m\}_{m \in \mathcal{M}}$ denote the combined set of hypotheses across resolutions and $\mathcal{H}_0 = \{ \mathcal{H}_0^m \}_{m \in \mathcal{M}}$ the combined set of true null hypotheses. We also let $\mathcal{A} = \{\mathcal{A}^m\}_{m \in \mathcal{M}}$ denote the set of groups across all levels of resolution. For simplicity of notation, we will let a group in $\mathcal{A}$ be denoted as $A \in \mathcal{A}$ with the understanding that these groups are of course resolution-specific. To each group $A \in \mathcal{A}$ corresponds an hypothesis $H_A \in \mathcal{H}$, for which we have a valid test (as described further in sections \ref{sec:elp} and \ref{sec:kelp}, we will work with e-values). Figure \ref{multires} gives a schematic illustration of a multi-resolution family of hypotheses $\mathcal{H}$.

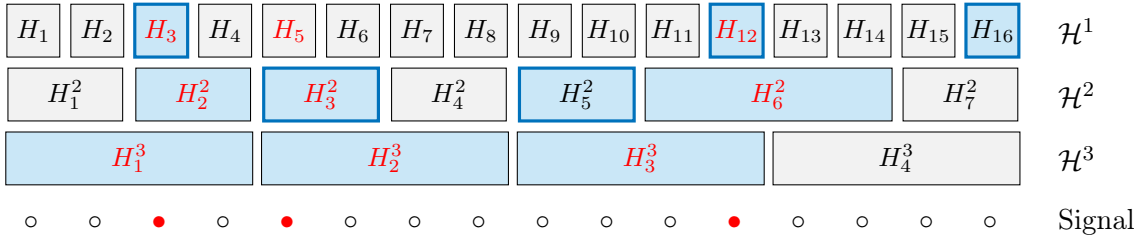
\begin{figure}
 \begin{center}

  \begin{tikzpicture}[scale=.85]
 % \draw [help lines] (0,0) grid (13,8);
 
  \node[obs, rectangle, text=gray, fill=gray!10]  (H1)  at(0,4)   {\color{black} $H_{1}$} ; 
   \node[obs,  rectangle, text=gray, fill=gray!10]  (H2)  at(1,4)   {\color{black} $H_{2}$} ; 
     \node[obs,  rectangle,text=gray, fill=Cerulean!20,very thick,draw=RoyalBlue]   (H3)  at(2,4)   {\color{red} $H_{3}$} ; 
   \node[obs, rectangle, text=gray, fill=gray!10]  (H4)  at(3,4)   {\color{black} $H_{4}$} ; 
     \node[obs, rectangle,  text=gray, fill=gray!10]  (H5)  at(4,4)   {\color{red}$H_{5}$} ; 
   \node[obs, rectangle, text=gray, fill=gray!10]  (H6)  at(5,4)   {\color{black} $H_{6}$} ; 
   \node[obs,  rectangle, text=gray, fill=gray!10]  (H7)  at(6,4)   {\color{black} $H_{7}$} ; 
   \node[obs, rectangle, text=gray, fill=gray!10]  (H8)  at(7,4)   {\color{black} $H_{8}$} ; 
     \node[obs,  rectangle,text=gray, fill=gray!10]  (H9)  at(8,4)   {\color{black} $H_{9}$} ; 
   \node[obs, rectangle,  text=gray, fill=gray!10]  (H10)  at(9,4)   {\color{black} $H_{10}$} ; 
     \node[obs,  rectangle,text=gray, fill=gray!10]  (H11)  at(10,4)   {\color{black} $H_{11}$} ; 
   \node[obs,  rectangle,text=gray, fill=Cerulean!20,very thick,draw=RoyalBlue]  (H12)  at(11,4)   {\color{red}$H_{12}$} ; 
      \node[obs,  rectangle,text=gray, fill=gray!10]  (H13)  at(12,4)   {\color{black} $H_{13}$} ; 
   \node[obs,  rectangle,text=gray, fill=gray!10]  (H14)  at(13,4)   {\color{black} $H_{14}$} ; 
     \node[obs,  rectangle,text=gray, fill=gray!10]  (H15)  at(14,4)   {\color{black} $H_{15}$} ; 
   \node[obs,  rectangle,text=gray, fill=Cerulean!20,very thick,draw=RoyalBlue]  (H16)  at(15,4)   {\color{black} $H_{16}$} ; 

\node[obs, rectangle, text=gray, fill=gray!10]  (H21)  at(.5,3)   {\color{black} $\;\;\;\;\;H^2_{1}\;\;\;\;\;$} ; 
\node[obs, rectangle, text=gray, fill=Cerulean!20]  (H22)  at(2.5,3)   {\color{red}$\;\;\;\;\;H^2_{2}\;\;\;\;\;$} ; 
\node[obs, rectangle, text=gray, fill=Cerulean!20,very thick,draw=RoyalBlue]  (H23)  at(4.5,3)   {\color{red}$\;\;\;\;\;H^2_{3}\;\;\;\;\;$} ; 
\node[obs, rectangle, text=gray, fill=gray!10]  (H24)  at(6.5,3)   {\color{black} $\;\;\;\;\;H^2_{4}\;\;\;\;\;$} ; 
\node[obs, rectangle, text=gray, fill=Cerulean!20,very thick,draw=RoyalBlue]  (H25)  at(8.5,3)   {\color{black} $\;\;\;\;\;H^2_{5}\;\;\;\;\;$} ; 
\node[obs, rectangle, text=gray, fill=Cerulean!20]  (H26)  at(11.5,3)   {\color{red}$\;\;\;\;\;\;\;\;\;\;\;\;\;\;H^2_{6}\;\;\;\;\;\;\;\;\;\;\;\;\;\;$} ; 
\node[obs, rectangle, text=gray, fill=gray!10]  (H27)  at(14.5,3)   {\color{black} $\;\;\;\;\;H^2_{7}\;\;\;\;\;$} ; 

\node[obs, rectangle, text=gray, fill=Cerulean!20]  (H31)  at(1.5,2)   {\color{red}$\;\;\;\;\;\;\;\;\;\;\;\;\;\;H^3_{1}\;\;\;\;\;\;\;\;\;\;\;\;\;\;$} ; 
\node[obs, rectangle, text=gray, fill=Cerulean!20]  (H31)  at(5.5,2)   {\color{red}$\;\;\;\;\;\;\;\;\;\;\;\;\;\;H^3_{2}\;\;\;\;\;\;\;\;\;\;\;\;\;\;$} ; 
\node[obs, rectangle, text=gray, fill=Cerulean!20]  (H31)  at(9.5,2)   {\color{red}$\;\;\;\;\;\;\;\;\;\;\;\;\;\;H^3_{3}\;\;\;\;\;\;\;\;\;\;\;\;\;\;$} ; 
\node[obs, rectangle, text=gray, fill=gray!10]  (H31)  at(13.5,2)   {\color{black} $\;\;\;\;\;\;\;\;\;\;\;\;\;\;H^3_{4}\;\;\;\;\;\;\;\;\;\;\;\;\;\;$} ; 

\node[text width=.25cm] at (0,1)   {$\circ$};
\node[text width=.25cm] at (1,1)  {$\circ$};
\node[text width=.25cm] at (2,1) {\color{red}$\bullet$};
\node[text width=.25cm] at (3,1)  {$\circ$};
\node[text width=.25cm] at (4,1)   {\color{red}$\bullet$};
\node[text width=.25cm] at (5,1)  {$\circ$};
\node[text width=.25cm] at (6,1) {$\circ$};
\node[text width=.25cm] at (7,1)  {$\circ$};

\node[text width=.25cm] at (8,1)   {$\circ$};
\node[text width=.25cm] at (9,1)  {$\circ$};
\node[text width=.25cm] at (10,1) {$\circ$};
\node[text width=.25cm] at (11,1)  {\color{red}$\bullet$};

\node[text width=.25cm] at (12,1)   {$\circ$};
\node[text width=.25cm] at (13,1)  {$\circ$};
\node[text width=.25cm] at (14,1) {$\circ$};
\node[text width=.25cm] at (15,1)  {$\circ$};

   \node[text width=2.5cm] at (17.5,4) 
    {$\mathcal{H}^1$};
    \node[text width=2.5cm] at (17.5,3) 
    {$\mathcal{H}^2$ };
    \node[text width=2.5cm] at (17.5,2) 
    {$\mathcal{H}^3$};
      \node[text width=2.5cm] at (17.5,1) 
    {Signal};

\end{tikzpicture}\end{center}
% if interested in simultaneous and coordinate rejection -- p-filter
% % if outernodes
%--  hierarchical -- yekutieli
\caption{Schematic representation of a multi-resolution family of hypotheses $\mathcal{H}=\{\mathcal{H}^1,\mathcal{H}^2,\mathcal{H}^3\}$ and a rejection set ${\mathcal R}$. Resolution 1 corresponds to individual level hypotheses; resolutions 2 and 3 correspond to two different partitions of the 16 hypotheses. False null hypotheses are indicated in red and capture the underlying signal at different resolutions. The hypotheses part of the rejection set $\mathcal R$ are shaded in blue; the most specific rejections are highlighted with thick blue contours. The overall FDP is 2/10. The FDP for resolution 1 is 1/3; for resolution 2 the FDP is 1/4, and for resolution 3, the FDP is 0. The FDP for the set of non-redundant and most precise discoveries is 2/5.
}\label{multires}
\end{figure}

When the size of these families is large, it is necessary to control some measure of global error and in many scientific settings the false discovery rate \citep{BH95} is often a preferred choice. Given the structure of the multi-resolution families, however, there are multiple ways of defining the FDR: one can consider the total collection of rejections across resolutions, and control the expected value of the false discovery proportion among them; or one can control the FDR within each resolution separately - possibly enforcing consistency across resolutions; or one might want to focus on the most precise discoveries made (the ones indicated with a blue thick contour in Figure \ref{multires}) and control the FDR among them. 

All of these choices are documented in the literature. For example, as anticipated, KnockoffZoom \citep{sesia2020multi} controls the FDR separately within each resolution; the p-filter \citep{barber2017pfilter} and the Multilayer Knockoff Filter (MKF) \citep{katsevich2019multilayer} result in coordinated discoveries across resolutions, with simultaneous FDR control within each resolution. The most precise discoveries have been described as ``outer-nodes'' when the partitions corresponding to different resolutions are nested and the tested hypotheses can be organized in a hierarchical tree structure: \citet{yekutieli2008hierarchical} proves the first result of FDR control in this framework, under some restrictive assumptions. \citet{mandozzi2016hierarchical} tackle the problem from the point of view of Family Wise Error Rate (FWER). Both BLiP \citep{spector2023controlled} and Focused BH \citep{katsevich2021filtering} consider approaches to identify non-redundant and precise rejection sets and develop methodology that leads to FDR control in some settings. This is also our goal.

Specifically, 
let $\mathcal{F}$ be a multiple testing procedure applied to $\mathcal{H}$ that outputs a rejection set $\mathcal{R}_{\mathcal{F}}$. Let ${\mathcal A}_{{\mathcal R}_{\mathcal{F}}} =
\{A \;\;s.t. \;\;H_A\in {\mathcal R}_{\mathcal{F}}\}$. We define the rejections $\mathcal{R}_{\mathcal{F}}$ to be non-redundant if ${\mathcal A}_{{\mathcal R}_{\mathcal{F}}}$ contains disjoint sets. As in \citet{mandozzi2016hierarchical}, we are interested in ``minimal'' non-redundant discoveries, that is those corresponding to the groups with the smallest cardinality; at the same time, as in \citet{spector2023controlled}, we do not want to restrict ourselves to nested partitions. To capture the value of rejecting a particular hypothesis $H_A$, we use weights $w(A)$, as in \citet{spector2023controlled}; in particular we often focus on the original suggestion in \citet{mandozzi2016hierarchical} with $w(A)=1/|A|,$ where $|A|$ is the cardinality of $A$. The following formalizes our goal of finding a multiple testing procedure which maximizes the resolution-adjusted rejection count, controlling the FDR and leading to non-redundant findings.

\begin{goal*}{\em \textbf{Resolution-adaptive discovery with FDR control.}}\label{def:goal} Given a multi-resolution family of hypotheses $\mathcal H$ as described above, corresponding to a collection of groups $\mathcal A$; 
a fixed weighting function $w: \mathcal{A} \to \mathbb{R}^+$, capturing the value of discoveries at different resolutions; and a level $\alpha \in (0, 1)$ of desired FDR, we seek a multiple testing procedure $\cal F$ which solves the following constrained optimization problem:
 \begin{eqnarray}
\max & \displaystyle \sum_{A\in {\mathcal A}_{{\mathcal R}_{\mathcal{F}}}}w(A) \label{eq:AdjCount} \\
\text { s.t. } & \text{FDR}(\mathcal{F}) = \mathbb{E} \Big[ \displaystyle \frac{|\mathcal{R}_{\mathcal{F}} \cap \mathcal{H}_0|}{1 \vee |\mathcal{R}_{\mathcal{F}}|}\Big] \leq \alpha \label{eq:fdr}\\
& A, B \in {\mathcal A}_{{\mathcal R}_{\mathcal{F}}} \Rightarrow A\cap B =\varnothing \label{eq:Disj}
\end{eqnarray}
\end{goal*}

Before introducing a procedure that solves this problem, we pause to make a few remarks. First, in contrast to \citet{spector2023controlled}, who use posterior probabilities to maximize a notion of expected power, our objective (\ref{eq:AdjCount}) is to maximize the number of (weighted) rejections: we want the largest number of  valuable discoveries, while controlling FDR.

Second, while we focus on situations where the value of discoveries is described by weights $w(A)$ decreasing in $|A|$, in different contexts it might be useful to consider other evaluations: the procedures we will describe adapt to any, as long as the weights are fixed.

The constraint (\ref{eq:Disj}) expresses our interest in obtaining non-redundant rejections.  In section \ref{subsec:emlkf} and appendix \ref{appendix:emlkf_description} we will discuss, instead, procedures that lead to discoveries at multiple resolutions, in a coordinated fashion.

We also want to underscore how the notion of resolution-adaptive discovery is relevant also for families of hypotheses that are not associated to spatial localization of a signal. For example, as in \citet{cortes2017bayesian}, low resolution hypotheses might correspond to group of traits, and the precision of discoveries increases as more specific statements can be made about which traits are involved. We will consider related problems in sections \ref{subsec:structured_outcomes_description} and \ref{subsec:structured_outcomes_simulation}.

Finally, note that it is also possible to generalize the above definition of $\mathcal{A}^1$ to sets containing multiple ``individual'' hypotheses, as long as these are always tested together.

%%%%%%%%%%%%%%%%%%%%%%%%%%%%%%%%
\section{eLP: an e-value based linear program for controlled, resolution-adaptive discoveries}\label{sec:elp}

We now describe a multiple testing procedure that achieves the goal in (\ref{eq:AdjCount})-(\ref{eq:Disj}). The procedure evaluates the evidence in favor of the hypotheses in $\mathcal H$ by using e-values.
The term e-value has been introduced recently and encompasses multiple related alternative approaches to testing, including betting scores and likelihood ratios \citep{vovk2021valuescalibrationcombination,S21,Get20}. Like p-values, e-values are defined by their properties under the null hypothesis; however, while p-values are defined in terms of probabilities, e-values are defined in terms of expectations. Formally, a p-value $P$ is a random variable that satisfies $\mathbb{P}_{null}(P \leq \alpha) \leq \alpha$ (often with equality) for all $\alpha \in(0,1)$. In other words, a $\mathrm{p}$-variable is stochastically larger than $\mathrm{U}[0,1]$ under the null. An e-value $E$ is a $[0, \infty]$-valued random variable satisfying $\mathbb{E}_{null}[E] \leq 1$.
We can conceptualize it as the stochastic return of a bet against the null hypothesis: on average, if the null is true, it multiplies the money risked by 1 (no gain), and large gains are unlikely (by Markov's inequality a gain larger than a constant $c$ happens with probability at most $1/c$). A large realized e-value, a substantial win in the bet, indicates that the null is unlikely to be true so that the e-value can be interpreted as the amount of evidence collected against the null. Indeed, it has been argued that, for the general public, a summary of evidence given in terms of return on a bet might be easier to interpret than the p-value \citep{S21}.

Aside from this possible increased efficacy in communication, the fact that e-values are defined in terms of expectation translates in a number of advantages from the point of view of designing valid (multiple) testing strategies \citep{wang2022eBH}.
For example, in contrast to p-values, handling dependent e-values is quite straightforward due to the linearity of expectation.
Crucially for our purposes, to establish that a multiple testing procedure controls FDR when it is based on e-values with any dependence structure, it is sufficient to verify that it is self-consistent. This is a purely algorithmic property, originally introduced in \citet{blanchard2008two} with reference to p-values. \citet{wang2022eBH} translate it to the e-value context. Let $e_1, ...., e_{|\mathcal{H}|}$ be e-values associated with hypotheses $H \in \mathcal{H}$. They define an e-testing procedure $\mathcal{F}$ to be self-consistent at level $\alpha \in (0, 1)$ if, for every rejected hypothesis $H_k$, the corresponding e-value $e_k$ satisfies 

\begin{equation}\label{eq:self_consistency_threshold}
 e_k \geq \frac{|\mathcal{H}|}{\alpha R_{\mathcal{F}}},
\end{equation}
where $|\mathcal{H}|$ denotes the total number of hypotheses and $R_{\mathcal{F}}$ denotes the number of rejections. \citet{wang2022eBH} show that self-consistency is a sufficient condition for FDR control, with arbitrary configurations of e-values. This is in contrast to what happens for p-values, where an additional dependency control condition is needed \citep{blanchard2008two}.
Note that the numerator of equation (\ref{eq:self_consistency_threshold}) depends on the number of hypotheses tested, which can make relying on self-consistency for frequentist FDR control conservative.

Similarly to \citet{spector2023controlled}, we can then define a linear program that achieves the goal in (\ref{eq:AdjCount})-(\ref{eq:Disj}). 

\begin{ProcedureDef}{\em \textbf{eLP, e-value linear programming.}}\label{def:lp_eLP}
Let $\mathcal H$ be a multi-resolution family of hypotheses, corresponding to a collection of groups $\mathcal A$, defined starting from $p$ individual hypotheses. Let each hypothesis be associated with an e-value $e_A$ and a weight $w(A)$ for $A \in \mathcal{A}$. Let $\alpha \in (0, 1)$ be the desired level of FDR control and let $x_A \in \{0, 1\}$ be an indicator of whether the hypothesis $H_{A}$ is rejected. The rejection set of eLP is identified by solving the following constrained optimization:
\begin{subequations}
\label{eq:optim}
\begin{align}
 \max_{\{x_{A}\}_{A \in \mathcal{A}}}
 & \quad \sum_{A \in \mathcal{A}} w(A) x_{A} \label{eq:eLP_objective}\\
 \text{subject to} 
 & \quad x_A \in \{0, 1 \} \quad \forall A \in \mathcal{A} \label{eq:eLP_binary}\\
 & \quad |\mathcal{A}| - \alpha e_A \sum_{A \in \mathcal{A}} x_A \leq |\mathcal{A}| \times (1-x_A) \quad \forall A \in \mathcal{A}, \label{eq:eLP_fdr}\\
 & \quad \sum_{A \in \mathcal{A}: j \in A} x_A \leq 1 \quad \forall j \in \{1, ..., p\}. \label{eq:eLP_location_constraint}
\end{align}
\end{subequations}
\end{ProcedureDef}

The first constraint (\ref{eq:eLP_binary}) simply characterizes $x_A \in \{0, 1\}$ as an indicator. The objective (\ref{eq:eLP_objective}) corresponds to our goal in (\ref{eq:AdjCount}).
 The linear constraint (\ref{eq:eLP_fdr}) ensures frequentist FDR control at level $\alpha$. 
The sum $\sum_{A \in \mathcal{A}} x_A$ equals the number of rejections by the procedure. Then, if $x_A = 1$, (\ref{eq:eLP_fdr}) requires that $e_A \geq \frac{|\mathcal{A}|}{\alpha \sum_{A \in \mathcal{A}} x_A}$, which is exactly the self-consistency condition in equation (\ref{eq:self_consistency_threshold}). If $x_A = 0$, (\ref{eq:eLP_fdr}) is automatically satisfied, as $\alpha e_A \sum_{A \in \mathcal{A}} x_A \geq 0$. Constraint (\ref{eq:eLP_location_constraint}) ensures that the rejected regions are disjoint (\ref{eq:Disj}): each $j \in \{1, ..., p\}$ can be rejected at most once, and not multiple times in different groups. 

 \noindent {\bf Remark 1}\quad To use eLP, researchers need to have available e-values for all the hypotheses. In case the testing procedure leads to p-values, it is possible to convert these to e-values using ``calibration,'' albeit with possible power loss (see \citet{vovk2021valuescalibrationcombination}).

 \noindent {\bf Remark 2}\quad A consequence of the linearity of expectation is that an arithmetic mean of e-values of individual hypotheses belonging to a certain group is an e-value for their group-level global null hypothesis. In fact, the arithmetic mean essentially dominates any symmetric e-merging function, i.e.  functions that take e-values of individual hypotheses as input and provide an e-value for the group-level global null as output (see \citep{vovk2021valuescalibrationcombination}, who also discuss cross-merging functions, mapping several p-values into an e-value). Note that if group e-values are constructed with this strategy,  eLP would only reject hypotheses at the finest level of resolution (the individual level). We describe how to get non-trivial group e-values from the knockoff filter in section \ref{sec:kelp}.

\noindent {\bf Remark 3}\quad For a single level of resolution, that is $|\mathcal{M}| = 1$, eLP reduces to the e-BH procedure by \citet{wang2022eBH}.

Self-consistency can also be used to define other multiple comparisons procedures based on e-values. For example, as mentioned in \citet{wang2022eBH}, one can construct the equivalence of Focused BH \citep{katsevich2021filtering} for e-values. We do so precisely in appendix \ref{appendix:focusedeBH_vs_kelp}, obtaining a procedure we call Focused e-BH and which controls FDR for any filter and under any dependence structure. 

Focused e-BH applied to a multi-resolution family of hypotheses with a filter that selects the most precise non-redundant rejections produces the same rejection set of eLP with weights $w(A)=1/|A|$; see appendix \ref{appendix:focusedeBH_vs_kelp} for more details. This is interesting in the context of the comparison in \citet{spector2023controlled} between Focused BH and BLiP, which showed larger power for BLiP. This equivalence relationship between eLP (which has many commonalities with BLiP) and Focused e-BH (which is a translation of Focused BH to e-values) underscores how---when considering exactly the same set of hypotheses---the observed difference in power might not be due to the linear programming formulation, but rather to the implicit estimate of the FDR underlying these two methods. \citet{spector2023controlled} control the Bayesian FDR, which can be evaluated using posterior probabilities. Focused BH and Focused e-BH control the frequentist FDR. A (self-consistent) procedure controlling the frequentist FDR bounds the number of wrong rejections utilizing the total number of hypotheses tested: this can lead to conservative behavior, when non-redundancy constraints limit the number of possible rejections to a value much smaller than the total count of tested hypotheses. To remedy this, \citet{katsevich2021filtering} introduced a permutation based method to estimate the number of false rejections, which, however, comes with computational costs.

%A self-consistent procedure (\ref{eq:self_consistency_threshold}) estimates the number of wrong rejections as a fraction of the total number of hypotheses tested $|{\mathcal H}|$: this ignores the constraint imposed by filtering, and could lead to conservative behavior. 

%%%%%%%%%%%%%%%%%%%%%%%%%%%%%%%%
\section{KeLP: eLP with knockoff e-values}\label{sec:kelp}
Given e-values for a family of hypotheses ${\mathcal H}$, eLP guarantees FDR control while leading to resolution-adaptive discoveries. However, defining powerful e-values is a non-trivial task \citep{vovk2021valuescalibrationcombination}. We now describe how to accomplish this for specific types of multi-resolution families of hypotheses leveraging the knockoff framework \citep{barber2015controlling,candes2018panning}.

For a set of $p$ explanatory variables $X = (X_1, ..., X_p)$ and a response $Y$, the knockoff framework tests, with FDR control, conditional independence hypotheses 
\begin{equation}\label{eq:conditional_independence_null}
 H_j: Y \indep X_j \mid X_{-j}
\end{equation}
and, given a partition $\mathcal{A}^m$, their group equivalent
\begin{equation}\label{eq:group_conditional_independence_null}
 H_{A_g^m}: Y \indep X_{A_g^m} \mid X_{-A_g^m}
\end{equation}
where $X_{A_g^m} = \{X_j\}_{j \in A_g^m}$ and $X_{- A_g^m}$ denotes all features except $X_j$ with $j\in A_g^m$ \citep{katsevich2019multilayer,sesia2020multi}. 

The knockoff framework relies on two steps: a) the construction of test statistics with some distributional properties under the null and b) a multiple comparison procedure. The first step a) is based on comparing the signal of each feature (or group of features) with that of a ``knockoff copy'' of it, indistinguishable from the original in distribution, but independent from $Y$ \citep{candes2018panning,sesia2020multi}. This is done with a score $W_j$ designed so that swapping the original feature $X_j$ with its knockoff $\widetilde{X}_j$ flips the sign of $W_j$. For example, a popular choice \citep{candes2018panning} for the scores is the difference of the absolute values of the coefficients based on a cross validated Lasso of $Y$ on the entire collection of original features and knockoffs: $W_j = |\beta_j| - |\widetilde{\beta}_j|$, where $\beta_j$ is the resulting coefficient of $X_j$ and $\widetilde{\beta}_j$ the coefficient of the knockoff $\widetilde{X}_j$. A large value of $W_j$ suggests greater evidence against the null. 

The b) multiple comparison procedure (filter) is based on an estimate of the FDP that relies on the distribution of the knockoff scores $W_j$ corresponding to null hypotheses. Precisely, to control FDR at level $\gamma$, the knockoff filter \citep{barber2015controlling} selects the set of features $\mathcal{R}_{\text{KO}}$ according to the following rule:
\begin{eqnarray}
\mathcal{R}_{\text{KO}} & = & \{j: W_j \geq T\} \label{eq:Rko}\\
 T & = & \inf \{t > 0: \frac{1 + \sum_{j \in [p]} \mathbf{1} \{W_j \leq -t \}}{\sum_{j \in [p]} \mathbf{1} \{W_j \geq t \}} \leq \gamma \} \nonumber.
\end{eqnarray}
The martingale properties that guarantee that (\ref{eq:Rko}) controls FDR at level $\gamma$ also guarantee that:
\begin{equation}\label{equ:martingale_property}
 \mathbb{E} \Big[ \frac{\sum_{j: H_j \in \mathcal{H}_0} \mathbf{1} \{W_j \geq T \}}{1 + \sum_{j: H_j \in \mathcal{H}_0} \mathbf{1} \{W_j \leq -T \}} \Big] \leq 1.
\end{equation}
This expectation motivated the construction in \citet{ren2023derandomized}, who show how the rejections of the knockoff filter are equivalent to those of e-BH applied to e-values defined as
$$e_j:=p \cdot \frac{\mathbf{1}\left\{W_j \geq T\right\}}{1+\sum_{k \in[p]} \mathbf{1}\left\{W_k \leq-T\right\}}.$$
Note that these $e_j$ are not strictly e-values: it is not true that for each null $ \mathbb{E} [e_j]\leq 1$, but only that $ \sum_{j: H_j \in {\cal H}_0} \mathbb{E} [e_j]\leq p$. This property follows from equation (\ref{equ:martingale_property}), as $$\sum_{j: H_j \in {\cal H}_0} \mathbb{E} [e_j] = p \ \ \cdot \sum_{j: H_j \in {\cal H}_0} \mathbb{E} \Big[\frac{\mathbf{1}\left\{W_j \geq T\right\}}{1+\sum_{k \in[p]} \mathbf{1}\left\{W_k \leq-T\right\}}\Big] \leq p \ \ \cdot \mathbb{E} \Big[ \frac{\sum_{j: H_j \in \mathcal{H}_0} \mathbf{1} \{W_j \geq T \}}{1 + \sum_{j: H_j \in \mathcal{H}_0} \mathbf{1} \{W_j \leq -T \}} \Big] \leq p.$$ Nevertheless, \citet{ren2023derandomized} show that this property of ``relaxed'' e-values is sufficient to guarantee FDR control in the rejections defined with e-BH. 

Describing the knockoff filter in terms of e-BH applied to e-values is convenient because of the advantages of this framework in dealing with dependency. \citet{ren2023derandomized} show, for example, how dependent sets of $\{e^k_j, j=1,\ldots p\}$, derived from $k$ runs of the knockoff procedure, can be combined to achieve a stable variable selection.
This is also useful in our multi-resolution framework. To illustrate this, we start generalizing the result by \citet{wang2022eBH} for relaxed e-values.
\begin{theorem}\label{thm:main_thm_elp}
 Suppose $\mathbf{e} = (e_1, ..., e_{n})$ are relaxed e-values for hypotheses $(H_1, \ldots, H_{n})$, i.e. they satisfy $\sum_{j: H_j \in \mathcal{H}_0} \mathbb{E} [e_j] \leq n$. Then any self-consistent e-testing procedure $\mathcal{F}$ at level $\alpha$ taking $\mathbf{e}$ as input controls the FDR at level $\alpha$.
\end{theorem}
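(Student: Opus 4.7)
The plan is to mimic the FDR-control argument of \citet{wang2022eBH} for genuine e-values and to observe that that argument only uses the aggregate bound $\sum_{j: H_j \in \mathcal{H}_0} \mathbb{E}[e_j] \leq n$, never the per-hypothesis bound $\mathbb{E}[e_j] \leq 1$ that defines an ordinary e-value. The relaxed condition in the statement is exactly this aggregate bound, so the same chain of inequalities goes through unchanged.

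Concretely, writing $R_{\mathcal{F}} = |\mathcal{R}_{\mathcal{F}}|$, I would begin by invoking the self-consistency condition (\ref{eq:self_consistency_threshold}): whenever $j \in \mathcal{R}_{\mathcal{F}}$ we have $e_j \geq n/(\alpha R_{\mathcal{F}})$, which rearranges to the pointwise bound
\[
\mathbf{1}\{j \in \mathcal{R}_{\mathcal{F}}\} \;\leq\; \frac{\alpha R_{\mathcal{F}}\, e_j}{n}\,\mathbf{1}\{j \in \mathcal{R}_{\mathcal{F}}\} \;\leq\; \frac{\alpha R_{\mathcal{F}}\, e_j}{n}.
\]
Summing over the true null indices and dividing by $R_{\mathcal{F}} \vee 1$, while absorbing the harmless factor $R_{\mathcal{F}}/(R_{\mathcal{F}} \vee 1) \leq 1$, yields the deterministic bound
\[
\frac{|\mathcal{R}_{\mathcal{F}} \cap \mathcal{H}_0|}{R_{\mathcal{F}} \vee 1} \;\leq\; \frac{\alpha}{n}\sum_{j: H_j \in \mathcal{H}_0} e_j.
\]
Taking expectations and applying the relaxed e-value hypothesis $\sum_{j: H_j \in \mathcal{H}_0} \mathbb{E}[e_j] \leq n$ then gives $\mathrm{FDR}(\mathcal{F}) \leq (\alpha/n)\cdot n = \alpha$.

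The only subtle point---and the reason Theorem~\ref{thm:main_thm_elp} is worth stating separately from the original \citet{wang2022eBH} result---is to verify that the argument never invokes the marginal expectation of any single $e_j$; linearity of expectation applied to the sum $\sum_{j: H_j \in \mathcal{H}_0} e_j$ is the only probabilistic ingredient used. This is what permits feeding in the knockoff-derived relaxed e-values of \citet{ren2023derandomized}, whose individual expectations are not controlled but whose aggregate expectation is bounded by the martingale property (\ref{equ:martingale_property}). I do not anticipate any real obstacle: the proof is essentially a bookkeeping exercise. The one point requiring mild care is the $R_{\mathcal{F}} = 0$ case, where the FDP ratio is defined to be zero and the deterministic bound is trivially valid, so the argument does not break at that boundary.
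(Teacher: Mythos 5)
Your proof is correct and follows exactly the same route as the paper's: apply self-consistency pointwise to bound the indicator of rejection by $\alpha R_{\mathcal{F}} e_j / n$, sum over nulls, take expectations, and invoke the aggregate bound $\sum_{j: H_j \in \mathcal{H}_0}\mathbb{E}[e_j]\leq n$ in the final step. Your observation that only the aggregate (not the per-hypothesis) expectation bound is ever used is precisely the point the paper makes when noting that the only change from the argument of \citet{wang2022eBH} is the last inequality.
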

The proof is almost identical to the proof in \citet{wang2022eBH} and is included in appendix \ref{appendix:proofs_main}. Theorem \ref{thm:main_thm_elp} guarantees that 
we can use relaxed e-values as an input to eLP (Procedure \ref{def:lp_eLP}). The following describes the construction of knockoffs-based e-values for the hypotheses at each resolution and their use in a procedure we call KeLP. $\widetilde{X}^m$ is said to be a (group) knockoff for partition $m \in \mathcal{M}$ if for each group $A_{g}^m \in \mathcal{A}^m$ it holds that $(X, \widetilde{X}^m)_{\text{swap}({A_{g}^m, \mathcal{A}^m})} \overset{d}{=} (X, \widetilde{X}^m)$ and $Y$ is independent of $\widetilde{X}^m$ given $X$; see e.g. \citet{sesia2021populationstructureshapeit}. $(X, \widetilde{X}^m)_{(\text{swap}_{A_{g}^m, \mathcal{A}^m})}$ means that the $j$-th coordinate is swapped with the $(j + p)$-th coordinate for all $j \in A_g^m$. The construction of (group) knockoff variables has been studied in prior research; see for example \citet{sesia2019gene, sesia2021populationstructureshapeit,dai2016knockoff, katsevich2019multilayer, gimenez2019knockoffs, spector2022powerful, romano2020deep}.

\begin{ProcedureDef}{\em \textbf{(KeLP, knockoff e-value linear programming)}}\label{def:lp_KeLP}
Let ${\cal H}$ be a multi-resolution family of conditional independence hypotheses relative to variables $Y, X_1, \ldots X_p$. For each resolution $m\in {\cal M}$, construct (group) knockoffs for $X_1, \ldots, X_p$ with respect to partition ${\cal A}^m$ and obtain for each hypothesis $H_{A^m_g}$ the knockoff score $W_{A^m_g}$. Let $\{ c_m\}_{m\in{\cal M}}$ be non negative numbers such that $\sum_{m \in \mathcal{M}} c_m \leq |\mathcal{A}|$. 
For each $m \in \mathcal{M}$, define knockoffs-based e-values by
\begin{eqnarray}\label{eq:main_evalue_definition}
 e_{A_{g}^m} & := & c_m \cdot \frac{\mathbf{1} \{W_{A_{g}^m} \geq T^m \}}{1 + \sum_{A \in \mathcal{A}^m} \mathbf{1} \{W_A \leq -T^m \}}, \;\;\; \text{where}\\
\label{eq:stopping_Mtime_gamma}
 T^m & = & \inf \{t > 0: \frac{1 + \sum_{A \in {\cal A}^m} \mathbf{1} \{W_A \leq -t \}}{\sum_{A \in {\cal A}^m} \mathbf{1} \{W_A\geq t \}} \leq \gamma^m \}, \;\; \gamma^m\in (0,1).
\end{eqnarray}
For a fixed weighting function $w(\cdot)$ and a target FDR level $\alpha$, the rejection set of KeLP is given by the output of eLP (Procedure \ref{def:lp_eLP}) taking these and the e-values (\ref{eq:main_evalue_definition}) as input. 
\end{ProcedureDef}

Due to the correspondence between the e-BH procedure by \citet{wang2022eBH} and the traditional knockoff-filter by \citet{candes2018panning}, as described by \citet{ren2023derandomized}, if $|\mathcal{M}| = 1$, KeLP is equivalent to the usual knockoff procedure.

There are two sets of parameters in KeLP: $c_m$ and $\gamma^m$.
In (\ref{eq:main_evalue_definition}), $c_m$ is a fixed multiplier that is chosen to guarantee the property for theorem \ref{thm:main_thm_elp}:
 \begin{equation}\label{eq:knockoff_evalue_fulfill}
 \sum_{A: H_A \in \mathcal{H}_0} \mathbb{E} \Big[ e_A \Big] = \sum_{m \in \mathcal{M}} \sum_{A_g^m: H_{g}^m \in \mathcal{H}_0^{m}} \mathbb{E} \Big[ e_{A_{g}^m} \Big] \leq \sum_{m \in \mathcal{M}} c_m \leq |\mathcal{A}|.
\end{equation} 
The first inequality follows from the property of the importance statistics in equation (\ref{equ:martingale_property}), and the second is true by construction (note that it is possible for some $c_m > |\mathcal{A}^m|$, which relaxes the original definition of knockoff-based e-values by \citet{ren2023derandomized}, since for our goal in (\ref{eq:AdjCount})-(\ref{eq:Disj}) we are not going to apply e-BH to each resolution separately).

The parameter $\gamma^m$ used to obtain the stopping time $T^m$ (\ref{eq:stopping_Mtime_gamma}) governs the trade-off between the number of non-zero e-values and their magnitude. For a larger $\gamma^m$, we might obtain a smaller $T^m$, which might result in a larger number of non-zero e-values. However, those nonzero e-values might be smaller in magnitude, as there might also be more groups with $W_{A_g^m} \leq -T^m$. 

The choice of $c_m$ and $\gamma^m$ is left to the user, and can be guided by tuning in a hold-out dataset. However, note that even in a GWAS setting a ``large'' tuning set might not be required; see appendix \ref{appendix:details_ukb} for details. In this work, we set $c_m = |\mathcal{A}| / |\mathcal{M}|$. This choice results in knockoff e-values whose magnitude depends mostly on signal-strength, and not on the number of groups in a particular resolution. KeLP already includes the preference to reject hypotheses in smaller levels of resolution. As for $\gamma^m$, in applications with moderate dimensions, following \citet{ren2023derandomized}, we recommend $\gamma^m = \alpha / 2$ or $\gamma^m = \alpha / 4$, where $\gamma = \alpha / 4$ should be chosen in higher dimensional settings. We suggest level-specific choices of $\gamma^m$ for very high-dimensional applications with many levels of resolution and very sparse signals (e.g. genetic data): in this case, we recommend choosing $\gamma = \alpha$ for the individual level, with decreasing levels of $\gamma$ as the resolutions becomes coarser. We leave the question of finding the optimal theoretical parameters for further research. We solve the linear program corresponding to KeLP using a standard optimization software: CVXR \citep{cvxr}.

%%%%%%%%%%%%%%%%%%%%%%%%%%%%%%%%
\section{Additional applications of knockoff e-values and KeLP}\label{sec:other_knock_eval_applications}

Before exploring the performance of KeLP with simulations and real-data analysis, we pause to illustrate other advantages of e-values and of re-casting knockoff analyses in this framework (\ref{eq:main_evalue_definition}). We consider two cases in which the dependency between p-value/test statistics presented challenges for analysis, and underscore the generality of KeLP by showing how it can be leveraged to analyze multivariate structured outcomes.

\subsection{Multilayer Knockoff Filter}\label{subsec:emlkf}

As mentioned in section \ref{sec:notation}, when working with hypotheses at multiple resolutions, different notions of global error are meaningful. The goal of this paper is to reject the most specific hypotheses possible, while controlling the FDR across the rejection set, which might span multiple levels of resolution.

Another goal, referred to as multilayer FDR control, is to coordinate rejections across resolutions and provide simultaneous FDR control in each resolution. 
 \citet{barber2017pfilter} introduced multilayer FDR control and developed the p-filter, which attains it when the p-values for the individual hypotheses are PRDS and group hypotheses are tested using Simes's combination rule \citep{simes1986improved}. 
 
 The original p-filter by \citet{barber2017pfilter} has been extended to arbitrary dependencies between the p-values in \citet{pfilter2019} using reshaping, which makes the procedure conservative.
 
 E-values can be used to develop an analog of the p-filter, as mentioned by \citet{wang2022eBH}. We include a description of the e-filter in appendix \ref{appendix:emlkf_description}.

 \citet{katsevich2019multilayer} developed the Multilayer Knockoff Filter (MKF) to attain the same goal in the context of a multi-resolution analysis via knockoffs. The MKF controls the FDR at level $\alpha \times \kappa$, where $\kappa \approx 1.93$ and $\alpha$ denotes the target FDR level. Using knockoff e-values, we describe the e-Multilayer Knockoff Filter (e-MKF ) with FDR control at the target level $\alpha$ in appendix \ref{appendix:emlkf_description}. We also show that, for the same theoretical level of FDR control, the e-MKF has higher power compared to the MKF.

\subsection{Partial conjunction hypotheses}

A multi-resolution family of hypotheses is one type of ``structured'' collection of hypotheses. Another type that is often useful to consider is the one motivating partial conjunction testing \citep{benjamini2008screeningpartialconjunction}. Here one can think about an array of hypotheses $\{H_j^{\ell}\}_{j=1,\ell=1}^{n,L}$, where for each $j$, researchers are interested to discover that at least $u$ out of $L$ tested hypotheses are false. 
One context where partial conjunctions are quite relevant is replication \citep{BH13,BH18}. For example, the genetic underpinnings of a disorder might be studied in different human populations, and scientists might be interested in identifying those SNPs that carry a signal in at least $u$ distinct groups. 
A related, but different, example, in the context of GWAS, is when measurements on several phenotypes may be available. Scientists might then be interested in testing whether a SNP (or group of SNPs) is conditionally independent of \textit{any} (or \textit{at least} a certain number) of the phenotypes.

\citet{li2021multienvironmentknockoff} developed a knockoff filter to test partial conjunctions across distinct independent studies. This approach, however, cannot be applied to the investigation of multiple phenotypes collected on the same individuals---the knockoff scores relative to the different phenotypes being based on the same genotype data are going to be dependent. 
 Using knockoff e-values we can overcome this limitation, as we describe in the appendix \ref{appendix:global_partial_description}. Testing these partial conjunction hypotheses can also be combined with testing across multiple levels of resolution. Indeed, in our application to the UK Biobank data in section \ref{sec:application_ukb_height}, we test a global null hypothesis for platelet-related outcomes.

\subsection{Structured outcomes}\label{subsec:structured_outcomes_description}

In some cases, it might be possible to describe a hierarchy among the outcomes \citep{cortes2017bayesian}. Let $L$ denote the total number of outcomes. To fix ideas, consider the example in Figure \ref{fig:outcome_tree}. The binary tree has four leaves, two internal nodes and one root node. Each of the parent nodes is constructed as a union of its leaf descendants. 

\begin{figure}[H]
\centering
\tikzset{every picture/.style={line width=0.75pt}} %set default line width to 0.75pt        

\begin{tikzpicture}[x=0.75pt,y=0.75pt,yscale=-1,xscale=1]
%uncomment if require: \path (0,845); %set diagram left start at 0, and has height of 845

%Shape: Circle [id:dp10682690054716004] 
\draw  [color={rgb, 255:red, 128; green, 128; blue, 128 }  ,draw opacity=1 ][fill={rgb, 255:red, 128; green, 128; blue, 128 }  ,fill opacity=1 ] (442,407) .. controls (442,404.24) and (444.24,402) .. (447,402) .. controls (449.76,402) and (452,404.24) .. (452,407) .. controls (452,409.76) and (449.76,412) .. (447,412) .. controls (444.24,412) and (442,409.76) .. (442,407) -- cycle ;
%Shape: Circle [id:dp8425338443859891] 
\draw  [color={rgb, 255:red, 128; green, 128; blue, 128 }  ,draw opacity=1 ][fill={rgb, 255:red, 128; green, 128; blue, 128 }  ,fill opacity=1 ] (402,458) .. controls (402,455.24) and (404.24,453) .. (407,453) .. controls (409.76,453) and (412,455.24) .. (412,458) .. controls (412,460.76) and (409.76,463) .. (407,463) .. controls (404.24,463) and (402,460.76) .. (402,458) -- cycle ;
%Shape: Circle [id:dp5981814068713998] 
\draw  [color={rgb, 255:red, 128; green, 128; blue, 128 }  ,draw opacity=1 ][fill={rgb, 255:red, 128; green, 128; blue, 128 }  ,fill opacity=1 ] (482,457) .. controls (482,454.24) and (484.24,452) .. (487,452) .. controls (489.76,452) and (492,454.24) .. (492,457) .. controls (492,459.76) and (489.76,462) .. (487,462) .. controls (484.24,462) and (482,459.76) .. (482,457) -- cycle ;
%Shape: Circle [id:dp033847126280683426] 
\draw  [color={rgb, 255:red, 128; green, 128; blue, 128 }  ,draw opacity=1 ][fill={rgb, 255:red, 128; green, 128; blue, 128 }  ,fill opacity=1 ] (382,507) .. controls (382,504.24) and (384.24,502) .. (387,502) .. controls (389.76,502) and (392,504.24) .. (392,507) .. controls (392,509.76) and (389.76,512) .. (387,512) .. controls (384.24,512) and (382,509.76) .. (382,507) -- cycle ;
%Shape: Circle [id:dp7440431341265901] 
\draw  [color={rgb, 255:red, 128; green, 128; blue, 128 }  ,draw opacity=1 ][fill={rgb, 255:red, 128; green, 128; blue, 128 }  ,fill opacity=1 ] (422,507) .. controls (422,504.24) and (424.24,502) .. (427,502) .. controls (429.76,502) and (432,504.24) .. (432,507) .. controls (432,509.76) and (429.76,512) .. (427,512) .. controls (424.24,512) and (422,509.76) .. (422,507) -- cycle ;
%Shape: Circle [id:dp8505265039564824] 
\draw  [color={rgb, 255:red, 128; green, 128; blue, 128 }  ,draw opacity=1 ][fill={rgb, 255:red, 128; green, 128; blue, 128 }  ,fill opacity=1 ] (462,507) .. controls (462,504.24) and (464.24,502) .. (467,502) .. controls (469.76,502) and (472,504.24) .. (472,507) .. controls (472,509.76) and (469.76,512) .. (467,512) .. controls (464.24,512) and (462,509.76) .. (462,507) -- cycle ;
%Shape: Circle [id:dp16903690086421208] 
\draw  [color={rgb, 255:red, 128; green, 128; blue, 128 }  ,draw opacity=1 ][fill={rgb, 255:red, 128; green, 128; blue, 128 }  ,fill opacity=1 ] (503,508) .. controls (503,505.24) and (505.24,503) .. (508,503) .. controls (510.76,503) and (513,505.24) .. (513,508) .. controls (513,510.76) and (510.76,513) .. (508,513) .. controls (505.24,513) and (503,510.76) .. (503,508) -- cycle ;
%Straight Lines [id:da15504382052052246] 
\draw [color={rgb, 255:red, 128; green, 128; blue, 128 }  ,draw opacity=1 ]   (407,458) -- (447,407) ;
%Straight Lines [id:da11460674629810352] 
\draw [color={rgb, 255:red, 128; green, 128; blue, 128 }  ,draw opacity=1 ][fill={rgb, 255:red, 128; green, 128; blue, 128 }  ,fill opacity=1 ]   (487,457) -- (447,407) ;
%Straight Lines [id:da9120664389831699] 
\draw [color={rgb, 255:red, 128; green, 128; blue, 128 }  ,draw opacity=1 ]   (508,508) -- (496.69,480.54) -- (487,457) ;
%Straight Lines [id:da7276220686889814] 
\draw [color={rgb, 255:red, 128; green, 128; blue, 128 }  ,draw opacity=1 ]   (407,458) -- (387,507) ;
%Straight Lines [id:da002450862711897317] 
\draw [color={rgb, 255:red, 128; green, 128; blue, 128 }  ,draw opacity=1 ]   (407,458) -- (427,507) ;
%Straight Lines [id:da3556576827137585] 
\draw [color={rgb, 255:red, 128; green, 128; blue, 128 }  ,draw opacity=1 ]   (487,457) -- (467,507) ;

% Text Node
\draw (379,514) node [anchor=north west][inner sep=0.75pt]   [align=left] {A};
% Text Node
\draw (420,514) node [anchor=north west][inner sep=0.75pt]   [align=left] {B};
% Text Node
\draw (460,514) node [anchor=north west][inner sep=0.75pt]   [align=left] {C};
% Text Node
\draw (500,514) node [anchor=north west][inner sep=0.75pt]   [align=left] {D};
% Text Node
\draw (348,445) node [anchor=north west][inner sep=0.75pt]   [align=left] {\{A, B\}};
% Text Node
\draw (500,443) node [anchor=north west][inner sep=0.75pt]   [align=left] {\{C, D\}};
% Text Node
\draw (454,394) node [anchor=north west][inner sep=0.75pt]   [align=left] {\{A, B, C, D\}};

\end{tikzpicture} 
\caption{Illustration of an outcome structure corresponding to a binary tree with seven nodes based on four leaves $A, B, C$ and $D$. Each parent node is constructed based on its children.}
\label{fig:outcome_tree}
\end{figure}
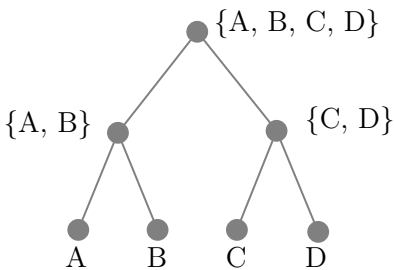

We are interested in testing conditional independence hypotheses between each of the $p$ features and the $L = 7$ different outcomes. Even without considering different groupings of the $p$ features, there is a level of redundancy between the hypotheses. For example, rejecting the hypotheses of conditional independence between feature $X_j$ and both $\{A, B\}$ and $A$ is redundant, as it implicates the leaf-outcome $A$ twice and stating that $X_j$ is not independent from $A$ logically implies that $X_j$ is not independent from $\{A, B\}$.

We can use the KeLP framework to filter rejections so that they point to the most specific outcome for each feature. Specifically, we obtain one knockoff e-value for each feature for each node of the tree shown in Figure \ref{fig:outcome_tree}, i.e. seven knockoff e-values for each feature in total. We require that the hypothesis of conditional independence from each feature be rejected only once for each leaf-outcome. We weigh hypotheses with the reciprocal of the number of leaf-outcomes they implicate. With these specifications, we can then apply KeLP to the $7 \times p$ e-values in total, controlling the FDR across the entire rejected set (across all outcomes and features). 

The output of the procedure returns separately for each feature the most specific outcome for which the conditional independence hypothesis was rejected. We show a simulation of this setting in section \ref{subsec:structured_outcomes_simulation}.

%%%%%%%%%%%%%%%%%%%%%%%%%%%%%%%%
\section{Simulations}\label{sec:simulations}

To illustrate KeLP's power in localizing signals among features we use two simulation settings: in the first, the features are generated with a block-diagonal variance covariance matrix and in the second they are genotypes of SNPs on chromosome 21 of the White Non-British population of the UK Biobank. 
We also explore KeLP's performance with structured outcomes, as in Figure \ref{fig:outcome_tree}. The code for the simulations is available at: \url{https://github.com/pmgblz/KeLP}.

\subsection{Block-diagonal correlation structure for features}\label{subsec:blockdiagonal}

We simulate $n$ observations of the outcome from $Y \mid X \sim N(X^T \beta, I_n)$, with $I_n$ the identity matrix. The feature vector $X \in \mathbb{R}^p$ is generated as $X \sim N(0, \Sigma)$, where $\Sigma$ is block-diagonal with blocks of size 5 and $p = 1,000$. Within each block, the correlation pattern follows an AR(1) process with $\Sigma_{jk} = 0.8^{|j-k|}$. This  provides a schematic representation of the dependence structure between DNA polymorphisms. Following further \citet{spector2023controlled}, there are $s \times p$ randomly chosen nonzero coefficients, rounded to the nearest integer, where $s \in (0, 1)$ denotes the sparsity. The nonzero coefficients are simulated as i.i.d. $N(0, \tau^2)$ with $\tau = 0.2$. Across the simulations, we vary the ratio $n/p$, starting at $n/p = 0.1$. 

We consider two levels of resolutions: the individual level (groups of size 1) and groups of size 5. The (group) knockoff variables are generated following the maximum entropy criteria \citep{spector2022powerful, knockoffjulia}; see appendix \ref{appendix:details_simulation} for details. 

We compare KeLP's rejection sets with those of the knockoff filter applied separately at each level of resolution (``Knockoffs individual'' and ``Knockoffs group''), as well as two other approaches to identify rejections across resolutions. The first (``Knockoffs outer'') consists in simply identifying the outer-nodes of the rejections that the knockoff filter makes at each resolution. Outer-nodes are those groups of variables with no rejected descendants, where a descendant is a group contained within the parent group (see Figure \ref{fig:KnockoffZoom}). Note that filtering the resolution-specific rejections by the knockoff filter to the outer-nodes does not have guarantees on FDR control. To obtain FDR control at level $\alpha$ on the filtered outer-nodes, it is possible to run e-BH on their corresponding knockoff e-values at the amended level $\widetilde{\alpha} = \alpha |S| / |\mathcal{H}|$, where $|S|$ denotes the number of filtered outer-nodes. This ensures self-consistency; see \citet{wang2022eBH}. We include this procedure (``e-BH knockoffs outer'') in our comparisons.

To evaluate performance, we look at FDP, power and precision of discoveries.
The realized FDP is calculated separately for each level of resolution for the knockoff filter (individual and group) and across the resolutions for the other methods.
We define ``power'' as the number of correctly rejected nonzero individual features (whether they were rejected individually or in a group), divided by the total number of nonzero individual features. 
To evaluate the precision of discoveries we look at the total number of features included in the rejected groups. For a given level of power, the most precise method points to the smallest number of features.

Figure \ref{fig:main_simulation} shows that the methods perform as expected in terms of FDR control. In terms of power and size of the rejection set, KeLP effectively ``interpolates'' between the individual and the group level. For lower levels of signal strength, KeLP rejects more groups than individual features, but achieves the precision of the individual level as signal strength increases. KeLP's power is very close to that of the outer-nodes of the knockoff filter applied separately at each resolution, but achieves FDR control. Refining the outer-nodes with e-BH leads to no power.

\begin{figure}[H]
\centering
\begin{tabular}{c}
 \includegraphics[scale = 0.4]{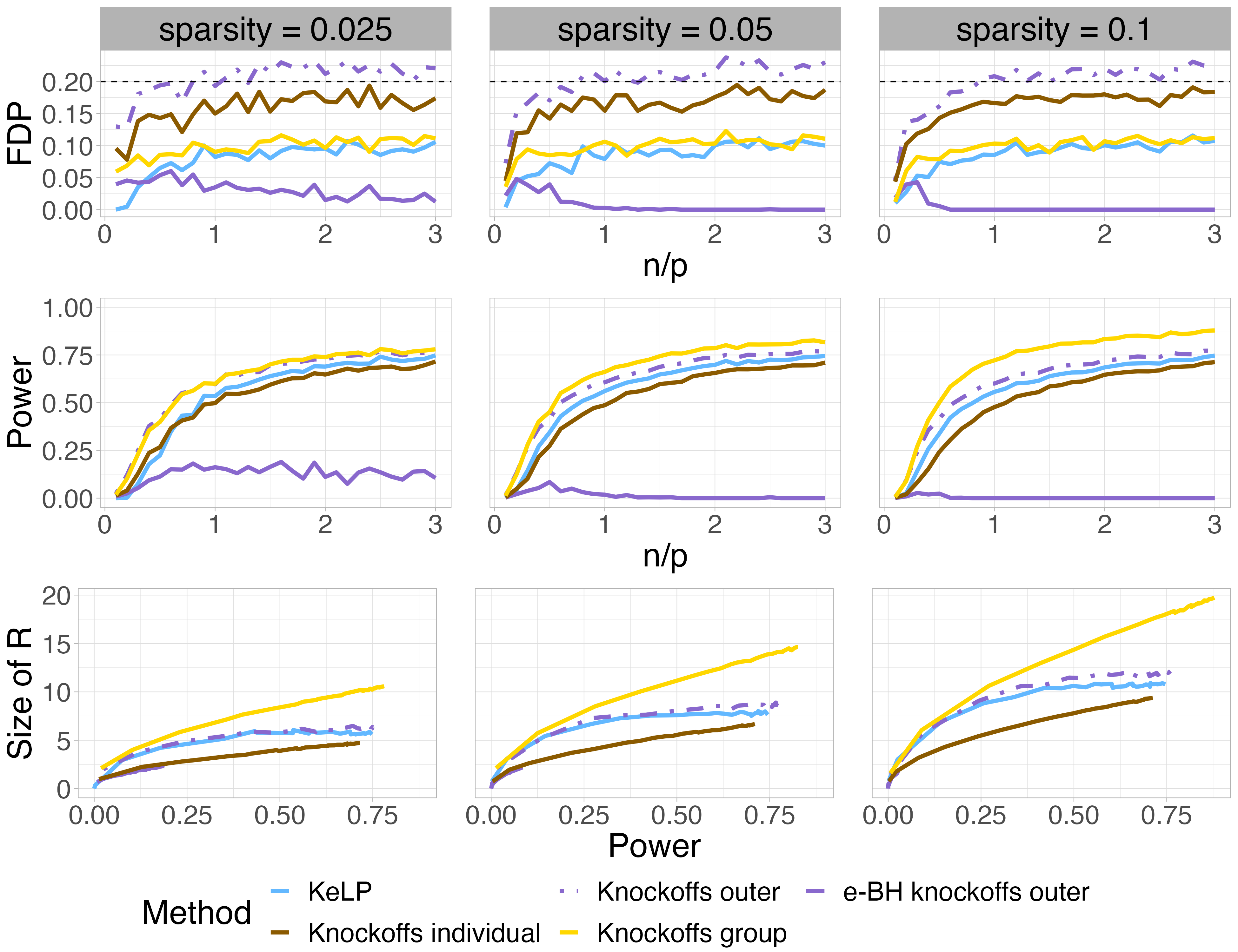} 
\end{tabular}
\caption{By row, FDP, power, and size of the rejection set, averaged across 100 simulation runs. By columns, different signal sparsities (corresponding signal to noise ratios are 1, 2 and 4, from left to right). Target FDR is indicated with a dashed line. The size of the rejection set is displayed as the square root of the total number of variables included in the rejected hypotheses. Broken lines are used for methodologies not expected to control FDR. Colors in the blue-purple spectrum indicate procedures whose rejection sets span across multiple resolutions. }
\label{fig:main_simulation}
\end{figure}

\subsection{Chromosome-wide simulation on the UK Biobank genotypes}\label{subsec:chromosome_simulation_UKB}

We again simulate $Y \mid X \sim N(X^T \beta, I_n)$. Here, $X$ are the genotypes of chromosome 21 of the unrelated White Non-British population of the UK Biobank with $p = 8,832$ and $n = 14,733$. We use the ancestries, resolutions and knockoffs defined by \citet{sesia2021populationstructureshapeit}; see also appendix \ref{appendix:details_ukb} for more details. We compare the same methods as in section \ref{subsec:blockdiagonal}. 
There are $s \times p$ randomly chosen nonzero coefficients across chromosome 21, rounded to the nearest integer. As $n$ and $p$ are fixed in this setting, to explore different signal strengths, we vary the absolute value of the nonzero elements $\beta$: these are set to $a / \sqrt{n}$, where $a$ denotes the signal amplitude. The signs of the nonzero coefficients are determined by independent coin flips.

\begin{figure}[H]
\centering
\begin{tabular}{c}
 \includegraphics[scale = 0.3]{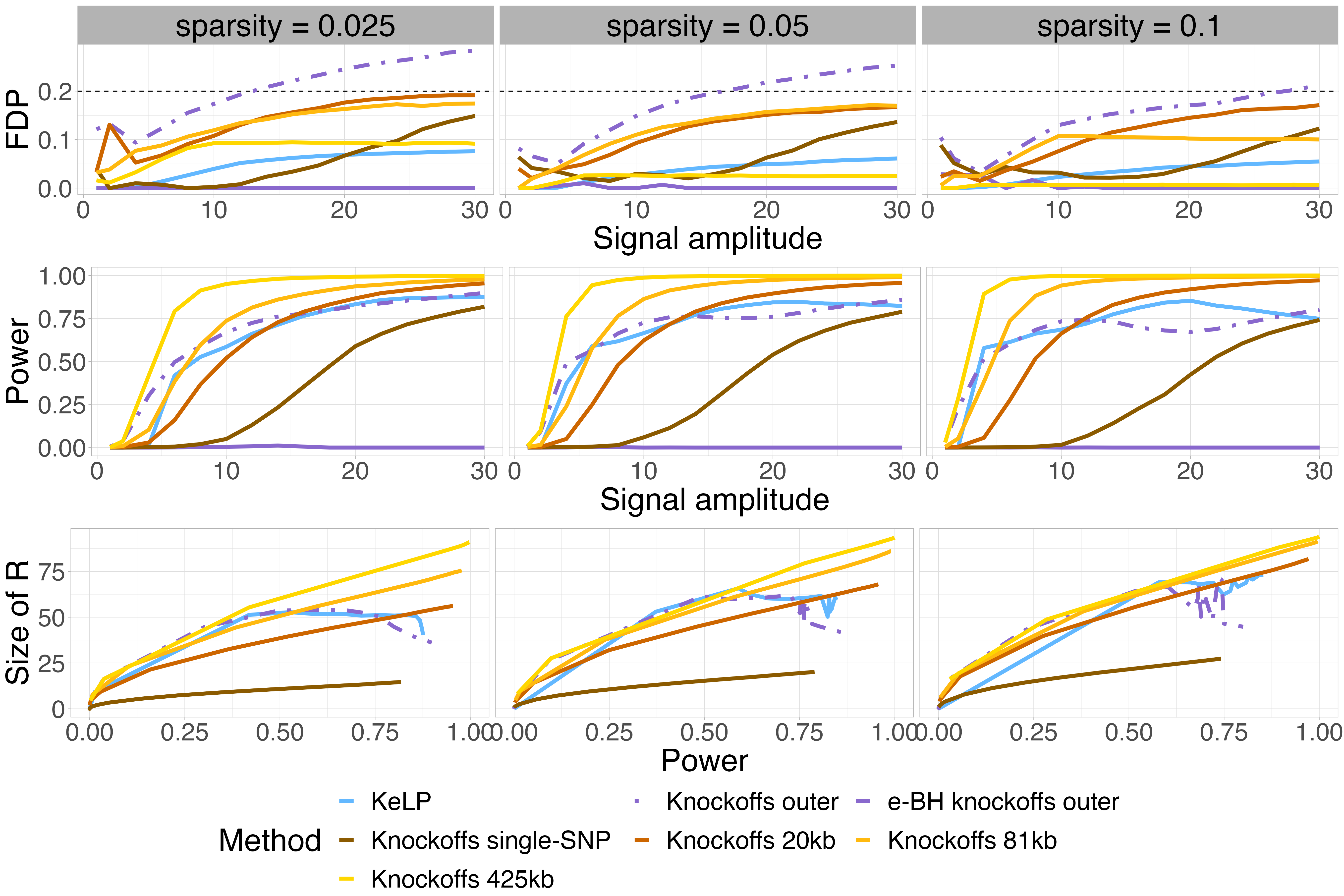} 
\end{tabular}
\caption{By row, FDP, power, and size of the rejection set, averaged across 25 simulation runs. By columns, different signal sparsities. Figure \ref{fig:UKB_simulation_snr} in appendix \ref{appendix:details_simulation} shows heritability ($\text{Var}(f(X)) / \text{Var}(Y) = \text{Var}(X \beta) / \text{Var}(X \beta + \epsilon)$) as a function of signal amplitude. See caption of Figure \ref{fig:main_simulation} for details. Resolutions are indicated by median group width in kilobases (kb); see \citet{sesia2021populationstructureshapeit} for further details.}
\label{fig:UKB_simulation}
\end{figure}

Figure \ref{fig:UKB_simulation} shows similar results to those in Figure \ref{fig:main_simulation}: KeLP leads to a rejection set comparable to that of knockoffs outer-nodes, while controlling the FDR at the desired level.

\subsection{Structured outcomes}\label{subsec:structured_outcomes_simulation}

We consider outcomes with a structure that corresponds to the tree in Figure \ref{fig:outcome_tree}. In particular, consider a setting where the outcomes corresponding to leaves refer to the most specific diseases. We denote by cases the proportion of people who have these diseases. Let $\ell \in \{1, .., L\}$ with $L = 7$ indicate the node. Each of the four leaves is simulated from a logistic model $Y_\ell \mid X \sim \text{Bernoulli} ((1 + e^{-(\delta_\ell + X^T \beta_{\ell})})^{-1})$ where $\delta_\ell$ is an intercept chosen to obtain an approximate percentage of cases of 15\% for each of the leaves. We simulate $X \in \mathbb{R}^p$ as $X \sim N(0, \Sigma)$, where $\Sigma$ has entries $\Sigma_{jk} = 0.3^{|j-k|}$ and $p = 1,000$. There are $s \times p$ randomly chosen nonzero coefficients, rounded to the nearest integer, where $s \in (0, 1)$ denotes the sparsity. We also set an overlap of 50\% for the nonzero features between each of the siblings corresponding to the same internal parent node. The magnitudes of the nonzero coefficients are simulated as i.i.d. $N(\mu, \tau^2)$ with $\mu = 1$ and $\tau = 0.5$. We define the binary outcomes corresponding to the parent nodes in the tree to be equal to $1$ whenever any of their descendants is equal to $1$. Across the simulations, we vary the ratio $n/p$, starting at $n/p = 0.1$. 

In this hierarchical setting, it makes sense to consider the properties of rejection sets at each level of the tree, or among outer-nodes. We compare the performance of KeLP, the combined results of the knockoff filter for the outcomes at each level of the tree (``knockoffs leaves'', ``knockoffs internal nodes'', ``knockoffs root node'') and the corresponding outer-nodes. 

The knockoff filter for the outcome corresponding to the root node has FDR control. However, the rejection set obtained by combining the results of the knockoff filter run separately for the two outcomes corresponding to internal nodes does not have FDR control (as we are simultaneously considering multiple outcomes). Neither does the rejection set that collects all discoveries made by the knockoff filter separately run on the leaf-outcomes. As before, to obtain FDR control, we pass these through a further e-BH procedure at the amended level $\widetilde{\alpha} = \alpha |S| / |\mathcal{H}|$ (``e-BH knockoffs leaves'', ``e-BH knockoffs internal''), where $|S|$ denotes the number of rejections by the knockoff filter collected across multiple outcomes, but within the same level of the tree, and $|\mathcal{H}|$ the total number of hypotheses at the specific level of the tree.

Consider the tree in Figure \ref{fig:outcome_tree}: each leaf-outcome is associated with a set of nonzero features. We call the sum of the number of nonzero features across the leaf-outcomes the ``total number of nonzero leaf-outcome associations''. Moreover, if, for example, feature $X_j$ is nonzero for outcome $A$, it is also nonzero for the outcome corresponding to the internal node $\{A, B\}$ and the root node $\{A, B, C, D\}$. Therefore, if $X_j$ is rejected at the level of the internal node or the root node, it is also counted as a correct rejection. We then define power as the total number of correct rejections divided by the total number of nonzero feature-outcome associations. We calculate power and FDR separately at each level of the tree for the rejection set obtained by combining the results of the knockoff filter at each level and across levels for KeLP and the knockoffs outer-nodes. 

The discovery of an association between $X_j$ and both $\{A\}$ and $\{B\}$ is more precise than the discovery of an association between $X_j$ and $\{ A,B\}$. In agreement with the weighting scheme we described in section \ref{subsec:structured_outcomes_description}, we measure precision in terms of number of ``individually'' associated outcomes. The rejection of the hypothesis of conditional independence between $X_j$ and each of the leaves $A,B,C,D$ adds 1 to this count; the rejection of an internal node adds 1/2, and the rejection of the root node adds 1/4.

\begin{figure}[H]
\centering
\begin{tabular}{c}
 \includegraphics[scale = 0.4]{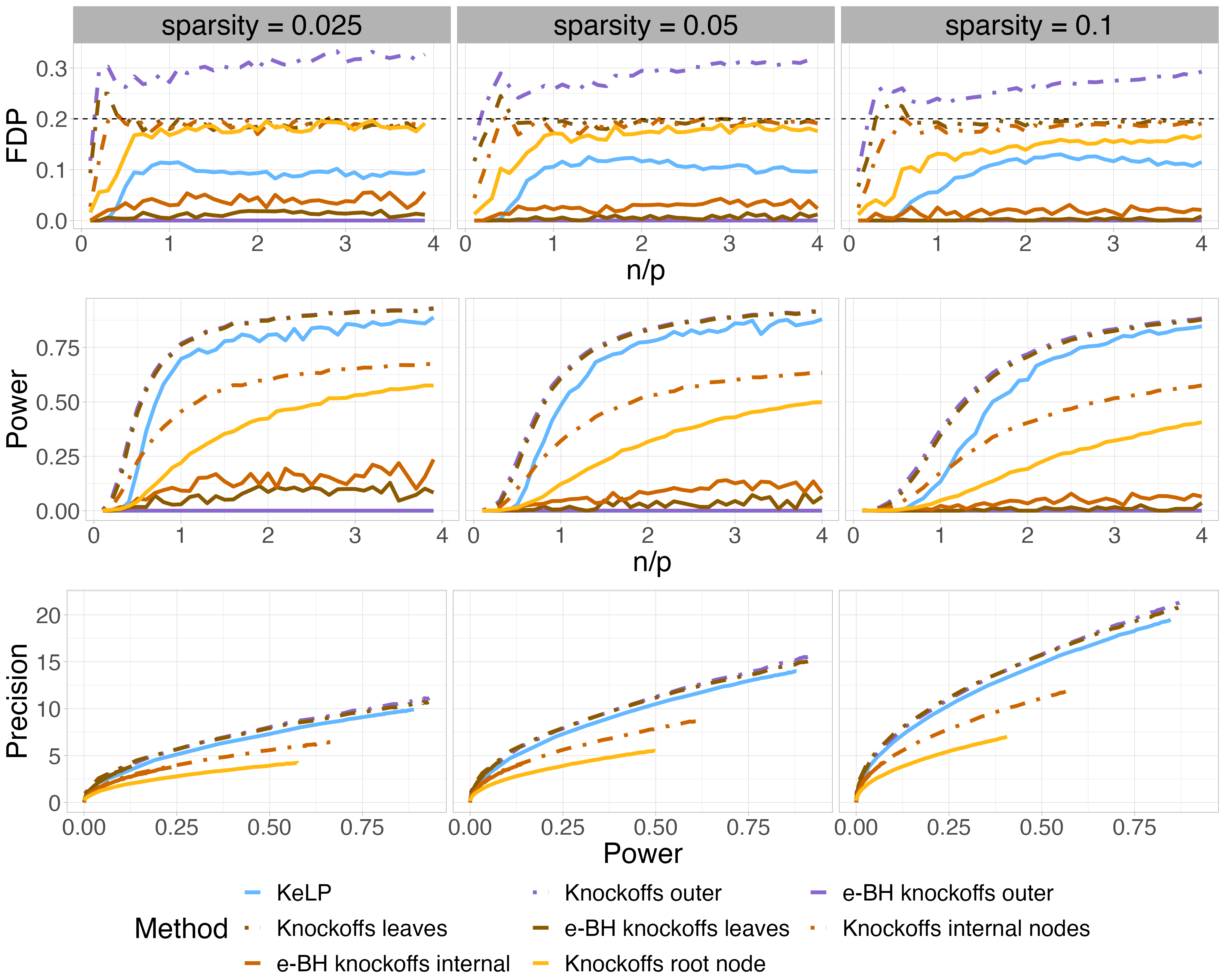} 
\end{tabular}
\caption{By row, FDP, power, and the square root of the number of singularly implicated outcomes (``precision''), averaged across 100 simulation runs. By columns, different signal sparsities. Target FDR is indicated with a dashed line. Broken lines are used for methodologies not expected to control FDR. Colors in the blue-purple spectrum indicate procedures whose rejection sets span across multiple levels of the tree. Note that, unlike in Figures \ref{fig:main_simulation} and \ref{fig:UKB_simulation}, higher values are preferred in the bottom panel. The power and precision curves for ``Knockoffs outer'' and ``Knockoffs leaves'' almost coincide.}
\label{fig:hierarchical_simulation}
\end{figure}

Figure \ref{fig:hierarchical_simulation} shows that KeLP has FDR control, as expected from the theoretical results in section \ref{sec:kelp}. Moreover, in terms of power and precision, it behaves similarly to the knockoffs outer-nodes. However, the knockoffs outer-nodes violate FDR control. Moreover, KeLP has highest power among all methods with theoretical FDR control (all solid lines in Figure \ref{fig:hierarchical_simulation}).

%%%%%%%%%%%%%%%%%%%%%%%%%%%%%%%%
\section{Application to the UK Biobank}\label{sec:application_ukb_height}

We use KeLP to analyze the relation between two phenotypes (height and platelets) and genetic variation in approximately 337k British unrelated individuals of the UK Biobank. The genotype data consists of about 592k autosomal SNPs. We consider seven different partitions of SNPs corresponding to different resolutions, as in \citet{sesia2021populationstructureshapeit} and rely on the knockoffs generated in that study to construct e-values (\ref{eq:main_evalue_definition}) for all group hypotheses. Our target FDR level is $\alpha = 0.1$.

Table \ref{tab:UKB_rejections_height} shows, stratified by resolution, the number of rejections and of implicated SNPs by (1) KeLP, (2) the knockoff filter applied to each resolution separately, and (3) the outer-nodes of the knockoffs rejections. Appendix \ref{appendix:height_portion_genome_implicated} provides a visualization of the KeLP rejection regions.

\begin{center}

\begin{ThreePartTable}
                  
	              \begin{TableNotes}
									\item British unrelated population. FDR level: $\alpha = 0.1$. The column ``Res. spec.'' (resolution specific) reports the results of the knockoff filter applied separately at each level of resolution (these rejection sets have FDR control at the resolution level, but are redundant across resolutions). The column ``Outer-nodes'' refers to the results of filtering these resolution-specific knockoffs rejections (which leads to a rejection set with no FDR control). Resolutions are indicated by median group width in kilobases (kb); see \citet{sesia2021populationstructureshapeit} for further details.

								\end{TableNotes}

								  \scriptsize
								\begin{longtable}{@{\extracolsep{2pt}} lllllllllll} 
								\caption{UK Biobank: height} 	
									\\[-1.8ex]\hline 
\hline \\[-1.8ex] 
& & \multicolumn{3}{c}{Number of rejections} & & \multicolumn{3}{c}{SNPs implicated}  \\ 
\cline{3-5} \cline{7-9} \\
Resolution &  & KeLP & Outer-nodes & Res.  spec.  &  & KeLP & Outer-nodes & Res.  spec.  \\ 
\hline \\[-1.8ex] 
single-SNP & $$ & $53$ & $53$ & $53$ & $$ & $53$ & $53$ & $53$ \\ 
3 kb & $$ & $148$ & $260$ & $313$ & $$ & $1,545$ & $2,549$ & $3,010$ \\ 
20 kb & $$ & $245$ & $932$ & $1,215$ & $$ & $6,587$ & $20,455$ & $26,909$ \\ 
41 kb & $$ & $345$ & $1,024$ & $2,139$ & $$ & $12,079$ & $31,480$ & $68,070$ \\ 
81 kb & $$ & $711$ & $891$ & $2,817$ & $$ & $34,715$ & $38,732$ & $132,493$ \\ 
208 kb & $$ & $555$ & $882$ & $3,159$ & $$ & $47,387$ & $65,595$ & $267,712$ \\ 
425 kb & $$ & $406$ & $371$ & $2,783$ & $$ & $55,338$ & $44,431$ & $396,238$ \\ 
\hline \\[-1.8ex] 
Total & $$ & $2,463$ & $4,413$ & $$ & $$ & $157,704$ & $203,295$ & $$ \\ 
\hline \\[-1.8ex] 
\insertTableNotes
\label{tab:UKB_rejections_height}

								\end{longtable}

\end{ThreePartTable}

\end{center}

To explore the validity of the KeLP rejections, we compare them to the discoveries in \citet{yengo2022saturated}, who find 12,111 independent SNPs that are significantly associated with height on the basis of a COJO (conditional and joint association analysis with summary statistics) analysis based on GWAS data of 5.4 million individuals. For this analysis, we consider groups which do not contain any of these SNPs (matching by name or location) as true nulls, and evaluate their discoveries as false. The derived estimate of FDP for KeLP is around 7.5\%, while that of the knockoffs outer-nodes is approximately 20\%---substantially higher than the target FDR level of 10\%. The rejections by KeLP include 4,121 SNPs out of the 12,111 SNPs in \citet{yengo2022saturated}, while 5,203 of these are discovered by the knockoffs outer-nodes: the increased FDP of this last approach comes with low power rewards.

We consider four continuous platelet-related phenotypes (platelet count, platelet crit, platelet width, platelet volume) and look for locations in the genome associated to any of them. To obtain global null e-values for each group at each level of resolution, we average the knockoff e-values across the outcomes; see appendix \ref{appendix:global_partial_description} for details. 
Table \ref{tab:UKB_rejections_global} summarizes the results.

\begin{center}

\begin{ThreePartTable}
                  
	              \begin{TableNotes}
									\item  See caption of Table \ref{tab:UKB_rejections_height} for detailed explanations.

								\end{TableNotes}

								  \scriptsize
								\begin{longtable}{@{\extracolsep{2pt}} llllllllll} 
								\caption{UK Biobank: platelet global null}	
									\\[-1.8ex]\hline 
\hline \\[-1.8ex] 
& & \multicolumn{3}{c}{Number of rejections} & & \multicolumn{3}{c}{SNPs implicated}  \\ 
\cline{3-5} \cline{7-9} \\
Resolution &  & KeLP & Outer-nodes & Res.  spec.  &  & KeLP & Outer-nodes & Res.  spec.  \\ 
\hline \\[-1.8ex] 
single-SNP &   & 116 & 0 & 0 &   & 116 & 0 & 0 \\ 
3 kb &   & 77 & 96 & 96 &   & 783 & 954 & 954 \\ 
20 kb &   & 286 & 354 & 438 &   & 7,162 & 8,446 & 10,531 \\ 
41 kb &   & 143 & 146 & 539 &   & 5,003 & 5,123 & 19,010 \\ 
81 kb &   & 528 & 534 & 1,071 &   & 28,597 & 28,959 & 57,813 \\ 
208 kb &   & 334 & 334 & 1,268 &   & 30,815 & 30,815 & 123,340 \\ 
425 kb &   & 207 & 207 & 1,275 &   & 28,816 & 28,816 & 204,169 \\ 
\hline \\[-1.8ex] 
Total &   & 1,691 & 1,671 &   &   & 101,292 & 103,113 &   \\ 
\hline \\[-1.8ex] 
\insertTableNotes
\label{tab:UKB_rejections_global}

								\end{longtable}

\end{ThreePartTable}

\end{center}

It is interesting to remark that KeLP can increase power for higher levels of resolutions compared to the standard ``resolution specific'' analysis. Indeed, the rejections by KeLP are not necessarily a subset of the outer-nodes of these. KeLP considers all hypotheses in the multi-resolution family jointly, which can result in a different threshold required for self-consistency than a resolution-specific threshold; see equation (\ref{eq:self_consistency_threshold}). The fact that a number of low resolution hypotheses are discovered, can then lower the threshold for high resolution ones to enter the rejection set.

%%%%%%%%%%%%%%%%%%%%%%%%%%%%%%%%
\section{Conclusion}\label{sec:conclusion}

We developed a method to localize signals at the smallest level of resolution possible while controlling frequentist FDR leveraging the concept of e-values. In particular, we focused on relaxed multi-resolution e-values from the knockoff filter. We have shown that our method can be used to parse over groups of features or groups of outcomes. Our method showed desirable performance in simulations and in an application to the UK Biobank.
It successfully navigated the tradeoff between resolution and power, adaptively choosing a level of resolution that enables discoveries while maximizing their precision and controlling FDR.

The rejections of KeLP are not necessarily a subset of the outer-nodes of resolution-specific knockoffs filters. Indeed, by looking simultaneously across resolutions, and allowing discoveries of different levels of precision, KeLP can lead to a larger number of high resolution discoveries.

We observed in our experiments that KeLP achieves an empirical FDR that is usually lower than the target level. 
This might be connected to the self-consistency requirement, which, as stated in section \ref{sec:elp}, implicitly estimates the number of false discoveries as a fraction of all tested hypotheses, even though our non-redundancy requirement could never result in these many rejections. The investigation of sharper bounds that would result in higher power might be the object of future work. Another possible direction for future research is to investigate whether logical constraints could be utilized for higher power, similar to \citet{meijer2015multiple} or \citet{ramdas2017dagger}.

Our results include methods to test conditional independence partial conjunction hypotheses using the same sample and a new version of the multilayer knockoff filter. All together, they underscore the versatility of e-values and provide examples where the technical elegance of this methodology is accompanied by substantial power.

% BIBLIOGRAPHIE
\clearpage
\addcontentsline{toc}{section}{References}
\bibliographystyle{rss} % English Bib-Style
\normalsize 
\bibliography{lit}

\newpage
\setcounter{page}{1}
\appendix
\setcounter{equation}{0}
\renewcommand\theequation{S.\arabic{equation}}

\setcounter{theorem}{0}
\renewcommand{\thetheorem}{S.\arabic{theorem}}

\renewcommand{\thealgocf}{S.\arabic{algocf}}

\setcounter{ProcedureDef}{0}
\renewcommand{\theProcedureDef}{S.\arabic{ProcedureDef}}

\begin{appendices}

%%%%%%%%%%%%%%%%%%%%%%%%%%%%%%%
\section{Proof of Theorem \ref{thm:main_thm_elp}}\label{appendix:proofs_main}

\begin{proof}

The proof is essentially identical to the proof of proposition 2 in \citet{wang2022eBH}, but is included for completeness.

Let $\mathcal{F}$ be a self-consistent e-testing procedure and $\mathbf{e} = (e_1, ..., e_{n})$ be an arbitrary vector relaxed e-variables satisfying $\sum_{j:H_j \in \mathcal{H}_0} \mathbb{E} [e_j] \leq n$. 

This first part is identical to the proof of proposition 2 in \citet{wang2022eBH}:

\begin{align*}
    \text{FDP}(\mathcal{F}) &= \frac{|\mathcal{F}(\mathbf{e}) \cap \mathcal{H}_0|}{R_{\mathcal{F}} \vee 1} \\
    &= \sum_{j:H_j \in \mathcal{H}_0} \frac{\mathbf{1} \{j \in \mathcal{F}(\mathbf{e}) \}}{R_{\mathcal{F}} \vee 1} \\
    &\leq \sum_{j:H_j \in \mathcal{H}_0} \frac{\mathbf{1} \{j \in \mathcal{F}(\mathbf{e}) \} \alpha e_j}{n} \\
    &\leq \sum_{j:H_j \in \mathcal{H}_0} \frac{\alpha e_j}{n} 
\end{align*}

Further following \citet{wang2022eBH}, but using knockoff e-values we get: 

\begin{align*}
    \text{FDR}(\mathcal{F}) &= \mathbb{E} \Big[ \frac{|\mathcal{F}(\mathbf{e}) \cap \mathcal{H}_0|}{R_{\mathcal{F}} \vee 1} \Big] \\
    &\leq \mathbb{E} \Big[ \sum_{j:H_j \in \mathcal{H}_0} \frac{\alpha e_j}{n} \Big] \\
    &=  \frac{\alpha}{n} \sum_{j:H_j \in \mathcal{H}_0} \mathbb{E} \Big[  e_j \Big] \\
    &\leq \alpha
\end{align*}

The only change to the proof in \citet{wang2022eBH} is the last inequality. Here, the last inequality is because $\sum_{j:H_j \in \mathcal{H}_0} \mathbb{E} \Big[  e_j \Big] \leq n$ by assumption.

\end{proof}

%%%%%%%%%%%%%%%%%%%%%%%%%%%%%%%
\section{Focused e-BH vs eLP}\label{appendix:focusedeBH_vs_kelp}

Following \citet{katsevich2021filtering}, let $\mathcal{R} \subseteq \mathcal{H}$ denote a collection of rejections from a multiple comparison procedure. Further, let $\mathfrak{F}$ be a filter, a function that selects some elements of the rejection set (according to their level of interest, for example, or to ensure non redundancy): $\mathfrak{F}: (\mathcal{R}, \mathbf{e}) \mapsto \mathcal{U} \subseteq \mathcal{H}$ with $\mathcal{U} \subseteq \mathcal{R}$.

We then define:

\begin{algorithm}[H]
\label{alg:focusedeBH}
\caption{\textbf{Focused e-BH}}
\KwIn{Vector of e-values $\mathbf{e} = (e_1, ..., e_{|\mathcal{A}|}$), filter $\mathfrak{F}$.}
\KwOut{Filtered rejection set $\mathcal{U}^*$}
\For{$t \in \{e_1, ..., e_{|\mathcal{A}|}, \infty \}$}{
  Compute $t \times |\mathfrak{F}(\{A \in \mathcal{A}: e_A \geq t \}, \mathbf{e}) | \vee 1)$  
}
$t^* \gets \inf \Bigl\{ t \in \{e_1, ..., e_{|\mathcal{A}|}, \infty\}: t \times |\mathfrak{F}(\{A \in \mathcal{A}: e_A \geq t \}, \mathbf{e}) |\geq |\mathcal{A}| / \alpha \Bigl\}$ \;
 Find the base rejection set $\mathcal{R}^* = \{A: e_A \geq t^* \}$\;
 Compute $\mathcal{U}^* = \mathfrak{F}(\mathcal{R}^*, \mathbf{e})$
\end{algorithm}

This algorithm fulfills self-consistency by definition and hence controls the FDR at level $\alpha$ without making any assumptions on the dependence among test statistics, structure of hypotheses or form of filter, as alluded to in \citet{wang2022eBH}. In contrast, the analogous Focused BH procedure by \citet{katsevich2021filtering} controls FDR when the filter fulfills some monotonicity requirements and/or the p-values are PRDS. While \citet{katsevich2021filtering} also use reshaping \citep{benjamini2001control} to define a version of the procedure that is valid more generally, the resulting rejection set is conservative.

It is interesting to note that there is a close connection between eLP and Focused e-BH. For example, consider a multiresolution family of hypotheses defined by nested partitions, so that logical relations between hypotheses can be described by a tree. Further, consider two procedures: eLP with weights $w(A)$ decreasing in group size $|A|$ and Focused e-BH with the outer-nodes filter, which selects the rejections in $\cal R$ that are not parent to any other rejection. Let ${\cal R}_{on}$ be the output of Focused e-BH with the outer-nodes filter: ${\cal R}_{on}$ is non-redundant by definition; moreover, by focusing on the outer-nodes, ${\cal R}_{on}$ also selects the rejections with highest weight $w(A)$ among redundant sets, which is the goal of eLP. The two procedures, then, lead to the same set of rejections.

The outer-nodes filter is one example of a filtering procedure that can be described as a weighted maximization problem subject to constraints. Focused e-BH in combination with these filters results in the same set of rejections as eLP (where the constraints and weights for eLP could possibly be modified from Procedure \ref{def:lp_eLP}). Below, we focus on filters with a non-redundancy constraint, as this is one of the main motivating observations in \citet{katsevich2021filtering}.

Consider the for-loop in the Focused e-BH procedure (Algorithm S.\ref{alg:focusedeBH}) and the definition of eLP (Procedure \ref{def:lp_eLP}). We can write the inside of the for-loop as a maximization problem. For any $t \in \{e_1, ..., e_{|\mathcal{A}|}, \infty\}$, applying a filter that can be described as a weighted maximization problem with a non-redundancy constraint is equivalent to 

\color{black}
 
\begin{subequations}
\label{eq:optim_loop_kelp_febh}
\begin{align}
  \max_{\{x_{A}\}_{A \in \mathcal{A}}}
    & \quad \sum_{A \in \mathcal{A}} w(A) x_{A} \label{eq:kelp_objective_febh_loop}\\
  \text{subject to} 
  & \quad x_A \in \{0, 1 \} \quad \forall A \in \mathcal{A} \label{eq:kelp_binary_febh_loop}\\
    & \quad e_A \times x_A \geq t \times x_A \quad \forall A \in \mathcal{A}, \label{eq:kelp_febh_rejection_threshold}\\
    & \quad \sum_{A \in \mathcal{A}: j \in A} x_A \leq 1 \quad \forall j \in \{1, ..., p\}, \label{eq:kelp_location_constraint_febh_loop}.
\end{align}
\end{subequations}

Here, constraint (\ref{eq:kelp_febh_rejection_threshold}) does not control FDR, but ensures to only reject hypotheses with e-values $e_A \geq t$.
Under the constraint that every rejected e-value fulfills $e_A \geq t$ and that the rejections are non-redundant (\ref{eq:kelp_location_constraint_febh_loop}), the objective in (\ref{eq:kelp_objective_febh_loop}) maximizes the sum of weights of the rejected groups. For example, consider the outer-nodes filter: among all sets of groups $A \in \mathcal{A}$ with $e_A \geq t$ that would fulfill non-redundancy (\ref{eq:kelp_location_constraint_febh_loop}), the outer-nodes maximize $\sum_{A \in \mathcal{A}} w(A) x_A$ by definition. 

Next, consider choosing $t^*$ (e.g. with the outer-nodes filter for nested groups of hypotheses) in the Focused e-BH procedure. By definition, this chooses the largest number of rejections, controlling the FDR; see algorithm S.\ref{alg:focusedeBH}. Suppose that eLP and Focused e-BH were to choose two different thresholds, say $t^* = t_{\text{F-eBH}} < t_{\text{eLP}}$, which both control the FDR and fulfill the non-redundancy constraint. The objective of eLP (\ref{eq:eLP_objective}) is similarly to maximize the number of weighted rejections. Thus, we can achieve at least the same objective (\ref{eq:eLP_objective}) with $t_{\text{F-eBH}}$ compared to $t_{\text{eLP}}$. Therefore, choosing $t^*$ means that (\ref{eq:eLP_objective}) is maximized as well. 

While there exist more filters than those that can be described by eLP, the combination of a weighted maximization problem subject to a non-redundancy constraint is a constructive way to describe meaningful filters. For example, it allows to generalize the notion of the outer-nodes filter to non-nested partitions. In addition, eLP can be more computationally efficient than Focused BH, which loops through every possible input e-value.

%%%%%%%%%%%%%%%%%%%%%%%%%%%%%%%
\section{Partial conjunctions}\label{appendix:global_partial_description}

For simplicity, consider first the case of a single level of resolution. We will therefore drop the subscript $m$ indicating the resolution $m \in \mathcal{M}$ in the following. 

Suppose that there are $L$ outcomes and consider $L \geq 1$ conditional independence hypotheses of the form 

\begin{equation}\label{eq:individual_resolution_outcome_cond_indep_null}
     H_{0}^{\ell} (A): Y_{\ell} \indep X_{A} \mid X_{- A}  \text{ versus } H_{1}^{\ell} (A): Y_{\ell} \not\!\perp\!\!\!\perp X_{A} \mid X_{- A}
\end{equation}

for each $\ell \in \{1, ..., L\}$ and each $A \in \mathcal{A}$. For simplicity of notation, we will refer to the hypothesis tested at location $A$ as $H_A^{\ell}$. Further, let $\mathcal{H}_0^{\ell} = \{H_A^{\ell}: H_{0}^{\ell} (A) \text{ is true} \}$ be the outcome-specific set of true null hypotheses. 

Let $v(A) \in \{1, ..., L\}$ denote the number of false conditional independence nulls (\ref{eq:individual_resolution_outcome_cond_indep_null}) at location $A \in \mathcal{A}$. Following \citet{benjamini2008screeningpartialconjunction}, we are interested in testing

\begin{equation}\label{eq:partial_conjunction_null}
  H_0^{u/L}(A): v(A) < u  \text{ versus } H_1^{u/L}(A): v(A) \geq u
 \end{equation}

for each $u \in \{1, ..., L\}$ and each $A \in \mathcal{A}$. We will refer to the partial conjunction hypothesis tested at location $A$ as $H_A^{u/L}$. Further, let $\mathcal{H}_0^{u/L} = \{H_A^{u/L}: H_0^{u/L}(A) \text{ is true} \} $ be the set of true partial conjunction nulls. For example, if $u = 1$, we are testing the global null hypothesis that a particular group $A \in \mathcal{A}$ is not conditionally associated to \textit{any} outcome.

We will first consider testing (\ref{eq:partial_conjunction_null}) with regular e-values in section \ref{subsec:partial_conjunction_reg_evalues}. In section \ref{subsec:partial_conjunction_knockoff_evalues} we consider e-values from the knockoff filter and discuss how to use them with KeLP.

\subsection{Partial conjunctions with regular e-values}\label{subsec:partial_conjunction_reg_evalues}

Consider testing (\ref{eq:partial_conjunction_null}) with regular e-values, as defined in section \ref{sec:elp}. The argument below closely follows \citet{benjamini2008screeningpartialconjunction}. We start by discussing how to combine $L$ regular e-values for outcome-specific hypotheses $H_A^\ell$, $\ell \in \{1, ..., L\}$, for group $A \in \mathcal{A}$ to obtain a valid partial conjunction e-value for group $A \in \mathcal{A}$. 

Following \citet{benjamini2008screeningpartialconjunction}, consider a vector of $L$ outcome-specific e-values at a specific location $A$ where the partial conjunction null holds, that is $A \in \mathcal{H}_0^{u/L}$. Further, for such a vector of e-values, let the first $1, \ldots, L-u+1$ e-values correspond to those where the outcome-specific null hypotheses are true. 
Then, let $U_1, ..., U_{L - u + 1}$ be e-values for $i \in \{1, ..., L - u + 1 \}$ and let $e_1, ... e_{u-1}$ be the other e-values. Let $e^{u/L} = f(U_1, ..., U_{L - u + 1}, e_1, ..., e_{u-1})$ be the e-value for the partial conjunction hypotheses using a combining method $f$. As long as $f$ is nondecreasing in all components, 

\begin{equation*}
    f(U_1, ..., U_{L - u + 1}, h_1(e_1), ..., h_{u-1}(e_{u-1})) \geq f(U_1, ..., U_{L - u + 1}, e_1, ..., e_{u-1})
\end{equation*}

will hold for functions $h_i(x) \geq x, i = 1, ..., u - 1$.

\begin{lemma}\label{lemma:flipped_partial_conjunction_lemma}
Under $H_0^{u/L}(A)$, let $h_i(e_i) \geq e_i$ for some function $h_i(\cdot)$, $i = 1, ..., u - 1$ and let 
\begin{eqnarray*} 
e_{*}^{u/L}(A) & = & f(U_1, ..., U_{L - u + 1}, h_1(e_1), ..., h_{u-1}(e_{u-1}))\\
e^{u/L}(A) & = & f(U_1, ..., U_{L - u + 1}, e_1, ..., e_{u-1})\end{eqnarray*} for a function $f$ nondecreasing in all components. Then $e_{*}^{u/L}(A) \succeq e^{u/L}(A)$.

\end{lemma}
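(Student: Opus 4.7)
My plan is to reduce the statement to a pointwise inequality between the two random variables and then invoke the elementary fact that pointwise dominance implies stochastic dominance. The assumption $H_0^{u/L}(A)$ plays no active role in the argument itself---it only justifies the labeling convention in which the first $L-u+1$ arguments are the ``null'' e-values $U_1,\ldots,U_{L-u+1}$ and the remaining $u-1$ are treated as the potentially non-null $e_1,\ldots,e_{u-1}$.

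First, I would fix an arbitrary realization of the random vector $(U_1,\ldots,U_{L-u+1},e_1,\ldots,e_{u-1})$. For each index $i\in\{1,\ldots,u-1\}$ the assumption gives the pointwise inequality $h_i(e_i)\ge e_i$. Since $f$ is nondecreasing in each of its arguments, I can replace the last $u-1$ coordinates one at a time by the (weakly) larger values $h_i(e_i)$; after $u-1$ such replacements,
\begin{equation*}
f(U_1,\ldots,U_{L-u+1},h_1(e_1),\ldots,h_{u-1}(e_{u-1}))\;\ge\;f(U_1,\ldots,U_{L-u+1},e_1,\ldots,e_{u-1}).
\end{equation*}
This holds for every realization, so as random variables $e_*^{u/L}(A)\ge e^{u/L}(A)$ almost surely.

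Second, I would translate this almost sure inequality into the stochastic dominance $e_*^{u/L}(A)\succeq e^{u/L}(A)$. For every threshold $t\in\mathbb{R}$, the event $\{e^{u/L}(A)>t\}$ is contained in $\{e_*^{u/L}(A)>t\}$, hence $\mathbb{P}(e_*^{u/L}(A)>t)\ge\mathbb{P}(e^{u/L}(A)>t)$, which is exactly stochastic dominance in the first-order sense.

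There is really no genuine obstacle here: the lemma is a bookkeeping step whose content is just ``coordinatewise monotonicity plus coordinatewise dominance implies dominance of the output.'' The one place I would be slightly careful is in being explicit that the inequality is pointwise and does not rely on any independence or distributional assumption on the $U_i$ or $e_i$ or on the form of $f$ beyond monotonicity in each argument. This flexibility is what will make the lemma useful in the subsequent construction of partial conjunction e-values, where the $h_i$'s will be chosen to enlarge the non-null coordinates up to a quantity whose expectation under the null can be bounded, yielding a valid partial conjunction e-value via the stochastic dominance established here.
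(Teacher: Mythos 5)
Your proof is correct and follows essentially the same route as the paper's: establish the pointwise inequality $f(U_1,\ldots,U_{L-u+1},h_1(e_1),\ldots,h_{u-1}(e_{u-1}))\geq f(U_1,\ldots,U_{L-u+1},e_1,\ldots,e_{u-1})$ from coordinatewise monotonicity of $f$ together with $h_i(e_i)\geq e_i$, and then conclude stochastic dominance from the resulting event inclusion. The only cosmetic difference is that the paper phrases the dominance via events $\{e^{u/L}(A)\geq x\}$ for $x\in[0,\infty]$ (with the convention $\infty\geq\infty$) rather than $\{\,\cdot>t\}$, which changes nothing of substance.
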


Let $x \in [0, \infty]$ and define $\infty \geq \infty$.
If the event $\{e^{u/L}(A) \geq x \}$ occurs, then the event $\{f(U_1, ..., U_{L - u + 1}, h_1(e_1), ..., h_{u-1}(e_{u-1})) \geq x \}$ occurs, which proves the lemma as in \citet{benjamini2008screeningpartialconjunction}. 

Using the argument in \citet{benjamini2008screeningpartialconjunction}, under $H_0^{u/L}(A)$, the random variable $e^{u/L}(A) = f(U_1, ..., U_{L - u + 1}, e_1, ..., e_{u-1})$ with fixed $U_1, ..., U_{L - u + 1}$ satisfying $E(U_i) \leq 1$, is stochastically largest when each of the variables $e_1, ..., e_{u-1}$ is $\infty$ with probability one. Hence, we would like to make sure that in this case $\mathbb{E}[e_*^{u/L}(A)] \leq 1$, since then $1 \geq \mathbb{E}[e_*^{u/L}(A)] \geq \mathbb{E}[e^{u/L}(A)]$.

If $e_*^{u/L}(A)$ depends only on the $L - u + 1$ smallest e-values, for example by taking the average of the smallest $L - u + 1$ e-values, we would obtain $1 \geq \mathbb{E}[e_*^{u/L}(A)]$.

The pooled e-value would then be valid if it uses a combination function that satisfies: 

\begin{equation*}
    \mathbb{E} [f(U_1, ..., U_{L - u + 1}, \infty, ..., \infty)] \leq 1 
\end{equation*}

In summary, for regular e-values, we could just take the average of the smallest $L - u + 1$ e-values to test the partial conjunction hypothesis at a group $A$. 

Intuitively, this matches the result of \citet{benjamini2008screeningpartialconjunction} for p-values, who focus on the largest $L-u+1$ p-values to find partial conjunction p-values. While other e-value combination methods could also be applied, such as a Simes e-value, we choose the arithmetic mean since it essentially dominates any symmetric e-merging function for the global null as shown by \citet{vovk2021valuescalibrationcombination}. Note also that the average e-value is an e-value regardless of the dependency structure between the e-values by linearity of expectation; see also \citet{vovk2021valuescalibrationcombination}.

The partial conjunction e-values stemming from regular e-values discussed in this section can be directly used in eLP; see section \ref{sec:elp}. We next discuss how to construct partial conjunction e-values based on the knockoff filter.

\subsection{Partial conjunction hypotheses with knockoff e-values}\label{subsec:partial_conjunction_knockoff_evalues}

We first show that partial conjunction e-values derived from the knockoff filter are not strictly e-values in the sense that the expected value under the null is bounded by 1 for every individual partial conjunction e-value, but instead fulfill a more relaxed condition. Then we consider how to use these knockoff partial conjunction e-values in KeLP.

To derive conditions for partial conjunction e-values from the knockoff filter, for simplicity of notation, we first focus on a single level of resolution in theorem \ref{thm:partial_conjunction_knockoff_evalue_condition}. We therefore first drop the subscript indicating the level of resolution - as such, in theorem \ref{thm:partial_conjunction_knockoff_evalue_condition}, $c$ replaces $c_m$ in equation (\ref{eq:main_evalue_definition}).

\begin{theorem}\label{thm:partial_conjunction_knockoff_evalue_condition}

  Given a single-resolution partition $\mathcal{A}$, we consider the set of partial conjunction hypotheses $\mathcal{H}^{u/L}$ as a family of hypotheses addressed by KeLP. We assume that valid $e_A^{\ell}$ fulfilling the conditions described in Procedure \ref{def:lp_KeLP} are available for every $A \in \mathcal{A}$ and any $\ell \in \{1, ..., L\}$. For any $A \in \mathcal{A}$, let $\bar{e}^{u/L}_{A} := \frac{1}{L - u + 1} \sum_{i = 1}^{L - u + 1} e_{A}^{[u - 1 + i]}$ , where $e_{A}^{[1]}, ..., e_{A}^{[L]}$ denote the sorted knockoff e-values from largest to smallest. Let $c$ denote the fixed multiplier as in the definition of knockoff e-values in equation (\ref{eq:main_evalue_definition}). Then these partial conjunction e-values fulfill
    
    \begin{equation*}
        \sum_{A: H_A^{u/L} \in \mathcal{H}_0^{u/L}} \mathbb{E} \Big[\bar{e}^{u/L}_{A} \Big] \leq  c \times \frac{L}{L - u + 1}
    \end{equation*}

\end{theorem}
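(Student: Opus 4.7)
The plan is to exploit two facts together: (i) under the partial conjunction null $H_A^{u/L}$, at least $L-u+1$ of the outcome-specific nulls $H_A^\ell$ are true; and (ii) for each outcome $\ell$, the single-resolution knockoff e-values satisfy the relaxed bound $\sum_{A: H_A^\ell \in \mathcal{H}_0^\ell} \mathbb{E}[e_A^\ell] \leq c$, which is precisely the property guaranteeing that knockoff e-values can be used as input to KeLP, as in equation (\ref{eq:knockoff_evalue_fulfill}) for a single resolution.

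First I would fix, for every $A$ with $H_A^{u/L} \in \mathcal{H}_0^{u/L}$, a deterministic subset $\mathcal{L}_0^*(A) \subseteq \mathcal{L}_0(A) := \{\ell: H_A^\ell \in \mathcal{H}_0^\ell\}$ of size exactly $L-u+1$; such a subset exists because $v(A) < u$ forces $|\mathcal{L}_0(A)| \geq L-u+1$. Since $\mathcal{L}_0^*(A)$ depends only on which hypotheses are true and not on the observed e-values, it can be pulled out of expectations without difficulty.

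Next I would use the elementary pointwise inequality
\begin{equation*}
\sum_{i=1}^{L-u+1} e_A^{[u-1+i]} \;\leq\; \sum_{\ell \in \mathcal{L}_0^*(A)} e_A^\ell,
\end{equation*}
which simply says that the sum of the $L-u+1$ smallest e-values at location $A$ is no larger than the sum over any fixed size-$(L-u+1)$ subset. Dividing by $L-u+1$ yields $\bar{e}_A^{u/L} \leq \frac{1}{L-u+1}\sum_{\ell \in \mathcal{L}_0^*(A)} e_A^\ell$. Taking expectations, summing over $A$ with $H_A^{u/L} \in \mathcal{H}_0^{u/L}$, and swapping the order of summation gives
\begin{equation*}
\sum_{A: H_A^{u/L} \in \mathcal{H}_0^{u/L}} \mathbb{E}[\bar{e}_A^{u/L}] \;\leq\; \frac{1}{L-u+1}\sum_{\ell=1}^{L}\sum_{A \in T_\ell} \mathbb{E}[e_A^\ell],
\end{equation*}
where $T_\ell := \{A: H_A^{u/L} \in \mathcal{H}_0^{u/L}, \ \ell \in \mathcal{L}_0^*(A)\}$. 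By construction $T_\ell \subseteq \{A: H_A^\ell \in \mathcal{H}_0^\ell\}$, so applying the relaxed knockoff-e-value bound to each inner sum produces $L \cdot c$ in the numerator and delivers the claimed $cL/(L-u+1)$.

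The main (mild) obstacle is verifying that the choice of $\mathcal{L}_0^*(A)$ is indeed non-random so the expectation can cross the selection, and ensuring the order-of-summation swap lands the running index $A$ inside the correct outcome-specific null set $\mathcal{H}_0^\ell$. Once that accounting is done, the argument is essentially the e-value analogue of the classical partial conjunction trick from \citet{benjamini2008screeningpartialconjunction} already encoded in Lemma \ref{lemma:flipped_partial_conjunction_lemma}, but phrased in terms of summed expectations of relaxed e-values rather than stochastic dominance of p-values.
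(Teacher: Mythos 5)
Your proposal is correct, and it is a genuinely different --- and cleaner --- argument than the one in the paper. You fix, at each location $A$ with $H_A^{u/L}\in\mathcal{H}_0^{u/L}$, a deterministic size-$(L-u+1)$ subset $\mathcal{L}_0^*(A)$ of true outcome-specific nulls (which exists since $v(A)<u$), use the pointwise fact that the sum of the $L-u+1$ smallest e-values is dominated by the sum over any fixed subset of that size, and then swap the order of summation so that each outcome $\ell$ contributes at most its full null budget $\sum_{A:H_A^\ell\in\mathcal{H}_0^\ell}\mathbb{E}[e_A^\ell]\leq c$ exactly once; the nonnegativity of the knockoff e-values justifies enlarging $T_\ell$ to the full outcome-$\ell$ null set. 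This is rigorous, uniform in $u$, and makes the source of the $L/(L-u+1)$ inflation transparent (an $L$-fold budget divided by an $(L-u+1)$-fold average). The paper instead treats $u=1$ by a direct sum swap over the untrimmed average and then handles $u>1$ via an informal extremal analysis of how the per-outcome budget $c$ could concentrate on single locations, distinguishing whether those locations coincide across outcomes; that discussion is less tight as a proof but is what motivates the paper's subsequent practical remark that the $L/(L-u+1)$ adjustment is rarely needed, since the worst case requires the concentrated null locations to be distinct across all $L$ outcomes. Your only implicit assumptions --- that $\mathcal{L}_0^*(A)$ is non-random (it depends only on which hypotheses are true) and that the e-values are nonnegative --- both hold here, so there is no gap.
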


 \begin{proof}

First, consider the case where $u = 1$. Recall that by section \ref{sec:kelp}, for each outcome $\ell \in \{1, ..., L\}$, the knockoff e-values fulfill $\sum_{A: H_A^{\ell} \in \mathcal{H}_0^{\ell}} E[e^{\ell}_A] \leq c$. Then

    \begin{align*}
        \sum_{A: H_A^{u/L} \in \mathcal{H}_0^{u/L}} \mathbb{E}[\bar{e}_{A}^{u/L} ] &= 
        \sum_{A: H_A^{u/L} \in \mathcal{H}_0^{u/L}} \mathbb{E}[\frac{1}{L} \sum_{\ell = 1}^L e^{\ell}_{A}] \\
        &= \frac{1}{L} \sum_{\ell = 1}^L \sum_{A: H_A^{u/L} \in \mathcal{H}_0^{u/L}} \mathbb{E}[e^{\ell}_{A}] \\
       &\leq \frac{1}{L} \sum_{\ell = 1}^L \sum_{A: H_A^{\ell} \in \mathcal{H}_0^{l}} \mathbb{E}[e^{\ell}_{A}]  \\
       &\leq \frac{1}{L} \sum_{\ell = 1}^L c \\
       &= c
    \end{align*}

The first inequality follows as $\mathcal{H}_0^{u/L} \subseteq \mathcal{H}_0^{l}$ for each $\ell \in \{1, ..., L\}$. Note that if $H_A^{u/L} \in \mathcal{H}_0^{u/L}$, then $v(A) = 0$ and $H_A^{\ell} \in \mathcal{H}_0^{l}$ for every $\ell \in \{1, ..., L\}$. However, $H_A^{\ell} \in  \mathcal{H}_0^{l}$ does not imply that $H_A^{u/L} \in \mathcal{H}_0^{u/L}$ if $v(A) \geq 1$ for at least one $\ell \in \{1, ..., L \}$. The last inequality follows since $\sum_{A: H_A^{\ell} \in \mathcal{H}_0^{l}} \mathbb{E} \Big[  e^{\ell}_{A} \Big] \leq c$ by construction of the knockoff e-values.

Next, consider the case that $u > 1$. The following two observations are crucial here: (1) For every $A: H_{A}^{u/L} \in \mathcal{H}_0^{u/L}$, the outcomes $\ell \in \{1, ..., L\}$ corresponding to the $L - u + 1$ considered e-values might be different. (2) The condition $\sum_{A: H_A^\ell \in \mathcal{H}_0^{\ell}} E[e^{\ell}_A] \leq c$ holds for every outcome separately. 

As observation (2) concerns the sum over the expected null e-values for any outcome $\ell \in \{1, ..., L\}$, it could happen, separately for each outcome $\ell \in \{1, ..., L\}$, that $\mathbb{E}[e_{A'}^{\ell}] = c$ for some $A': H_{A'}^{\ell} \in \mathcal{H}_0^{\ell}$, which implies that $\sum_{A: H_A^{\ell} \in \mathcal{H}_0^{\ell} \setminus H_{A'}^{\ell}} \mathbb{E}[e_{A}^{\ell}] = 0$. 

Recall that as we only consider a single level of resolution for simplicity here, we let $\mathcal{A}$ denote the partition of $\{1, \ldots,p\}$ into disjoint groups. With $A_{g} \in \mathcal{A}$ for $g \in \{1, ..., |\mathcal{A}|\}$ we indicate the elements of the partition $\mathcal{A}$, each of which is a disjoint set of indices $j \in \{1, ..., p\}$. This partition $\mathcal{A}$ is the same for every outcome $\ell \in \{1, ..., L\}$. Below we find it helpful to consider different groups for different outcomes $\ell \in \{1, ..., L\}$. These groups will be denoted by $A_{g^\ell} \in \mathcal{A}$. 

Based on the discussion above, consider the case that for every $\ell \in \{1, ..., L\}$ there is such a group $A'_{g^\ell} \in \mathcal{A}$ with $H_{A'_{g^\ell}}^{\ell} \in \mathcal{H}_0^{\ell}$ where $\mathbb{E}[e_{A'_{g^\ell}}^{\ell}] = c$.

First suppose that $A'_{g} = A'_{g^1} = ... = A'_{g^L}$ for $A'_{g} \in \mathcal{A}$. As such, there is a single location $A'_{g} \in \mathcal{A}$ where for every $\ell \in \{1, ..., L\}$, $H_{A'_{g}}^{\ell} \in \mathcal{H}_0^{\ell}$ and $\mathbb{E}[e_{A'_g}^{\ell}] = c$. Suppose that $A'_{g}$ is also part of a partial conjunction null: $H_{A'_{g}}^{u/L} \in \mathcal{H}_0^{u/L}$. By construction of the partial conjunction e-values, this implies that $\bar{e}^{u/L}_{A'_{g}} = c$. Recall that for each outcome $\ell \in \{1, ..., L\}$ there can be only a single location $A'_{g^{\ell}}$ with $\mathbb{E}[e_{A'_{g^{\ell}}}^{\ell}] = c$ for $H_{A'_{g^\ell}}^{\ell} \in \mathcal{H}_0^{\ell}$, which implies that $\sum_{A_{g^{\ell}}: H_{A_{g^\ell}}^{\ell} \in \mathcal{H}_0^{\ell} \setminus H_{A'_{g^\ell}}^{\ell}} \mathbb{E}[e_{A_{g^{\ell}}}^\ell] = 0$. Therefore, $\sum_{A: H_A^{u/L} \in \mathcal{H}_0^{u/L}} \mathbb{E} \Big[\bar{e}^{u/L}_{A} \Big] \leq c$ is still fulfilled in this case.

Next, consider the case that $A'_{g^1} \neq ... \neq A'_{g^L}$. This means that the locations $A'_{g^{\ell}} \in \mathcal{A}$ with $H_{A'_{g^\ell}}^{\ell} \in \mathcal{H}_0^{\ell}$ and $\mathbb{E}[e_{A'_{g^\ell}}^{\ell}] = c$ are different for every $\ell \in \{1, ..., L\}$.

Recall observation (1) and suppose that the partial conjunction null is true at every location $A'_{g^1}, ..., A'_{g^L}$, i.e. $\{H_{A'_{g^1}}^{u/L}, ..., H_{A'_{g^L}}^{u/L}\} \subseteq \mathcal{H}_0^{u/L}$. By the same logic as above, we obtain that $\bar{e}^{u/L}_{A'_{g^1}} = ... = \bar{e}^{u/L}_{A'_{g^L}} = c \times \frac{1}{L - u + 1}$. 

As there can be at most one group $A'_{g^\ell}$ with $\mathbb{E}[e_{A'_{g^\ell}}^{\ell}] = c$ for every $\ell \in \{1, ..., L\}$, we can bound $\sum_{A: H_A^{u/L} \in \mathcal{H}_0^{u/L}} \mathbb{E} \Big[\bar{e}^{u/L}_{A} \Big] \leq \min \Big\{c \times (\frac{L}{L - u + 1}), c \times (\frac{c}{L - u + 1}) \Big\}$. In most applications $c \geq L$. Either way, we can always apply the following bound: 

\begin{equation*}
     \sum_{A: H_A^{u/L} \in \mathcal{H}_0^{u/L}} \mathbb{E} \Big[\bar{e}^{u/L}_{A} \Big] \leq  c \times \frac{L}{L - u + 1}
\end{equation*}

\end{proof}

In order to obtain FDR control with partial conjunction e-values with KeLP, it would be required that $\sum_{A: H_A^{u/L} \in \mathcal{H}_0^{u/L}} \mathbb{E} \Big[\bar{e}^{u/L}_{A} \Big] \leq  |\mathcal{A}|$; see section \ref{sec:kelp}. By theorem \ref{thm:partial_conjunction_knockoff_evalue_condition}, for FDR control it is therefore required that $c \times [L/(L - u + 1)] \leq  |\mathcal{A}|$. Expressed differently, if only $c \leq |\mathcal{A}|$ is ensured, then FDR would not be controlled at level $\alpha$, but instead at level $\alpha \frac{L}{L - u + 1}$. 

We next extend the previous result for a single level of resolution to multiple levels of resolution. 

\begin{ProcedureDef}{\em \textbf{(KeLP with partial conjunction e-values)}}\label{def:lp_KeLP_partial_conjunction}
Let ${\cal H}^{u/L}$ be a multi-resolution family of partial conjunction hypotheses addressed with KeLP. For each resolution $m\in {\cal M}$ and for each outcome $\ell \in \{1, ..., L\}$, construct (group) knockoffs as defined in section \ref{sec:kelp} and obtain the corresponding e-value $e_{A_g^m}^{\ell}$ for every $A_g^m \in \mathcal{A}^m$ and every $\ell \in \{1, ..., L\}$ as in Procedure \ref{def:lp_KeLP}, letting $\{c_m\}_{m \in \mathcal{M}}$ be non negative numbers such that $\frac{L}{L - u + 1} \sum_{m \in \mathcal{M}} c_m \leq |\mathcal{A}|$. Then, within each level of resolution $m \in \mathcal{M}$, construct a partial conjunction e-value $\bar{e}^{u/L}_{A_g^m}$ for every $A_g^m \in \mathcal{A}^m$ as described in theorem \ref{thm:partial_conjunction_knockoff_evalue_condition}. 
For a fixed weighting function $w(\cdot)$, and a target FDR level $\alpha$ the rejection set of KeLP with partial conjunction e-values is given by the output of eLP (Procedure \ref{def:lp_eLP}) taking the partial conjunction e-values as input.
\end{ProcedureDef}

At each level of resolution separately, the partial conjunction e-values fulfill the property that 

 \begin{equation*}
        \sum_{A_g^m: H_{g}^{m, u/L} \in \mathcal{H}_0^{u/L, m}} \mathbb{E} \Big[\bar{e}^{u/L}_{A_g^m} \Big] \leq  c_m \frac{L}{L - u + 1}.
    \end{equation*}

Across all resolutions, this implies that 

\begin{equation*}
    \sum_{m \in \mathcal{M}}  \sum_{A_g^m : H_{g}^{u/L, m} \in \mathcal{H}_0^{u/L, m}} \mathbb{E} \Big[\bar{e}^{u/L}_{A_g^m} \Big] \leq \ \frac{L}{L - u + 1} \sum_{m \in \mathcal{M}} c_m.
\end{equation*}

By theorem \ref{thm:main_thm_elp} and the same argument as in section \ref{sec:kelp}, for FDR control at level $\alpha$ it is therefore required to set $c_m$ such that 

\begin{equation*}\label{eq:knockoff_partial_conjunction_evalue_fulfill}
 \frac{L}{L - u + 1} \sum_{m \in \mathcal{M}} c_m \leq |\mathcal{A}|.
\end{equation*} 

Alternatively, if it is only ensured that $\sum_{m \in \mathcal{M}} c_m \leq |\mathcal{A}|$, the FDR would only be controlled at level $\alpha \frac{L}{L - u + 1}$.

However, note that in practice this adjustment factor ($L / (L - u + 1)$) for knockoff e-values might not be needed, as the existence of groups $A$ where the partial conjunction null holds with $\mathbb{E}[e_{A}^{\ell}] = c$ and $A \in \mathcal{H}_0 ^{\ell}$ (which implies that $\mathbb{E}[e_{A'}^{\ell}] = 0$ for all other $A' \in \mathcal{H}_0 ^{\ell} \setminus A$) that are distinct across $\ell \in \{1, ..., L\}$ seems rather unlikely.

%%%%%%%%%%%%%%%%%%%%%%%%%%%%%%%
\section{e-Filter and e-MKF}\label{appendix:emlkf_description}

\subsection{Description}

The p-filter by \citet{barber2017pfilter} coordinates rejections across partitions and controls the FDR simultaneously for all partitions. The original p-filter by \citet{barber2017pfilter} has been extended to arbitrary dependencies between the p-values in \citet{pfilter2019} using reshaping, which makes the procedure conservative. As mentioned in \citet{wang2022eBH}, it is possible to develop an analog of the p-filter for e-values.

The notation follows \citet{barber2017pfilter} and \citet{wang2022eBH}. Suppose that we have $n$ regular, individual-level e-values $\mathbf{e} = (e_1, ..., e_n)$ corresponding to hypotheses $H_1, ..., H_n$ and an unknown set of nulls $\mathcal{H}_0$. Suppose that the hypotheses are organized into $|\mathcal{M}|$ different layers, where each layer consists of an arbitrary partition of the hypotheses into groups. Note that we are leaving the setting described in section \ref{sec:introduction}. The hypotheses can be arbitrary and the e-values $\mathbf{e} = (e_1, ..., e_n)$ do not have to correspond to features, which is why we will use the index notation $i \in \{1, ..., n\}$ as opposed to $j \in \{1, ..., p\}$. Moreover, the partitions do not have to be resolutions. We will return to the previous setting based on knockoff e-values and contiguous groups of different sizes with the same feature basis at the end of this section. 

We use the arithmetic mean to combine e-values within each group as it dominates the analogue of Simes \citep{simes1986improved} for e-values \citep{vovk2021valuescalibrationcombination}. However, if other group-level e-values are already available, these can be used instead as well. At the end of this section, we describe how to coordinate discoveries by the knockoff filter based on group-level knockoff e-values. As before, let $A_{g}^m \in \mathcal{A}^m$ for $g \in \{1, ..., |\mathcal{A}^m|\}$ be the set of indices $i \in \{1, ..., n\}$ belonging to group $A_{g}^m$.

We define $\infty \times \text{constant} = \infty$ and $\infty \geq \infty$. In this section, we consider regular e-values as defined in section \ref{sec:elp}. Note that by definition, e-values can take on the value $\infty$; see \citet{vovk2021valuescalibrationcombination}. However, in the context of knockoff e-values, as described in section \ref{sec:kelp}, $\infty$ is not a possible value.

We follow the notation and problem setup in \citet{barber2017pfilter}. Consider thresholds $(t_1, ..., t_{|\mathcal{M}|}) \in [0, \infty]^{|\mathcal{M}|}$. Assume for now that the thresholds are given, following \citet{barber2017pfilter}, we will show below how to define these thresholds. Define the set of all discoveries that are made by the algorithm as:

\begin{equation*}
   \mathcal{R} (t_1, ..., t_{|\mathcal{M}|}) :=  \{i: \text{for all $m$, Mean}(e_{A^m_{g(i)}}) \geq t_m  \}, 
\end{equation*}

where $g(i)$ is the index of the group that $e_i$ belongs to in partition $m$. Thus, a hypothesis $i$ is selected if at every layer $m$, its group $A^m_{g(i)}$ has $\text{Mean}(e_{A^m_{g(i)}}) \geq t_m$.

Correspondingly, the rejection set at layer $m$ is defined as 

\begin{equation*}
   \mathcal{R}_m (t_1, ..., t_{|\mathcal{M}|}) :=   \{A_g^m \in \mathcal{A}^m: \mathcal{R}(t_1, ..., t_{|\mathcal{M}|}) \cap A_g^m\neq \emptyset \}.
\end{equation*}

Further define:

\begin{equation*}
    \widetilde{\mathcal{R}}_m(t_1, ..., t_{|\mathcal{M}|}) := | \mathcal{R}_m (t_1, ..., t_{|\mathcal{M}|}) | \vee 1
\end{equation*}

Following \citet{barber2017pfilter}, let

\begin{equation}\label{eq:efilter_threshold_set}
   \hat{\mathcal{T}}(\alpha_1, ..., \alpha_{|\mathcal{M}|}) := \{(t_1, ..., t_{|\mathcal{M}|}) \in [0, \infty]^{|\mathcal{M}|} : t_m  \widetilde{\mathcal{R}}_m(t_1, ..., t_{|\mathcal{M}|}) \geq |\mathcal{A}^m| / \alpha_m  \text{ for all $m$ } \} 
\end{equation}

where $\alpha_1, ..., \alpha_{|\mathcal{M}|} \in [0, 1]$ is the desired level of FDR control for each layer $m$. Note that this ensures that the FDR is controlled at each level $m$ by self-consistency; see theorem \ref{thm:main_thm_elp}. An alternative proof is included below theorem \ref{thm:efilter_fdr_control}.

Continuing with the argument in \citet{barber2017pfilter}, we let 

\begin{equation}\label{eq:efilter_individual_threshold}
    \hat{t}_m := \min \{t_m \in [0, \infty]: \exists t_1, ..., t_{m-1}, t_{m+1}, ..., t_{|\mathcal{M}|} \text{ such that } (t_1, ..., t_{|\mathcal{M}|}) \in \hat{\mathcal{T}}(\alpha_1, ..., \alpha_{|\mathcal{M}|}) \}
\end{equation}

for each $m = 1, ..., |\mathcal{M}|$.

\begin{theorem}\label{thm:efilter_threshold_has_max}

Fix $\alpha_1, ..., \alpha_m \in [0, 1]$ and a vector of e-values $\mathbf{e} \in [0, \infty]^n$. Then $(\hat{t}_1, ..., \hat{t}_m) \in \hat{\mathcal{T}}(\alpha_1, ..., \alpha_{|\mathcal{M}|})$.

\end{theorem}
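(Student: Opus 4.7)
The plan is to adapt the analogous threshold-attainment argument for the p-filter in \citet{barber2017pfilter} to this e-value mean-based version. The entire argument hinges on a single monotonicity observation: the joint selection set
\[
\mathcal{R}(t_1, \ldots, t_{|\mathcal{M}|}) = \{i : \text{Mean}(e_{A_{g(i)}^m}) \geq t_m \text{ for all } m\}
\]
is non-increasing in every coordinate $t_{m'}$, because lowering any single threshold relaxes a conjunction of inclusion conditions. Consequently, $\mathcal{R}_m$, and hence $\widetilde{\mathcal{R}}_m(t_1, \ldots, t_{|\mathcal{M}|})$, are also non-increasing in every $t_{m'}$.

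First I would confirm that each infimum in (\ref{eq:efilter_individual_threshold}) is attained. The map $(t_1, \ldots, t_{|\mathcal{M}|}) \mapsto \widetilde{\mathcal{R}}_m$ is piecewise constant, taking jumps only as a coordinate crosses one of the finitely many mean e-values $\{\text{Mean}(e_{A_g^{m'}})\}_{g, m'}$. It therefore suffices to restrict the search for valid threshold vectors to the finite grid formed by these values together with $\{0, \infty\}$, which guarantees that for each $m$ there is a vector $\mathbf{v}^{(m)} = (t_1^{(m)}, \ldots, t_{|\mathcal{M}|}^{(m)}) \in \hat{\mathcal{T}}(\alpha_1, \ldots, \alpha_{|\mathcal{M}|})$ with $t_m^{(m)} = \hat t_m$. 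The degenerate case $\hat t_m = \infty$ is handled by the paper's stated conventions $\infty \times c = \infty$ and $\infty \geq \infty$, which make the layer-$m$ constraint vacuous.

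Next, by the definition of $\hat t_{m'}$ as the minimum of $m'$-th coordinates over $\hat{\mathcal{T}}(\alpha_1, \ldots, \alpha_{|\mathcal{M}|})$, and because every $\mathbf{v}^{(m)}$ lies in this set, we have $\hat t_{m'} \leq t_{m'}^{(m)}$ for all pairs $m, m'$. Hence the coordinate-wise minimum vector $\hat{\mathbf{t}} := (\hat t_1, \ldots, \hat t_{|\mathcal{M}|})$ lies coordinate-wise below each $\mathbf{v}^{(m)}$, and the monotonicity above gives $\widetilde{\mathcal{R}}_m(\hat{\mathbf{t}}) \geq \widetilde{\mathcal{R}}_m(\mathbf{v}^{(m)})$ for every $m$.

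The final step verifies the threshold condition at $\hat{\mathbf{t}}$ for each layer $m$. Using $\hat t_m = t_m^{(m)}$,
\begin{align*}
\hat t_m \cdot \widetilde{\mathcal{R}}_m(\hat{\mathbf{t}})
= t_m^{(m)} \cdot \widetilde{\mathcal{R}}_m(\hat{\mathbf{t}})
\geq t_m^{(m)} \cdot \widetilde{\mathcal{R}}_m(\mathbf{v}^{(m)})
\geq |\mathcal{A}^m| / \alpha_m,
\end{align*}
where the last inequality uses $\mathbf{v}^{(m)} \in \hat{\mathcal{T}}(\alpha_1, \ldots, \alpha_{|\mathcal{M}|})$. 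Since $m$ was arbitrary, $\hat{\mathbf{t}} \in \hat{\mathcal{T}}(\alpha_1, \ldots, \alpha_{|\mathcal{M}|})$. The main obstacle I anticipate is rigorizing the attainment step --- specifically, being careful about the right-/left-continuity of $\widetilde{\mathcal{R}}_m$ at the jump points so that the ``min'' in the definition of $\hat t_m$ really is attained rather than merely an infimum --- but restricting to the finite grid of candidate values above resolves this cleanly.
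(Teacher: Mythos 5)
Your proof is correct and follows essentially the same route as the paper's (which itself adapts \citet{barber2017pfilter}): extract for each $m$ a witness vector in $\hat{\mathcal{T}}$ whose $m$-th coordinate equals $\hat t_m$, note that $(\hat t_1,\ldots,\hat t_{|\mathcal{M}|})$ is coordinate-wise below every such witness, and use the non-increasingness of $\widetilde{\mathcal{R}}_m$ to transfer the threshold inequality. Your additional remark that the minimum in the definition of $\hat t_m$ is attained because $\widetilde{\mathcal{R}}_m$ is piecewise constant on a finite grid of candidate values is a point the paper leaves implicit, but it does not change the argument.
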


The proof follows \citet{barber2017pfilter}:

\begin{proof}

By definition of $\hat{t}_m$ there is some $t_1^m, ..., t_{m-1}^m, t_{m+1}^m, ..., t_{|\mathcal{M}|}^m$ for each $m$ such that 

\begin{equation*}
    (t_1^m, ..., t_{m-1}^m, \hat{t}_m ,t_{m+1}^m, ..., t_{|\mathcal{M}|}^m) \in \hat{\mathcal{T}}(\alpha_1, ..., \alpha_{|\mathcal{M}|}).
\end{equation*}

Therefore

\begin{equation*}
    \mathcal{R}(t_1^m, ..., t_{m-1}^m, \hat{t}_m, t_{m+1}^m, ..., t_{|\mathcal{M}|}^m) \subseteq \mathcal{R}(\hat{t}_1, ..., \hat{t}_{m-1}, \hat{t}_m, \hat{t}_{m+1}, ..., \hat{t}_{|\mathcal{M}|})
\end{equation*}

because $\mathcal{R}(t_1, ..., t_{|\mathcal{M}|})$ is a non-increasing function of $(t_1, ..., t_{|\mathcal{M}|})$ and for each $m' \neq m$, $\hat{t}_{m'} \leq t_{m'}^m$.

By definition, 
$\hat{t}_m \widetilde{\mathcal{R}}_m(\hat{t}_1^m, ..., \hat{t}_{m-1}^m, \hat{t}_m, \hat{t}_{m+1}^m, ..., \hat{t}_{|\mathcal{M}|}^m) \geq |\mathcal{A}^m| / \alpha_m$. 

But by the above we have that

\begin{equation*}
    \widetilde{\mathcal{R}}(t_1^m, ..., t_{m-1}^m, \hat{t}_m, t_{m+1}^m, ..., t_{|\mathcal{M}|}^m) \leq \widetilde{\mathcal{R}}(\hat{t}_1, ..., \hat{t}_{m-1}, \hat{t}_m, \hat{t}_{m+1}, ..., \hat{t}_{|\mathcal{M}|})
\end{equation*}

So therefore, we have that 

\begin{align*}
    \hat{t}_m \widetilde{\mathcal{R}}(\hat{t}_1, ..., \hat{t}_{m-1}, \hat{t}_m, \hat{t}_{m+1}, ..., \hat{t}_{|\mathcal{M}|}) &\geq \hat{t}_m \widetilde{\mathcal{R}}(t_1^m, ..., t_{m-1}^m, \hat{t}_m, t_{m+1}^m, ..., t_{|\mathcal{M}|}^m) \\
    &\geq |\mathcal{A}^m| / \alpha_m
\end{align*}

Since this holds for all $m$, this shows that $(\hat{t}_1, ..., \hat{t}_m) \in \hat{\mathcal{T}}(\alpha_1, ..., \alpha_{|\mathcal{M}|})$ by definition of $ \hat{\mathcal{T}}(\alpha_1, ..., \alpha_{|\mathcal{M}|})$.

\end{proof}

\begin{theorem}\label{thm:efilter_fdr_control}
Let $(\hat{t}_1, ..., \hat{t}_m)$ as above. Then, for each $m = 1, ..., |\mathcal{M}|$ the method controls the FDR for the $m$-th partition, 

\begin{equation*}
    \mathbb{E}[FDP_m(\hat{t}_1, ..., \hat{t}_{|\mathcal{M}|})] \leq \alpha_m \frac{|\mathcal{H}_0^m|}{|\mathcal{A}^m|}.
\end{equation*}

\end{theorem}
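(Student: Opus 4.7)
The plan is to mimic the layer-wise self-consistency argument underlying theorem \ref{thm:main_thm_elp}, but applied at layer $m$ using as the group-level test statistic the arithmetic mean $\bar e_{A_g^m} := \mathrm{Mean}(e_{A_g^m})$. Two ingredients drive everything. First, by linearity of expectation, $\mathbb E[\bar e_{A_g^m}] \leq 1$ whenever $H_{A_g^m} \in \mathcal H_0^m$, regardless of the joint distribution of the individual e-values inside the group. Second, by theorem \ref{thm:efilter_threshold_has_max}, the output vector satisfies $(\hat t_1,\ldots,\hat t_{|\mathcal M|}) \in \hat{\mathcal T}(\alpha_1,\ldots,\alpha_{|\mathcal M|})$, and in particular
\begin{equation*}
\hat t_m\, \widetilde{\mathcal R}_m(\hat t_1,\ldots,\hat t_{|\mathcal M|}) \geq |\mathcal A^m|/\alpha_m.
\end{equation*}

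Concretely I would proceed in three steps. First, abbreviate $\hat{\mathcal R}_m := \mathcal R_m(\hat t_1,\ldots,\hat t_{|\mathcal M|})$ and $\widetilde{\mathcal R}_m := \widetilde{\mathcal R}_m(\hat t_1,\ldots,\hat t_{|\mathcal M|})$, and decompose
\begin{equation*}
\mathrm{FDP}_m(\hat t_1,\ldots,\hat t_{|\mathcal M|})
 = \sum_{A_g^m:\, H_{A_g^m}\in \mathcal H_0^m}
 \frac{\mathbf 1\{A_g^m \in \hat{\mathcal R}_m\}}{\widetilde{\mathcal R}_m}.
\end{equation*}
Second, observe that, by definition of $\hat{\mathcal R}$ (and hence of $\hat{\mathcal R}_m$), any selected layer-$m$ group satisfies $\bar e_{A_g^m} \geq \hat t_m$, so $\mathbf 1\{A_g^m \in \hat{\mathcal R}_m\} \leq \bar e_{A_g^m}/\hat t_m$. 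Combining this with the self-consistency inequality above yields the pathwise bound
\begin{equation*}
\mathrm{FDP}_m
 \leq \frac{1}{\hat t_m\, \widetilde{\mathcal R}_m}
 \sum_{A_g^m:\, H_{A_g^m}\in \mathcal H_0^m}\! \bar e_{A_g^m}
 \leq \frac{\alpha_m}{|\mathcal A^m|}
 \sum_{A_g^m:\, H_{A_g^m}\in \mathcal H_0^m}\! \bar e_{A_g^m}.
\end{equation*}
Third, taking expectations and applying $\mathbb E[\bar e_{A_g^m}] \leq 1$ on each true null group delivers the claimed inequality $\mathbb E[\mathrm{FDP}_m] \leq \alpha_m |\mathcal H_0^m|/|\mathcal A^m|$.

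The main obstacle I expect is dispatching the degenerate values of $\hat t_m$. If $\hat t_m \in (0,\infty)$, the division in step two is literal and the chain is clean. If $\hat t_m = 0$, then self-consistency would force $\widetilde{\mathcal R}_m = \infty$, which is impossible with finitely many groups, so this case does not arise. If $\hat t_m = \infty$, then with finite e-values no layer-$m$ group is ever selected (so $\mathrm{FDP}_m = 0$ trivially), and with infinite e-values the conventions $\infty \geq \infty$ and $\infty \cdot c = \infty$ stipulated earlier in the appendix keep the inequality chain meaningful. Aside from these bookkeeping issues, the argument requires no assumption on the dependence of e-values across layers or within groups, since the arithmetic-mean combining rule preserves the expectation bound $\mathbb E[\bar e_{A_g^m}]\leq 1$ under arbitrary dependence.
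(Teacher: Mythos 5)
Your proposal is correct and takes essentially the same route as the paper's proof: bound the selection indicator of each true-null layer-$m$ group via $\mathbf 1\{A_g^m\in\hat{\mathcal R}_m\}\le \bar e_{A_g^m}/\hat t_m$ (selection forces the group mean to exceed $\hat t_m$), invoke the self-consistency property $\hat t_m\,\widetilde{\mathcal R}_m\ge |\mathcal A^m|/\alpha_m$ guaranteed by Theorem \ref{thm:efilter_threshold_has_max}, and conclude with $\mathbb E[\bar e_{A_g^m}]\le 1$ for null groups by linearity of expectation. The only differences are cosmetic: the paper dispatches the case $\hat t_m=\infty$ not by the $\infty$-conventions but by noting that a true-null group mean has expectation at most $1$ and is therefore finite almost surely, so no null group is selected and $\mathrm{FDP}_m=0$ on that event, and the paper asserts the equality $\hat t_m\,\widetilde{\mathcal R}_m=|\mathcal A^m|/\alpha_m$ whereas, as you observe, the inequality alone suffices.
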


\begin{proof}

As with the corresponding p-value thresholds, $t_1, ..., t_{|\mathcal{M}|}$ take values in

\begin{equation*}
    \hat{t}_m \in \Big\{ \big\{ \frac{1}{\alpha_m} \frac{|\mathcal{A}^m|}{k}  : k = 1, ..., |\mathcal{A}^m| \big\} \cup \{ \infty \} \Big\} . 
\end{equation*}

Fix any partition $m$. We will show FDR control separately for the case $\hat{t}_m = \infty$ and $\hat{t}_m < \infty$.

First suppose that $\hat{t}_m = \infty$.

Let $e_{A, avg}^m$ denote the average of the individual-level e-values for a particular group $A_g^m$. If $A_g^m \in  \mathcal{R}_m (\hat{t}_1, ..., \hat{t}_{|\mathcal{M}|})$, then we must have $e_{A, avg}^m \geq \hat{t}_m$. 

Fix any $A_g^m: H_g^m \in \mathcal{H}_0^m$. If $\hat{t}_m = \infty$, then $\mathbf{1} \{A_g^m \in  \mathcal{R}_m (\hat{t}_1, ..., \hat{t}_{|\mathcal{M}|}) \} = 0$ for any $A_g^m: H_g^m \in \mathcal{H}_0^m$ almost surely.

This is because we know that $\mathbb{E}[e_{A_g^m}] \leq 1$ for all $A_g^m: H_g^m \in \mathcal{H}_0^m$. However, this implies also that $e_{A_g^m} < \infty$ almost surely, because if $\infty$ was a possible value, then $\mathbb{E}[e_{A_g^m}] = \infty$, which cannot be true since $\mathbb{E}[e_{A_g^m}] \leq 1$ .

Then

\begin{align*} 
 FDP_m(\hat{t}_1, ..., \hat{t}_{m-1}, \infty, \hat{t}_{m+1}, ..., \hat{t}_{|\mathcal{M}|}) &= \frac{|\mathcal{R}_m(\hat{t}_1, ..., \hat{t}_{|\mathcal{M}|}) \cap \mathcal{H}_0^m|}{1 \vee |\mathcal{R}_m(\hat{t}_1, ..., \hat{t}_{|\mathcal{M}|})|}  \\ 
    &=  \sum_{A_g^m: H_g^m \in \mathcal{H}_0^m} \frac{\mathbf{1} \{A_g^m \in  \mathcal{R}_m (\hat{t}_1, ..., \hat{t}_{|\mathcal{M}|} )\}}{1 \vee |\mathcal{R}_m(\hat{t}_1, ..., \hat{t}_{|\mathcal{M}|})|} \\
     & = 0
\end{align*}

And therefore we also have that

\begin{align*} 
 \mathbb{E}[FDP_m(\hat{t}_1, ..., \hat{t}_{m-1}, \infty, \hat{t}_{m+1}, ..., \hat{t}_{|\mathcal{M}|})]
  &=  \mathbb{E}[0]  \\ 
  &= 0 \\ 
  &\leq \alpha_m \frac{|\mathcal{H}_0^m|}{|\mathcal{A}^m|} 
\end{align*}

Now suppose that $\hat{t}_m < \infty$.

The above argument implies that if $A_g^m \in  \mathcal{R}_m (\hat{t}_1, ..., \hat{t}_{|\mathcal{M}|})$ for some null group $A_g^m: H_g^m \in \mathcal{H}_0^m$, then we must have $\hat{t}_m < \infty$ almost surely. Moreover, since $\hat{t}_m < \infty$ by the construction above we have that $\hat{t}_m \in [\frac{1}{\alpha}_m, \frac{|\mathcal{A}^m|}{\alpha_m}]$, so $\hat{t}_m > 0$.

Moreover, note that $t_m R(t) / |\mathcal{A}^m|$ only has downside jumps and if $t_m = 0$ this becomes zero; see also \citet{wang2022eBH}. Therefore 

\begin{equation*}
    \hat{t}_m \widetilde{\mathcal{R}}_m(\hat{t}_1, ..., \hat{t}_{|\mathcal{M}|}) = \frac{|\mathcal{A}^m|}{\alpha_m}
\end{equation*}

Thus,

\begin{align*} 
 FDP_m(\hat{t}_1, ..., \hat{t}_{|\mathcal{M}|}) &= \frac{|\mathcal{R}_m(\hat{t}_1, ..., \hat{t}_{|\mathcal{M}|}) \cap \mathcal{H}_0^m|}{1 \vee |\mathcal{R}_m(\hat{t}_1, ..., \hat{t}_{|\mathcal{M}|})|}  \\ 
 &= \frac{|\mathcal{R}_m(\hat{t}_1, ..., \hat{t}_{|\mathcal{M}|}) \cap \mathcal{H}_0^m|}{\widetilde{\mathcal{R}}_m(\hat{t}_1, ..., \hat{t}_{|\mathcal{M}|})}\\
    &=  \sum_{A_g^m: H_g^m \in \mathcal{H}_0^m} \frac{\mathbf{1} \{A_g^m \in  \mathcal{R}_m (\hat{t}_1, ..., \hat{t}_{|\mathcal{M}|}) \}}{\widetilde{\mathcal{R}}_m(\hat{t}_1, ..., \hat{t}_{|\mathcal{M}|})} \\
     &\leq \alpha_m \frac{1}{|\mathcal{A}^m|}  \hat{t}_m \sum_{A_g^m: H_g^m \in \mathcal{H}_0^m} \mathbf{1} \{A_g^m \in  \mathcal{R}_m (\hat{t}_1, ..., \hat{t}_{|\mathcal{M}|}) \}
\end{align*}

Fix any null group $A_g^m: H_g^m \in \mathcal{H}_0^m$. If $A_g^m$ is selected, then we must have $e_{A, avg}^m \geq \hat{t}_m$.

Recall that $A_g^m$ is a null group. Since $\mathbb{E}[e_i] \leq 1$ for each $i \in A_g^m$, we also have that $\mathbb{E}[e_{A, avg}^m] \leq 1$. Note also both $e_{A, avg}^m$ and $\hat{t}_m$ are non-negative.

\begin{align*} 
 FDP_m(\hat{t}_1, ..., \hat{t}_{|\mathcal{M}|}) &\leq \alpha_m \frac{1}{|\mathcal{A}^m|}  \hat{t}_m \sum_{A_g^m: H_g^m \in \mathcal{H}_0^m} \mathbf{1} \{A_g^m \in  \mathcal{R}_m (\hat{t}_1, ..., \hat{t}_{|\mathcal{M}|}) \} \\
 & = \alpha_m \frac{1}{|\mathcal{A}^m|}  \hat{t}_m \sum_{A_g^m: H_g^m \in \mathcal{H}_0^m} \mathbf{1} \{e_{A, avg}^m \geq \hat{t}_m \} 
\end{align*}

Then, taking expectations we get that

\begin{align*} 
 \mathbb{E}[FDP_m(\hat{t}_1, ..., \hat{t}_{|\mathcal{M}|})]
  &\leq  \alpha_m \frac{1}{|\mathcal{A}^m|} \sum_{A_g^m: H_g^m \in \mathcal{H}_0^m}  \mathbb{E} \big[ \hat{t}_m \mathbf{1} \{e_{A, avg}^m \geq \hat{t}_m  \}  \big] \\
  &\leq \alpha_m \frac{1}{|\mathcal{A}^m|} \sum_{A_g^m: H_g^m \in \mathcal{H}_0^m}  \mathbb{E} \big[ e_{A, avg}^m \big] \\
  &\leq \alpha_m \frac{1}{|\mathcal{A}^m|}  \sum_{A_g^m: H_g^m \in \mathcal{H}_0^m}  (1)  \\
   &\leq \alpha_m  \frac{|\mathcal{H}_0^m|}{|\mathcal{A}^m|} 
\end{align*}

Hence, we get FDR control in either case: if $\hat{t}_m = \infty$ and if $\hat{t}_m < \infty$.

\end{proof}

To find the thresholds $\hat{t}_m$, the algorithm in \citet{barber2017pfilter} can be adapted:

\begin{algorithm}[H]\label{alg:filter_find_thresholds}
\caption{Find thresholds $\hat{t}_m$, following \citet{barber2017pfilter} }\label{alg:efilter_thresholds}
\KwIn{A vector of e-values $\mathbf{e} \in [0, \infty]^n$, target FDR levels $\alpha_1, .., \alpha_{|\mathcal{M}|}$, arbitrary partitions $\mathcal{A}^m$ of the hypotheses into groups for each $m \in \mathcal{M}$}
\KwOut{Thresholds $\hat{t}_m$ for $m \in \mathcal{M}$}
Initialize: $t_1 \gets 1/\alpha_1, ..., t_{|\mathcal{M}|} = 1/\alpha_{|M|}$\;
\While{$t_1, ..., t_{|\mathcal{M}|}$ are all unchanged in the last round}{
\For{$m = 1, ..., |\mathcal{M}|$}{
    Define $ \mathcal{R}_m(t_1, ..., t_{|\mathcal{M}|}) = \{A_g^m \in  \mathcal{A}^m\}: \mathcal{R}(t_1, ..., t_{|\mathcal{M}|}) \cap A_g^m \neq \emptyset \}$ (as above)\;
     $t_m \gets \min \Bigl\{ \widetilde{t} \in [t_m, \infty] :   \widetilde{t} \widetilde{\mathcal{R}}_m(t_1, ..., t_{m-1}, \widetilde{t}, t_{m+1}, ..., t_{|\mathcal{M}|}) \geq |\mathcal{A}^m| / \alpha_m   \Bigl\}$
    
}
}
\end{algorithm}

This construction can also be used to obtain a Multilayer Knockoff Filter inspired by \citet{katsevich2019multilayer} with valid FDR control at each level $m \in \mathcal{M}$. To do so, we define e-values as in section \ref{sec:kelp} to test the conditional independence hypothesis at each level of resolution for a single outcome. However, it is important to set $c_m = |\mathcal{A}^m|$ in order to obtain FDR control at each level $m \in \mathcal{M}$. This ensures that 

\begin{equation*}
     \sum_{A_{g}^m : H_g^m \in \mathcal{H}_0^{m}} \mathbb{E} \Big[  e_{A_{g}^m} \Big] \leq |\mathcal{A}^m|
\end{equation*}

holds at each level of resolution separately. As shown in theorem \ref{thm:main_thm_elp}, this condition is sufficient to obtain FDR control in combination with self-consistency. As the thresholds $t_1, ..., t_{|\mathcal{M}|}$ are obtained as described above; see equation (\ref{eq:efilter_threshold_set}), the knockoff e-value based multilayer knockoff filter also controls the FDR at each level $m \in \mathcal{M}$. 

Note that as in section \ref{sec:kelp}, $W_m$ and $T_m$ are defined separately for each layer. Moreover, the coordination of discoveries is conducted via the thresholds $t_1, ..., t_{|\mathcal{M}|}$ (\ref{eq:efilter_threshold_set}). This is in contrast to \citet{katsevich2019multilayer}, who coordinate the rejections in different layers directly via the knockoff stopping time (\ref{eq:stopping_Mtime_gamma}) and control the FDR defined by the knockoff \citep{candes2018panning, barber2015controlling}. If we were to translate this procedure into e-values, the procedure by \citet{katsevich2019multilayer} would be based on
        
\begin{equation}\label{eq:emlkf_evalue}
    \widetilde{e}_{A_{g}^m} = |\mathcal{A}^m| \cdot \frac{\mathbf{1} \{W_{A_{g}^m} \geq T^*_m \}}{1 + \sum_{A \in \mathcal{A}^m} \mathbf{1} \{W_A \leq -T^*_m\}}
\end{equation}

with $T^*_m$ defined as in \citet{katsevich2019multilayer} in combination with the e-BH procedure at level $\alpha_m$ at each layer $m \in \mathcal{M}$. In equation (\ref{eq:emlkf_evalue}), $T_m^*$ not only depends on $m \in \mathcal{M}$, but also on the other layers $m' \neq m \in \mathcal{M}$; see \citet{katsevich2019multilayer} for further details. The difficulty with using $T^*_m$ dependent on other layers is that $T^*_m$ cannot be represented as a stopping time that only depends on $W_m$ anymore. This results in FDR control at level $\alpha_m \times \kappa$ instead of level $\alpha_m$, where $\kappa \approx 1.93$.

\subsection{Simulation}

For comparability, we follow the simulation setting in \citet{katsevich2019multilayer} based on $M = 2$ layers. Layer 1 is comprised of individual variables and layer 2 of groups of size 10 (i.e. each group contains 10 individual variables). 

We simulate data from a linear model $ Y = X \mathbf{\beta} + \mathbf{\epsilon}$, where $\mathbf{\epsilon} \sim \mathcal{N}(0, I)$ with $n = 4,500$ and $p = 2,000$. $X \in \mathbb{R}^{p \times n}$ is sampled independently from $\mathcal{N}(0, \Sigma_{\rho})$ for $\rho = 0.3$ where $(\Sigma_{\rho})_{ij} = \rho^{|i-j|}$ is the covariance of an AR(1) process with correlation $\rho$.

Following \citet{katsevich2019multilayer}, we first select $r \in \{10, 20\}$ groups uniformly at random and then choose uniformly at random 75 elements of these $r$ groups to determine the nonzero elements of $\beta$. The absolute value of the 75 non-zero elements $\beta$ are determined by $a / \sqrt{n}$, where $a$ denotes the signal amplitude, which is the parameter that we will vary. The sign of the non-zero coefficient is determined by independent coin flips. Following \citet{katsevich2019multilayer}, we construct fixed-equi knockoff variables and choose lasso-based variable importance statistics combined using the signed-max function. We tune $\gamma$ on an independently generated tuning data set with the same parameters and dimensions. Each point in the simulation results below is based on 100 iterations.

We compare the multilayer knockoff filter (MKF) with the knockoff e-value version (e-MKF) at the same theoretical guarantee of $\alpha_m = 0.2$ for both $m \in \{1, 2\}$. This shows that the e-MKF has higher power compared to the MKF for the same theoretical level of FDR control. 

\foreach \k in {10, 20}
{

    % FOR EACH OVERLAP
    \foreach \m in {mult}
    { 

    % For each gamma rep 
    \foreach \g in {tuned} %not\_tuned
    {

\begin{figure}[H]
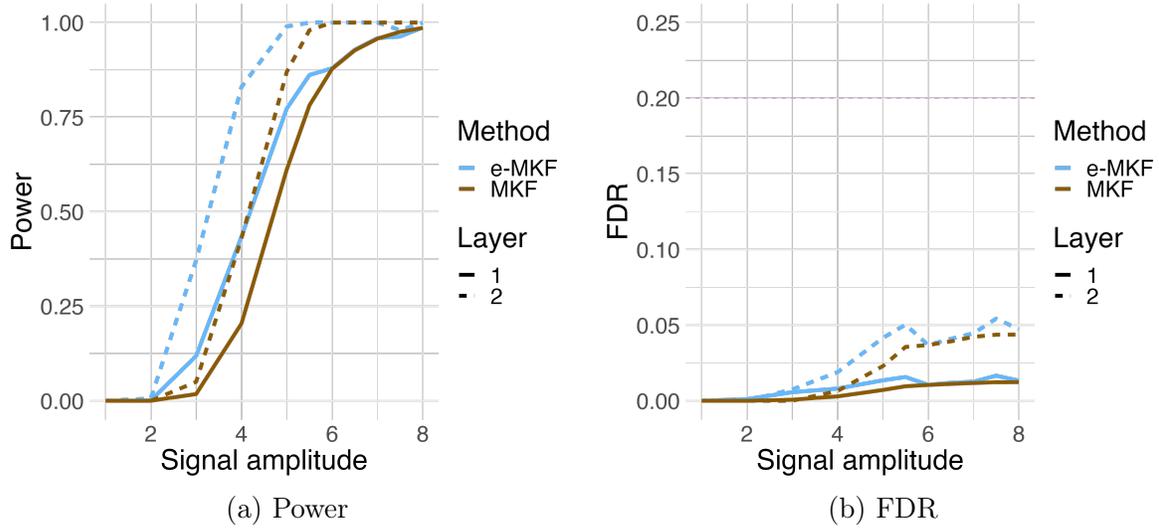

\begin{tabular}{cc}
  \includegraphics[width=75mm]{figures/emlkf/emlkf_power_nrep100_a_0.2ck\m_k_\k_M2_gamma_\g.png} &   \includegraphics[width=75mm]{figures/emlkf/emlkf_fdr_nrep100_a_0.2ck\m_k_\k_M2_gamma_\g.png}  \\
(a) Power & (b) FDR 
\end{tabular}
\caption{Number of nonzero groups: \k. }
\end{figure}

    }
}

}

%%%%%%%%%%%%%%%%%%%%%%%%%%%%%%%
\section{Details on the simulations}\label{appendix:details_simulation}

We construct the knockoff e-values as in equation (\ref{eq:main_evalue_definition}). In sections \ref{subsec:blockdiagonal} and \ref{subsec:structured_outcomes_simulation}, the (group) knockoff variables are generated via maximum entropy \citep{spector2022powerful, knockoffjulia}. Let $T_j = \beta_j$ and $\widetilde{T}_j = \widetilde{\beta}_j$ where $\beta_j$ and $\widetilde{\beta}_{j}$ are the coefficient estimates based on a 10-fold cross-validated lasso for variable $j$ and its knockoff, respectively \citep{knockoffjulia, candes2018panning}. Following \citet{knockoffjulia}, we compute $T_A = \sum_{j \in A} |T_j|$ and $\widetilde{T}_A = \sum_{j \in A} |\widetilde{T}_j|$ and set $W_A = T_A - \widetilde{T}_A$ for each group $A \in \mathcal{A}$ to get the importance statistics. 

For the chromosome-wide simulation on the UKB genotypes, we use the SHAPEIT knockoff variables by \citet{sesia2021populationstructureshapeit}; see \citet{sesia2021populationstructureshapeit} for further details on pre-processing, importance statistics, knockoff variable generation and population definitions. As in \citet{sesia2021populationstructureshapeit}, to obtain the importance statistics we run a cross-validated lasso including sex (data field 22001-0.0), age (data field 21003-0.0), squared age and the top five genetic principal components as covariates.

We set $c_m = |\mathcal{A}|/|\mathcal{M}|$, our default value, for each $m \in \mathcal{M}$ in all simulations. Following our default recommendations for $\gamma$ discussed in section \ref{sec:kelp}, we set $\gamma = \alpha / 2$ for our simulations in section \ref{subsec:blockdiagonal} and \ref{subsec:structured_outcomes_simulation}. In the more high-dimensional simulation on the UK Biobank genotypes of chromosome 21 in section \ref{subsec:chromosome_simulation_UKB}, we set $\gamma = \alpha / 4$.

Moreover, for the simulations in sections \ref{subsec:blockdiagonal} and \ref{subsec:structured_outcomes_simulation}, following \citet{spector2023controlled}, the absolute value of each nonzero coefficient is set to be at least $0.1 \times \tau$, as otherwise there might be a few nonzero coefficients that are very close to zero. 

Figure \ref{fig:UKB_simulation_snr} shows $\text{Var}(f(X)) / \text{Var}(Y) = \text{Var}(X \beta) / \text{Var}(X \beta + \epsilon)$ as a function of signal amplitude for the simulation setting of section \ref{subsec:chromosome_simulation_UKB}.

\begin{figure}[H]
\centering
\begin{tabular}{c}
 \includegraphics[scale = 0.3]{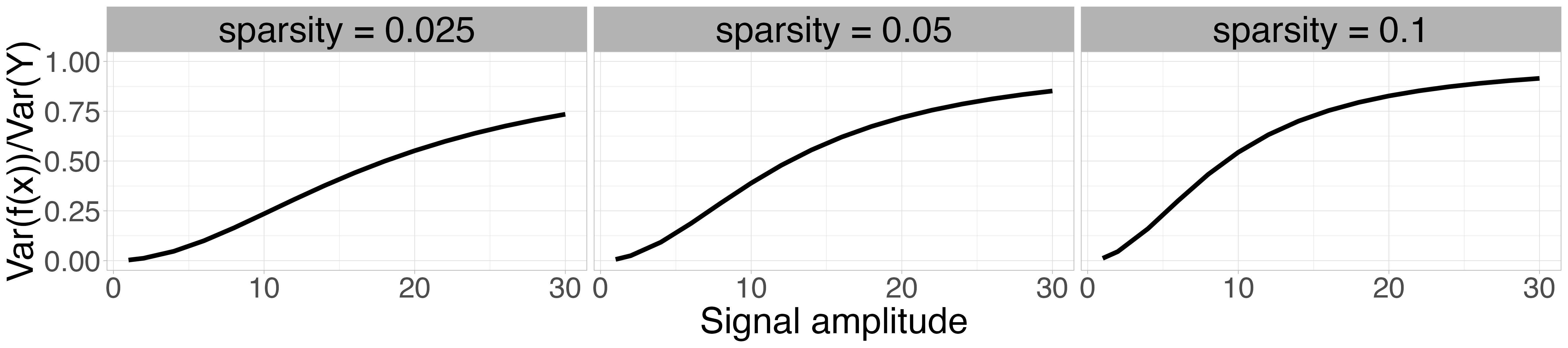} 
\end{tabular}
\caption{$\text{Var}(f(X)) / \text{Var}(Y) = \text{Var}(X \beta) / \text{Var}(X \beta + \epsilon)$ for the chromosome-wide simulation on chromosome 21 of the UK Biobank. Different levels of sparsity by column. Other simulation details as in figure \ref{fig:UKB_simulation} in section \ref{subsec:chromosome_simulation_UKB}.}
\label{fig:UKB_simulation_snr}
\end{figure}

Considering the structured outcome simulations in section \ref{subsec:structured_outcomes_simulation} and the tree depicted in figure \ref{fig:outcome_tree}, for consistency with our previous simulations, we focus on the case where only the rejection of outcome $A$ is reported at the leaf level (but not the rejection of $\{A, B\}$). Specifically, for simplicity, we leverage the connection between the Focused e-BH procedure and KeLP, which is further discussed in appendix \ref{appendix:focusedeBH_vs_kelp}, for our simulations in section \ref{subsec:structured_outcomes_simulation}. Another option is to report both rejections with the argument that while $A$ can be rejected at the leaf level, we can only reject $B$ at the level $\{A, B\}$. It is possible to adjust the location constraint in KeLP (\ref{eq:eLP_location_constraint}) corresponding to either desired definition.

We solve the linear program corresponding to KeLP with CVXR \citep{cvxr} and the solver ``Rglpk'' \citep{Rglpk, glkp}.

%%%%%%%%%%%%%%%%%%%%%%%%%%%%%%%
\section{Details on the UKB application}\label{appendix:details_ukb}

In all of our UK Biobank applications, we utilize the generated knockoff variables and the partitions by \citet{sesia2021populationstructureshapeit} based on UK Biobank application 27837. The around 592,000 SNPs are partitioned into contiguous groups at seven levels of resolution based on complete-linkage hierarchical clustering. The group sizes range from size one (single SNP) to 425-kb-wide groups; see \citet{sesia2021populationstructureshapeit} for further details on the UKB pre-processing, importance statistic, knockoff variable generation and population definitions. As \citet{sesia2021populationstructureshapeit}, to obtain the importance statistics we run a cross-validated lasso including sex (data field 22001-0.0), age (data field 21003-0.0), squared age and the top five genetic principal components as covariates. Our outcomes of interest are height (data field 50-0.0), platelet count (data field 30080-0.0), platelet crit (data field 30090-0.0), platelet width (data field 30110-0.0) and platelet volume (data field 30100-0.0). 

In this application with seven levels of resolution KeLP runs within minutes using CVXR \citep{cvxr} and the solver ``Rglpk'' \citep{Rglpk, glkp}. For the majority of phenotypes considered the runtime is less than 5 minutes. KeLP for the outcome height has the longest runtime with about 20 minutes.

\subsection{Additional results}

\subsubsection{Height: portion of genome implicated}\label{appendix:height_portion_genome_implicated}

\newpage

\begin{figure}[H]
\begin{tabular}{ >{\centering\arraybackslash} m{1cm} m{10cm} }

1 &  \includegraphics[height=0.1\textwidth, keepaspectratio]{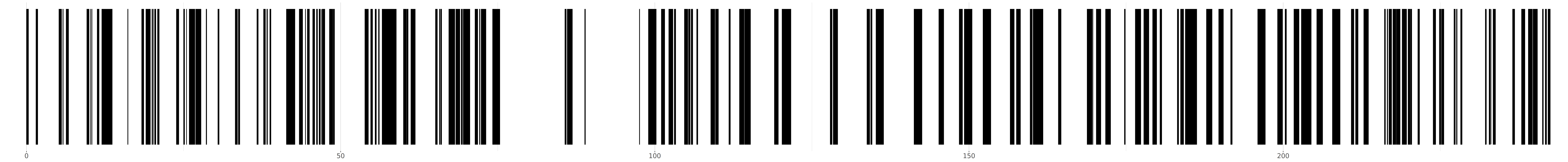}  \\
2 &  \includegraphics[height=0.1\textwidth, keepaspectratio]{figures/UKB/barcodes/height_eblipr_chr_single_line2_gamma_0.0125.png}  \\
3 &  \includegraphics[height=0.1\textwidth, keepaspectratio] {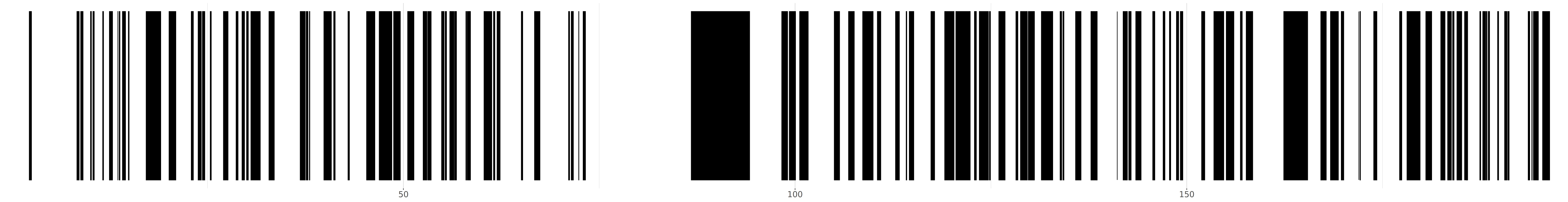}  \\
4 &  \includegraphics[height=0.1\textwidth, keepaspectratio]{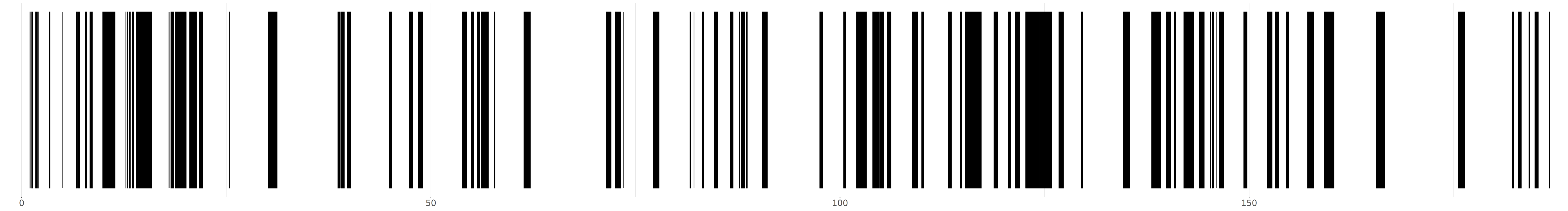}  \\
5 &  \includegraphics[height=0.1\textwidth, keepaspectratio]{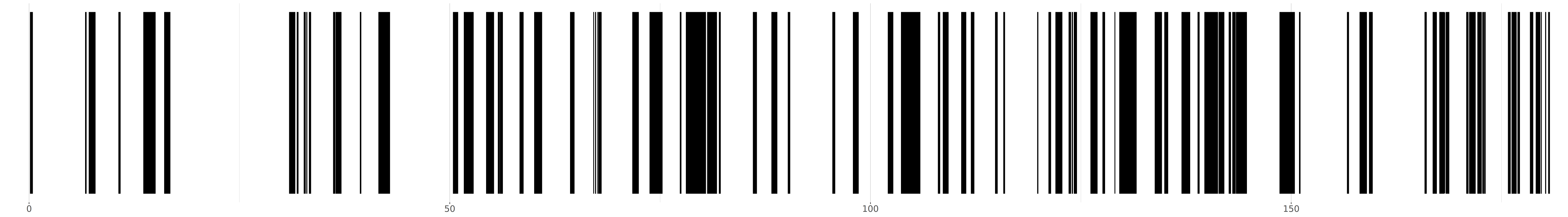}  \\
6 &  \includegraphics[height=0.1\textwidth, keepaspectratio]{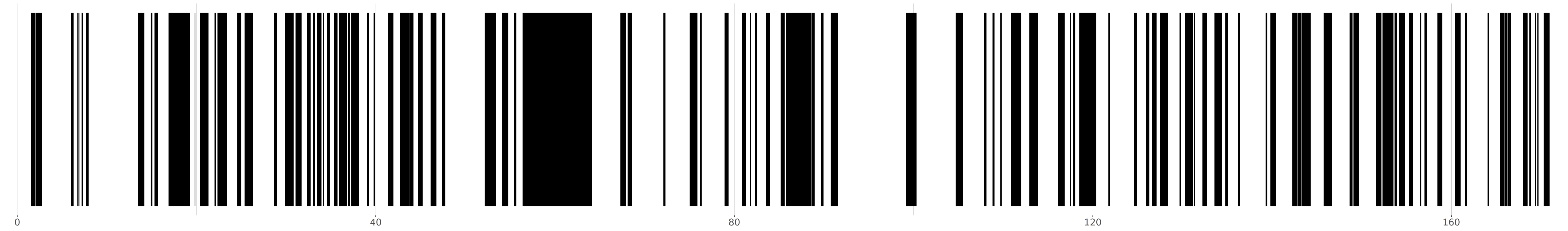}  \\
7 &  \includegraphics[height=0.1\textwidth, keepaspectratio]{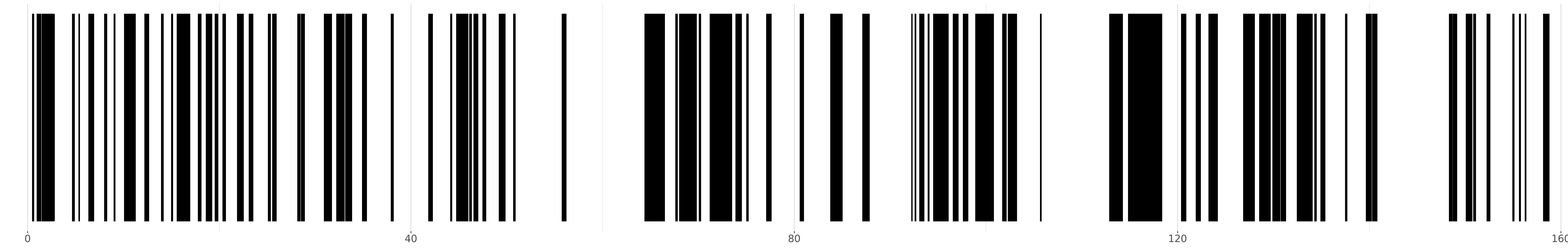}  \\
8 &  \includegraphics[height=0.1\textwidth, keepaspectratio]{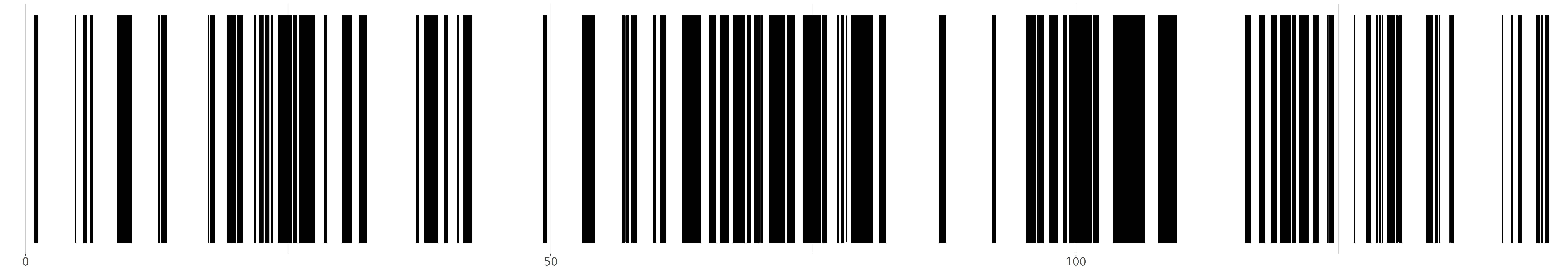}  \\
9 &  \includegraphics[height=0.1\textwidth, keepaspectratio]{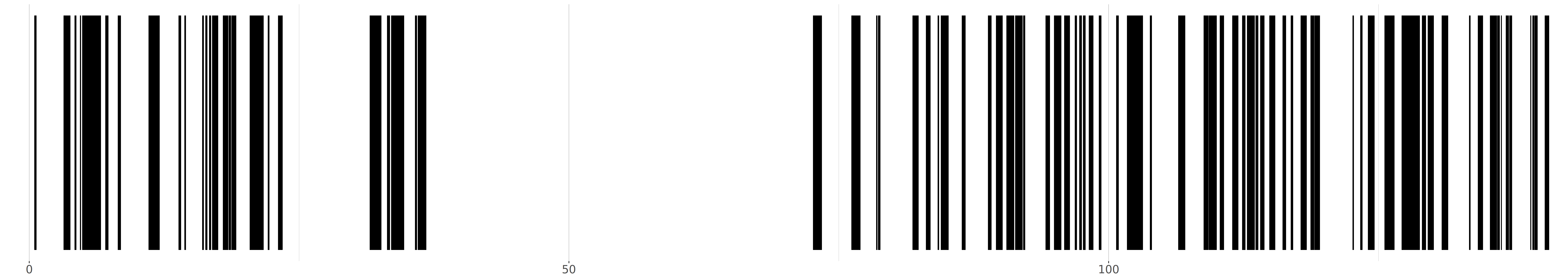}  \\
10 &  \includegraphics[height=0.1\textwidth, keepaspectratio]{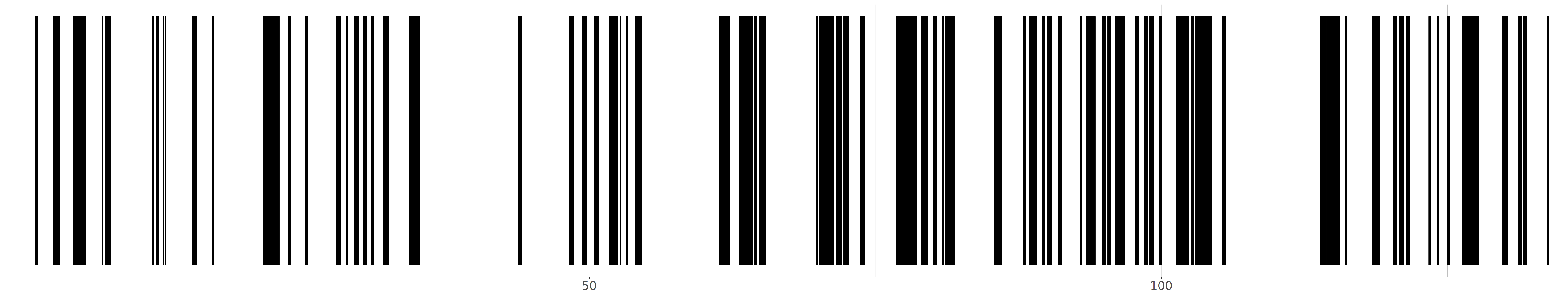}  \\
11 &  \includegraphics[height=0.1\textwidth, keepaspectratio]{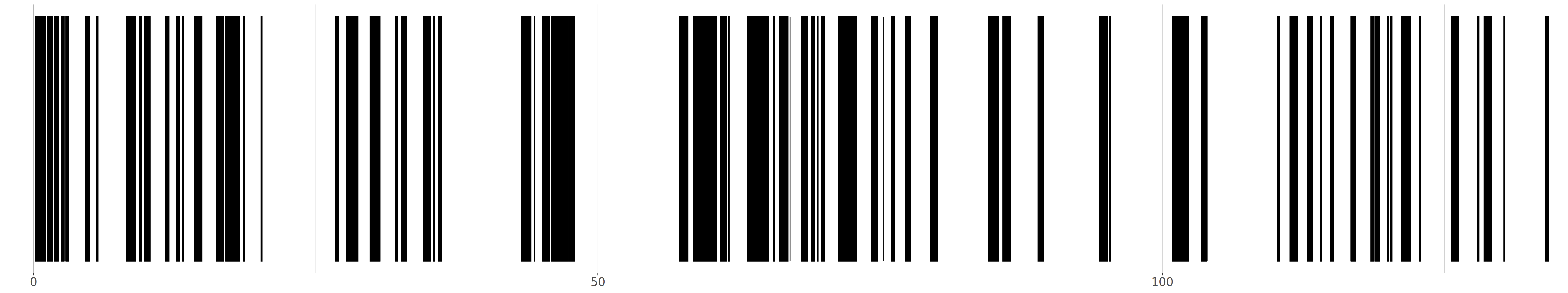} 
\end{tabular}
\caption{Groups rejected by KeLP in each chromosome (1-11): height}
\end{figure}

\newpage

\begin{figure}[H]
\begin{tabular}{ >{\centering\arraybackslash} m{1cm} m{10cm} }

12 &  \includegraphics[height=0.1\textwidth, keepaspectratio]{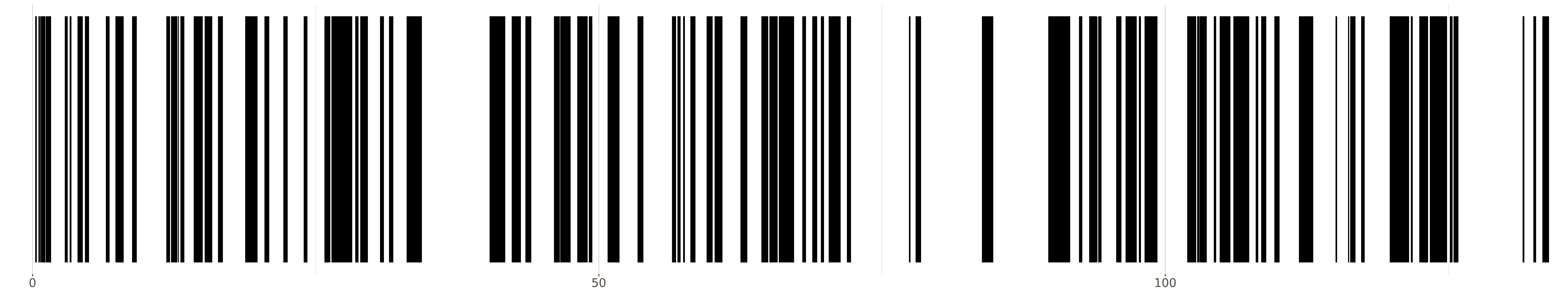}  \\
13 &  \includegraphics[height=0.1\textwidth, keepaspectratio] {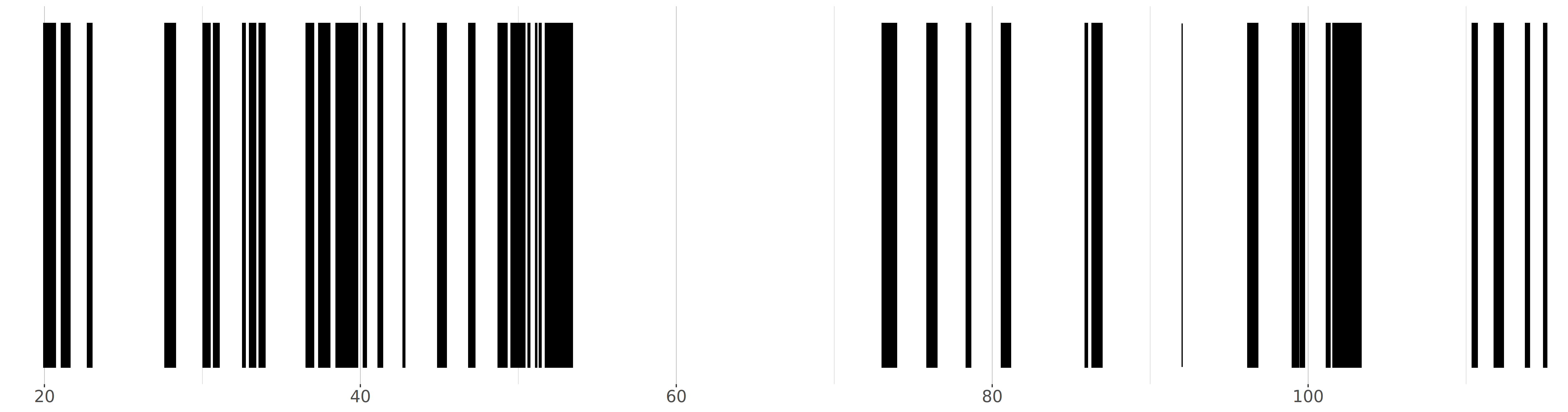}  \\
14 &  \includegraphics[height=0.1\textwidth, keepaspectratio]{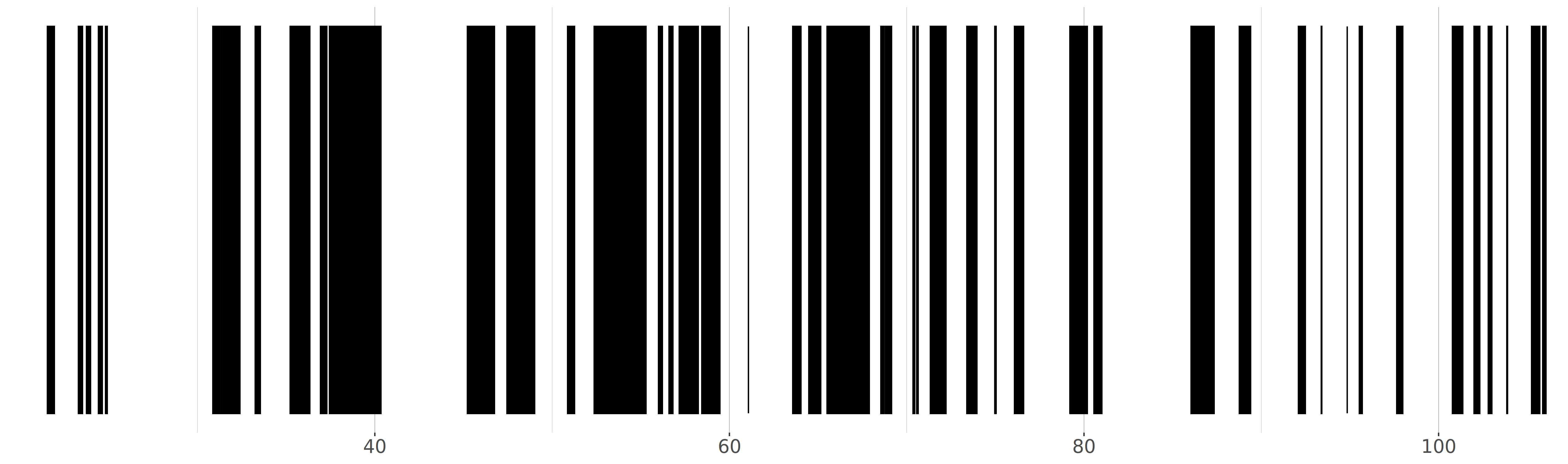}  \\
15 &  \includegraphics[height=0.1\textwidth, keepaspectratio]{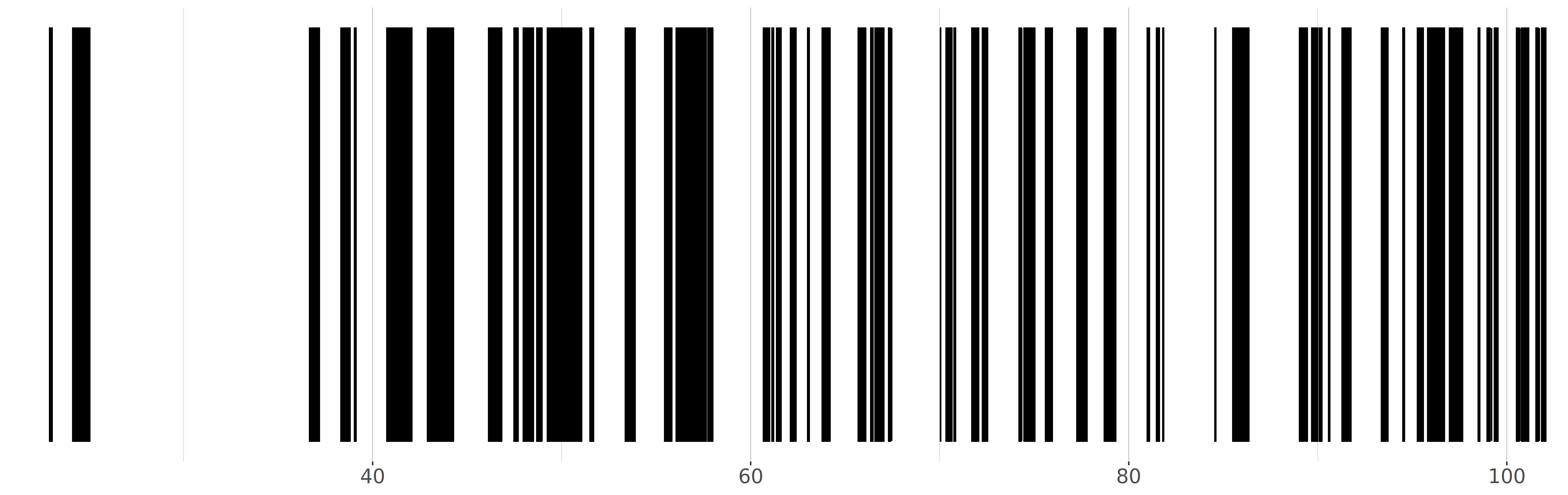}  \\
16 &  \includegraphics[height=0.1\textwidth, keepaspectratio]{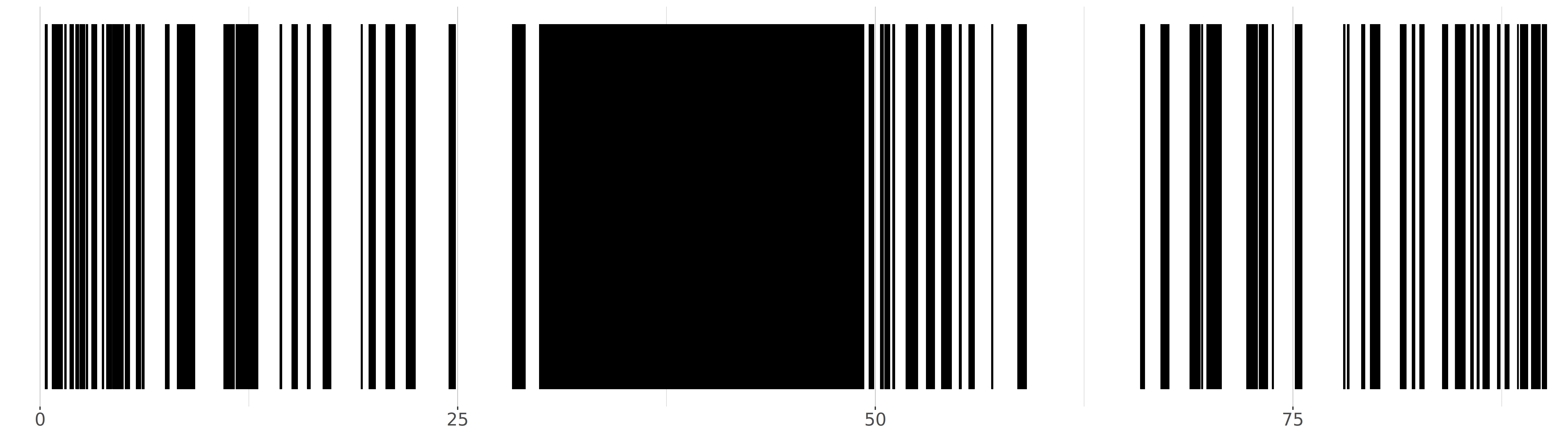}  \\
17 &  \includegraphics[height=0.1\textwidth, keepaspectratio]{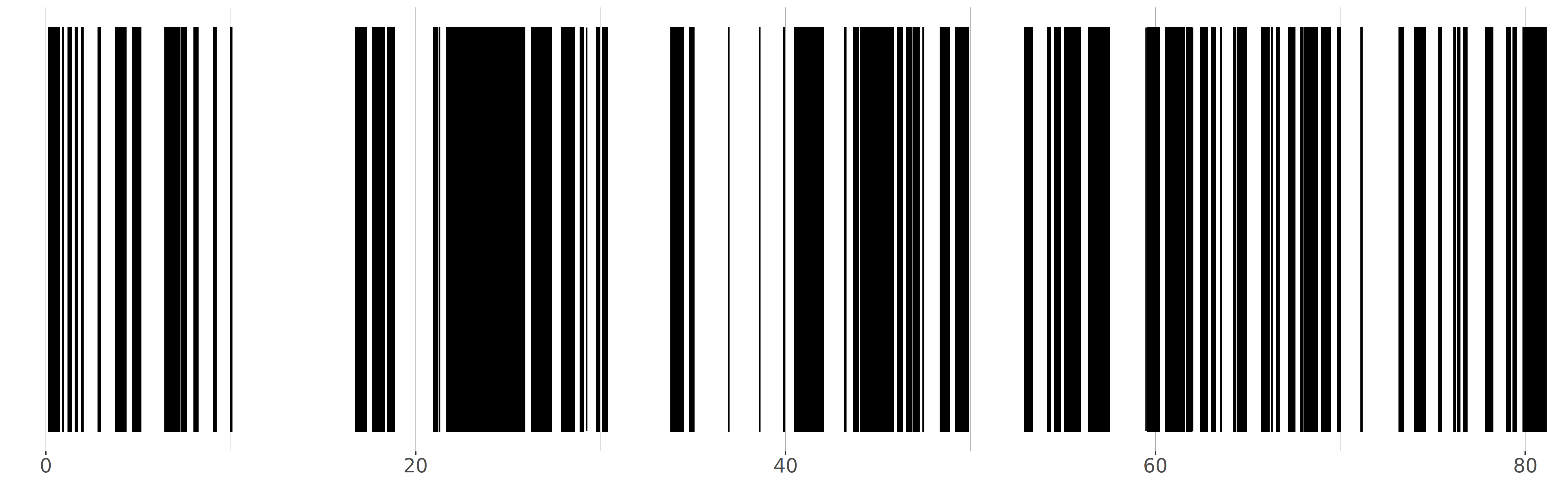}  \\
18 &  \includegraphics[height=0.1\textwidth, keepaspectratio]{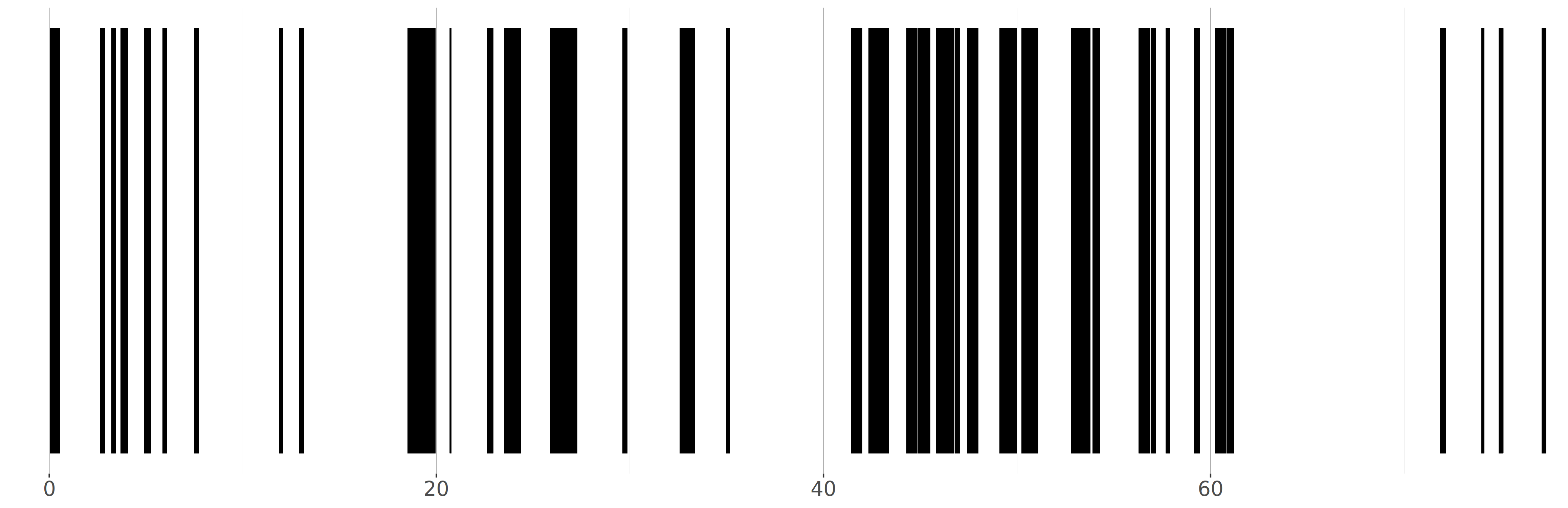}  \\
19 &  \includegraphics[height=0.1\textwidth, keepaspectratio]{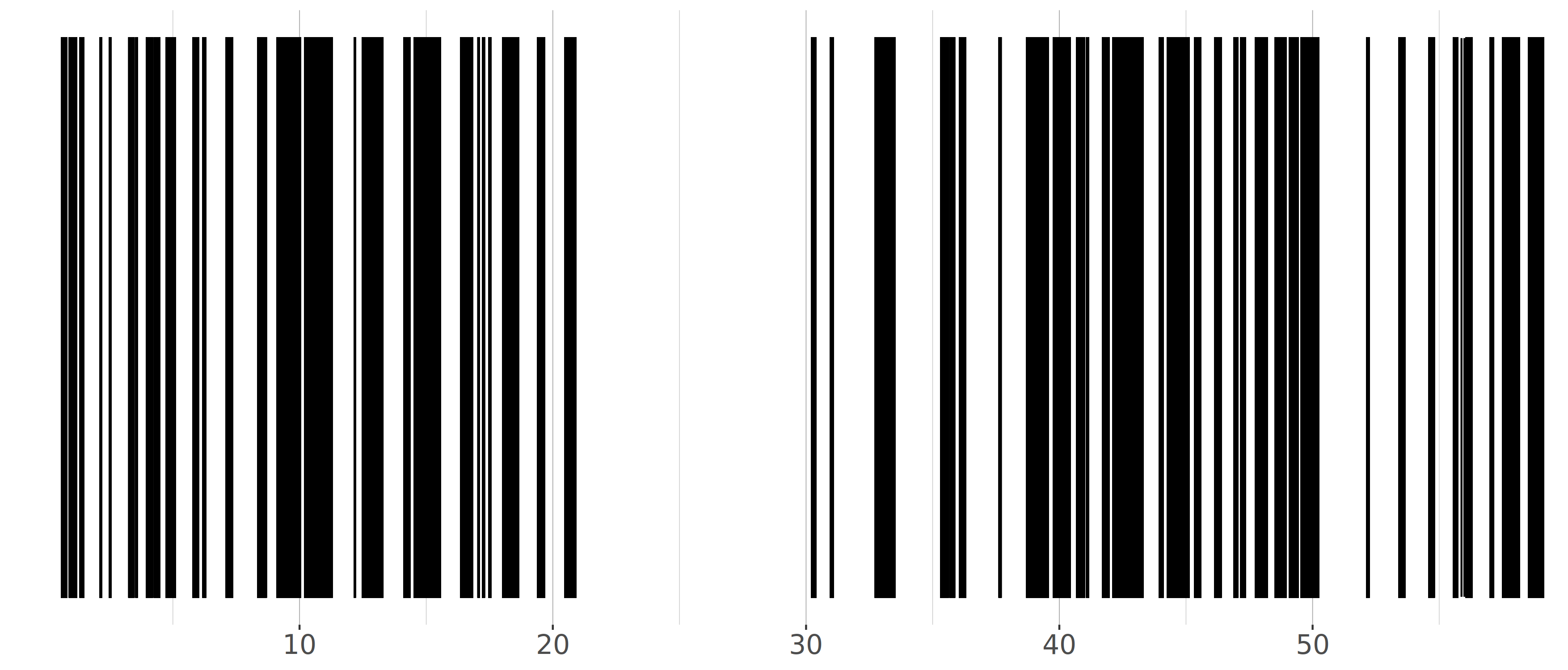}  \\
20 &  \includegraphics[height=0.1\textwidth, keepaspectratio]{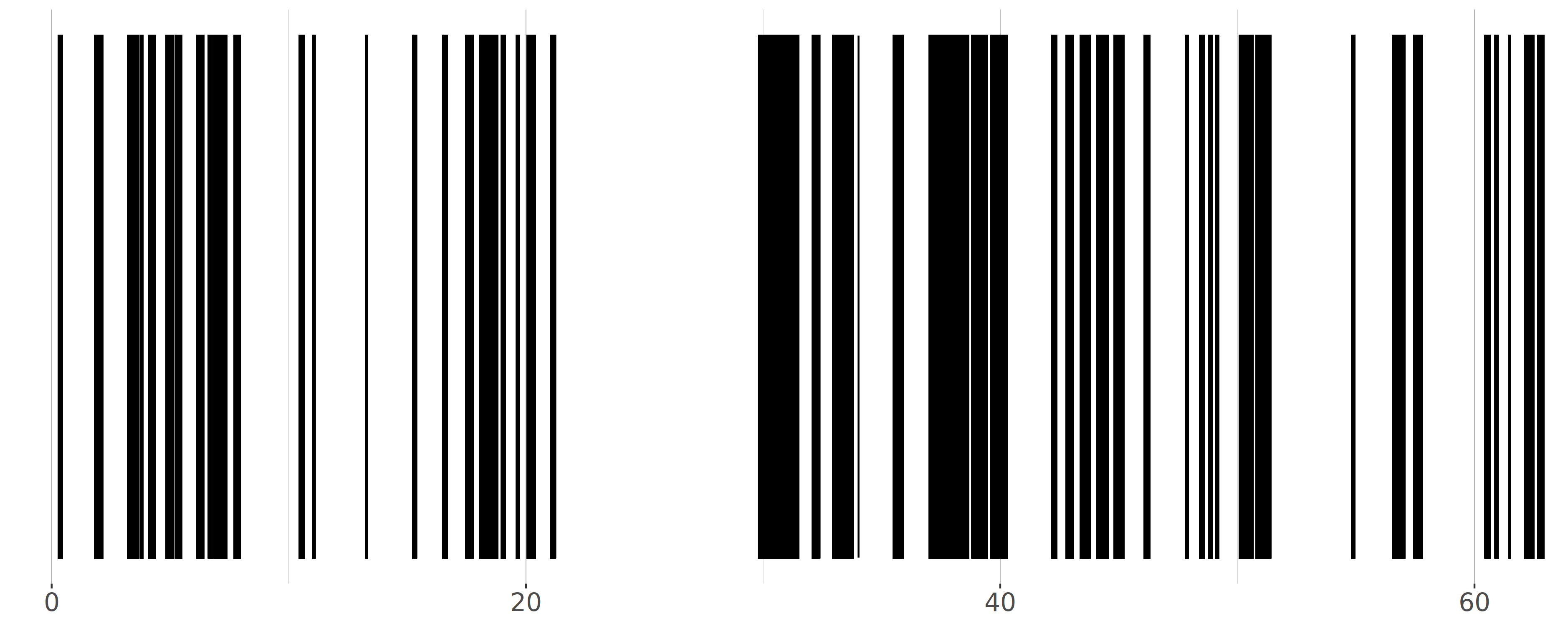}  \\
21 & \includegraphics[height=0.1\textwidth, keepaspectratio]{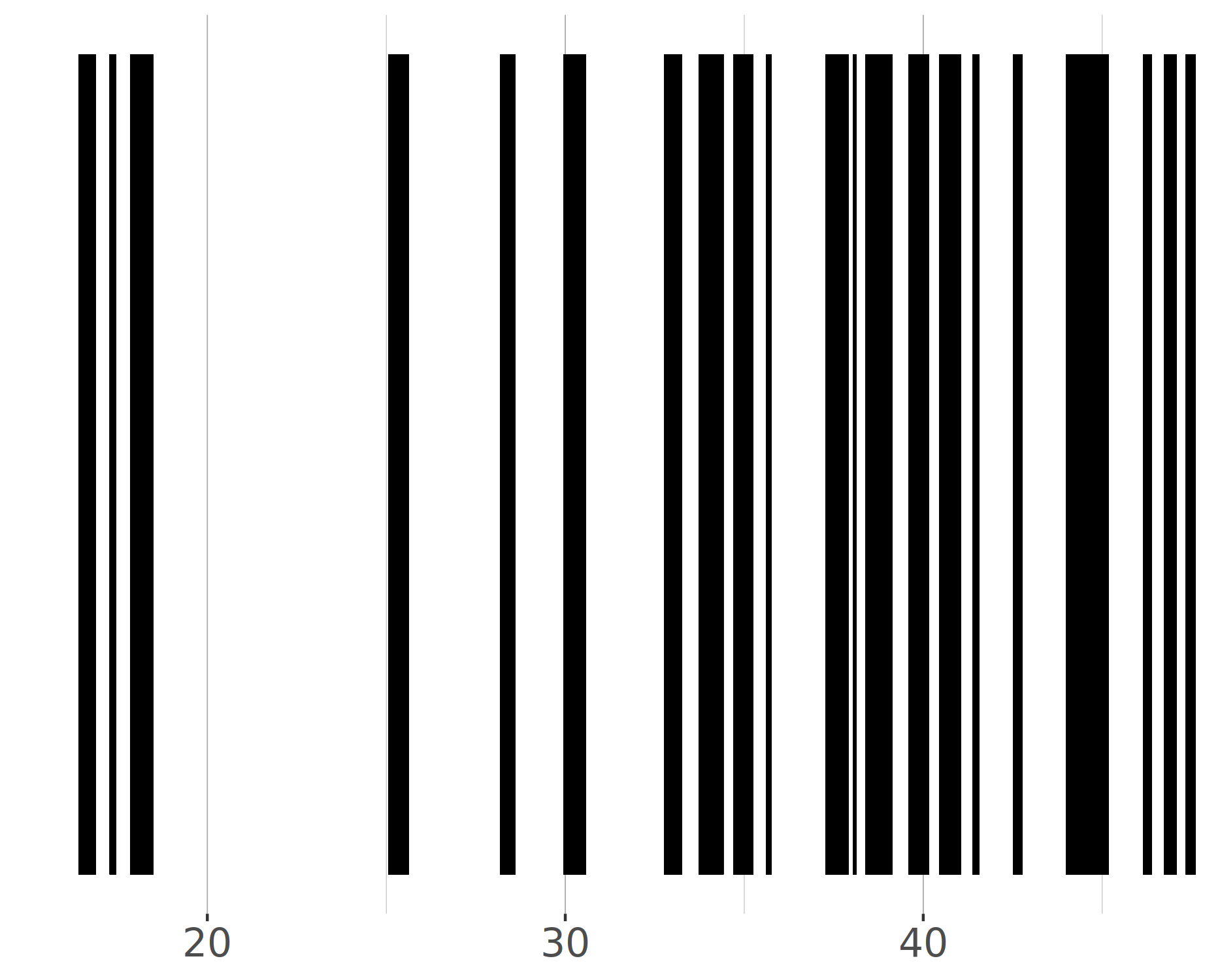}  \\
22 &  \includegraphics[height=0.1\textwidth, keepaspectratio]{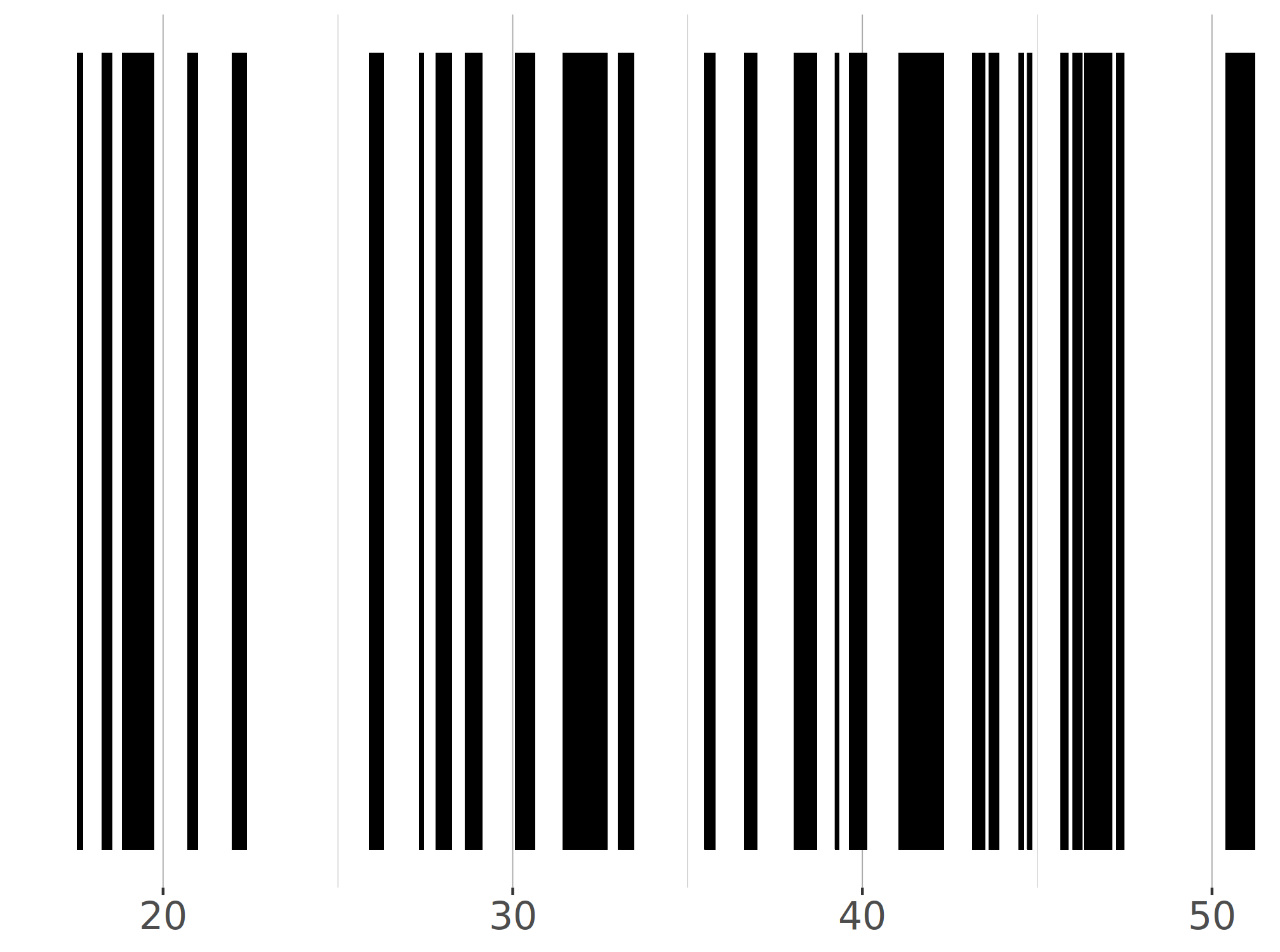}  

\end{tabular}
\caption{Groups rejected by KeLP in each chromosome (12-22): height}
\end{figure}

\newpage

\subsection{Defining e-value parameters for UK Biobank application}

As discussed in section \ref{sec:kelp}, we choose $c_m = \frac{|\mathcal{A}|}{|\mathcal{M}|}$ for each $m \in \mathcal{M}$. Since we are interested in KeLP on the unrelated British population, we use the unrelated White Non-British population as a separate tuning dataset to choose $\gamma$. The definition of populations follows from \citet{sesia2021populationstructureshapeit}. Note that the White Non-British population has a much smaller number of observations compared to the British population, which is our population of interest (14,733 observations compared to 337,117 observations (i.e. less than $5$\% of the British sample size), indicating that a large tuning dataset might not be required.

Figure \ref{fig:smallest_finite_gamma_wnb} displays the smallest $\gamma$ resulting in a finite stopping time $T_m$ on the White Non-British population for our outcomes of interest. These curves for the smallest $\gamma$ displays two ``dips'' at the single-SNP and the 3 kb level. From group sizes starting at 20 kb the smallest $\gamma$ does not exhibit much variation. We therefore choose $\gamma$ to be resolution-dependant. As discussed in section \ref{sec:kelp}, we deterministically set $\gamma = \alpha$ for the single-SNP level and $\gamma = \alpha/2$ on the 3 kb level. For the 20 kb level and up we tune $\gamma$ on the White Non-British population. For simplicity, we will use the same $\gamma$ for each level of resolution starting at 20 kb. For computational efficiency, we run e-BH on a set of pre-filtered groups, where the pre-filtering steps into the direction of KeLP: let a block denote a particular group in the coarsest level of resolution. Within each block, we filter to the group with the largest weight-adjusted fraction (where the weight is based on inverse group-size). We then run e-BH on this pre-filtered set of groups and choose the $\gamma$ which results in the largest number of non-zero rejections across different $\alpha$. If there are ties, we choose the larger $\gamma$.

\begin{figure}[H]
\centering
  \includegraphics[width=0.55\textwidth]{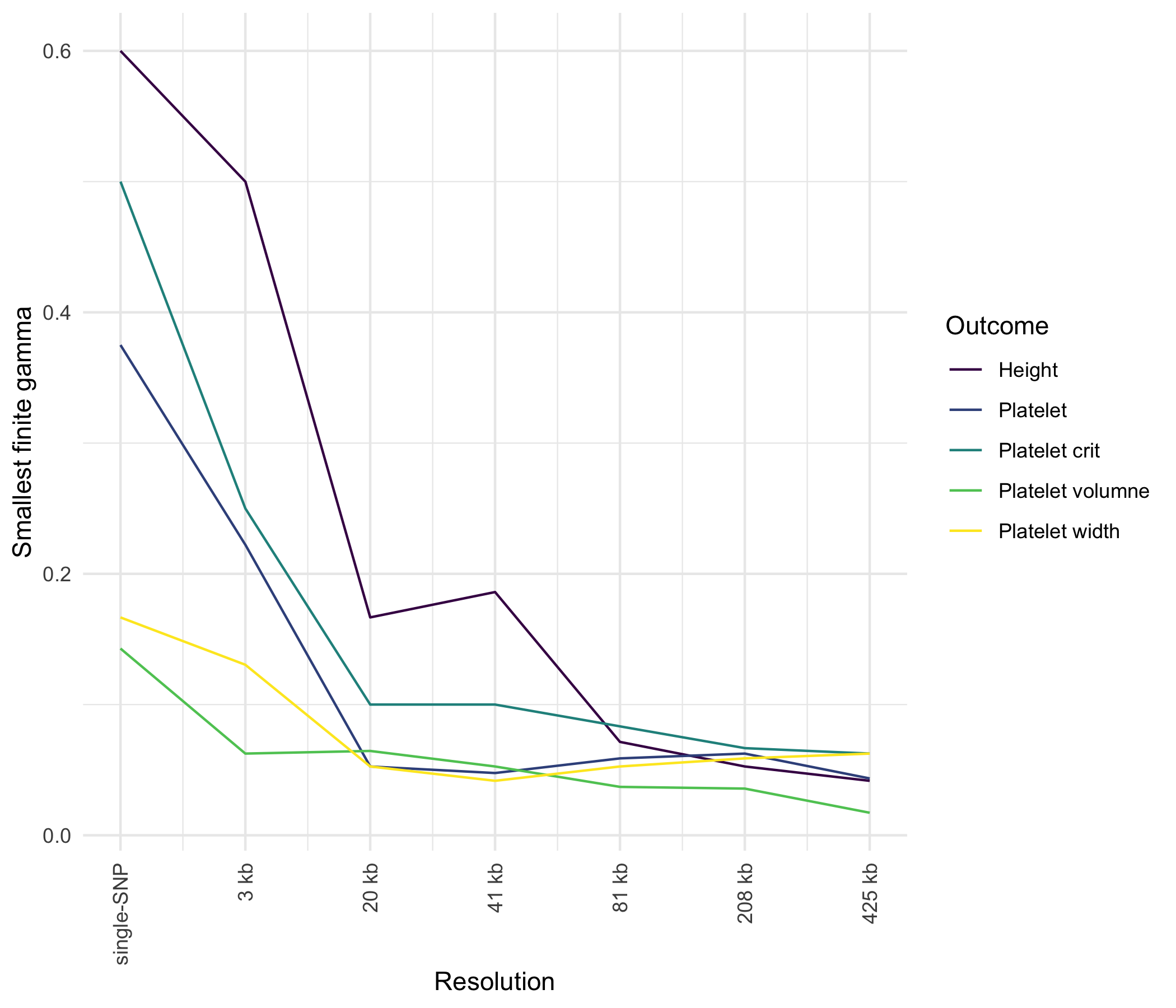}
\caption{Smallest finite $\gamma$ for White Non-British population by resolution.}
\label{fig:smallest_finite_gamma_wnb}
\end{figure}

%%%%%%%%%%%%%%%%%%%%%%%%%%%%%%%

\end{appendices}

\end{document}